\newcommand{\tm}{t}
\newcommand{\tmtwo}{u}
\newcommand{\tmthree}{r}
\newcommand{\var}{x}
\newcommand{\vartwo}{y}
\newcommand{\varthree}{z}
\newcommand{\la}[2]{\l#1.#2}
\newcommand{\val}{v}
\newcommand{\valtwo}{w}
\newcommand{\ctxholep}[1]{[#1]}
\newcommand{\ctxhole}{\ctxholep{\cdot}}
\newcommand{\evctx}{E}
\newcommand{\Evctxs}{\mathbb{E}}
\newcommand{\nbvctxtwo}[1]{\nbvctxtwo{#1}}
\newcommand{\defeq}{:=}
\newcommand{\grameq}{::=}
\newcommand{\isub}[2]{\{#1/#2\}}
\newcommand{\grammarpipe}{\mathrel{\big |}}
\renewcommand{\l}{\lambda}
\newcommand{\ie}{\textit{i.e.}\xspace}
\newcommand{\eg}{\textit{e.g.}\xspace}
\newcommand{\ih}{\textit{i.h.}\xspace}
\newenvironment{varitemize}
{
\begin{list}{\labelitemi}
{\setlength{\itemsep}{0pt}
 \setlength{\topsep}{0pt}
 \setlength{\parsep}{0pt}
 \setlength{\partopsep}{0pt}
 \setlength{\leftmargin}{15pt}
 \setlength{\rightmargin}{0pt}
 \setlength{\itemindent}{0pt}
 \setlength{\labelsep}{5pt}
 \setlength{\labelwidth}{10pt}
}}
{
 \end{list} 
}
\newcounter{numberone}
\newenvironment{varenumerate}
{
\begin{list}{\arabic{numberone}.}
{
  \usecounter{numberone}
  \setlength{\itemsep}{0pt}
  \setlength{\topsep}{0pt}
  \setlength{\parsep}{0pt}
  \setlength{\partopsep}{0pt}
  \setlength{\leftmargin}{15pt}
  \setlength{\rightmargin}{0pt}
  \setlength{\itemindent}{0pt}
  \setlength{\labelsep}{5pt}
  \setlength{\labelwidth}{15pt}
}}
{
\end{list} 
}
\newcounter{numbertwo}
\newcommand{\linty}{A}
\newcommand{\lintytwo}{B}
\newcommand{\arr}[2]{#1\rightarrow #2}
\newcommand{\tye}{\Gamma}
\newcommand{\tyetwo}{\Delta}
\newcommand{\tjudg}[3]{#1\vdash #2:#3}
\newcommand{\ltjudg}[3]{#1\kleisli\vdash #2:#3}
\newcommand{\tyd}{\pi}
\newcommand{\tydtwo}{\tyd'}
\newcommand{\tydthree}{\tyd''}
\newcommand{\pof}{\;\triangleright\,}
\renewcommand{\int}[1]{\{#1\}}
\newcommand{\intty}{I}
\newcommand{\inttytwo}{J}
\newcommand{\monty}{M}
\newcommand{\montytwo}{N}
\newcommand{\mfunc}{T}
\newcommand{\op}[1]{\mathtt{op}(#1)}
\newcommand{\opfun}{\mathtt{op}}
\newcommand{\unit}{\eta}
\newcommand{\multiplication}{\mu}
\newcommand{\Tys}{\mathbb{\linty}}
\newcommand{\IntTys}{\mathbb{\intty}}
\newcommand{\MonTys}{\mathbb{\monty}}
\newcommand{\Set}{\mathbf{Set}}
\newcommand{\rel}{\mathbf{Rel}}
\newcommand{\Comps}{\mathbb{C}}
\newcommand{\Vals}{\mathbb{V}}
\newcommand{\mapstoo}{\rightarrowtail}
\newcommand{\mapstob}{\rightarrowtail_\beta}
\newcommand{\mapstoeta}{\rightarrowtail_\eta}
\newcommand{\mapstoe}{\rightarrowtail_{\mathtt{e}}}
\newcommand{\gef}[1]{g(#1)}
\newcommand{\gefop}[1]{g_\mathtt{op}(#1)}
\newcommand{\gefopfun}{g_\mathtt{op}}
\newcommand{\rax}{\textsc{var}}
\newcommand{\rabs}{\textsc{abs}}
\newcommand{\rint}{\textsc{int}}
\newcommand{\runit}{\textsc{unit}}
\newcommand{\rextg}{\textsc{ext-g}}
\newcommand{\rextunit}{\textsc{ext-unit}}
\newcommand{\rbot}{\textsc{bot}}
\newcommand{\rop}{\textsc{op}}
\newcommand{\rapp}{\textsc{app}}
\newcommand{\obs}[1]{\mathsf{obs}(#1)}
\newcommand{\obsfun}{\mathsf{obs}}
\newcommand{\obsn}[2]{\mathsf{obs}^{#2}(#1)}
\newcommand{\supp}[1]{\mathsf{supp}(#1)}
\newcommand{\bindsymbol}{\scalebox{0.5}[1]{$>\!>=$}}
\newcommand{\bind}{\mathrel{\bindsymbol}}
\newcommand{\etm}{e}
\newcommand{\etmtwo}{\etm'}
\newcommand{\etmthree}{\etm''}
\newcommand{\out}[2]{\mathtt{out}_#1(#2)}
\newcommand{\closed}[1]{#1^{\bullet}}
\newcommand{\trel}{{\vdash}}
\newcommand{\trelone}{{\vdash}}
\newcommand{\tosem}{{\mapsto}}
\newcommand{\monad}{T}
\newcommand{\kleislirel}[1]{#1^{\dagger}}
\newcommand{\kleisli}[1]{#1^{\dagger}}
\def\tobar{\mathrel{\mkern3mu  \vcenter{\hbox{$\scriptscriptstyle+$}}%
            \mkern-12mu{\to}}}
\newcommand{\torel}{\tobar}
\newcommand{\relator}{\Phi}
\newcommand{\relone}{R}
\newcommand{\reltwo}{S}
\newcommand{\mcomp}{\fatsemi}
\newcommand{\sem}[1]{\llbracket #1 \rrbracket}
\newcommand{\GTys}{\mathbb{G}}
\newcommand{\gty}{G}
\newcommand{\emi}{\mathbf{0}}
\newcommand{\dom}[1]{\mathsf{dom}(#1)}
\newcommand{\ems}{\varepsilon}
\newcommand{\id}{\mathtt{id}}
\newcommand{\bext}[1]{\widehat{#1}}
\newcommand{\bextt}[1]{\widehat{\monad}{(#1)}}
\newcommand{\trelsem}{\models}
\newcommand{\idrel}{\bm{1}}
\newcommand{\cpoleq}{\sqsubseteq}
\newcommand{\cpogeq}{\sqsupseteq}
\newcommand{\lub}{\bigsqcup}
\newcommand{\bexttinf}[1]{\widehat{T}_{\cpogeq}(#1)}
\newcommand{\sjudg}[2]{\trelsem#1:#2}
\newcommand{\lsjudg}[2]{\hat{\trelsem}\,#1:#2}
\newcommand{\dcppo}{dcppo}
\newcommand{\join}{\bigsqcup}
\newcommand{\metalambda}{%
\mathop{%
\rlap{$\lambda$}%
\mkern2mu
\raisebox{.275ex}{$\lambda$}%
}%
}
\newcommand{\ccode}[1]{\mathtt{#1}}
\newcommand{\powerset}{\mathcal{P}}
\newcommand{\nepowerset}{\powerset^{+}}
\newcommand{\FPowerset}{\mathbb{P}_{\mathit{f}}}
\newcommand{\fpowerset}{\mathcal{P}_{\mathit{f}}}
\newcommand{\signature}{\Sigma}
\newcommand{\opone}{\sigma}
\newcommand{\optwo}{\rho}
\newcommand{\writer}{\mathcal{W}}
\newcommand{\maybe}{\mathcal{M}}
\newcommand{\tree}[1]{T_{#1}}
\newcommand{\eq}{\sim}
\newcommand{\dual}[1]{#1^{\circ}}
\newcommand{\hh}{\hdots}
\newcommand{\tone}{t}
\begin{document}

\title{Monadic Intersection Types, Relationally}

\author{Francesco Gavazzo\inst{1}
\and
Riccardo Treglia\inst{2}
\and Gabriele Vanoni\inst{3}
}

\institute{Università di Padova \email{francesco.gavazzo@unipd.it}\and 
 King's College London, London, UK \email{riccardo.treglia@kcl.ac.uk} \and
IRIF, CNRS, Université Paris Cité \email{gabriele.vanoni@irif.fr}}

\maketitle

\begin{abstract}
We extend intersection types to a computational $\lambda$-calculus with algebraic operations \emph{à la} Plotkin and Power. We achieve this by considering monadic intersections---whereby computational effects appear not only in the operational semantics, 
but also in the \emph{type system}. Since in the effectful setting termination is not anymore the only property of interest, we want to analyze the interactive behavior of typed programs with the environment. Indeed, our type system is able to characterize the natural notion of observation, both in the finite and in the infinitary setting, and for a wide class of effects, such as output, cost, pure and probabilistic nondeterminism, and combinations thereof. The main technical tool is a novel combination of syntactic techniques with abstract relational reasoning, 
which allows us to lift all the required notions, e.g. of typability and logical relation, to the monadic setting.
\end{abstract}

\section{Introduction}
Type systems are a key aspect of programming languages, ensuring good behavior during the execution of programs, such as absence of errors, termination, or properties such as productivity,  safety, and reachability. Additionally, they ensure it in a \emph{compositional} way, that is, if programs are assembled according to the underlying type discipline, then the good behavior is ensured also for the composed program.

\subsubsection*{Intersection Types.} Type systems have solid roots in logic and proof theory, as witnessed by the Curry-Howard correspondence between simple types and intuitionistic natural deduction. However, in the theory of the $\l$-calculus, there is another use of types that has been studied at length: \emph{intersection types}. They were introduced by Coppo and Dezani{-}Ciancaglini in the late 70's \cite{DBLP:journals/aml/CoppoD78} to overcome the limitations of Curry's type discipline and enlarge the class of terms that can be typed. This is reached by means of a new type constructor, the \textit{intersection}. In this way, one can assign a finite set of types to a term, thus providing a form of finite polymorphism.

Similarly to simple types, intersection types ensure termination. In contrast to most notions of types, however, they also \emph{characterize} termination, that is, they type \emph{all} terminating $\l$-terms. They can be seen as a compositional way of defining operational semantics, or, in a dual way, as a syntactic presentation of denotational models. Additionally, intersection types have shown to be remarkably flexible, since different termination forms can be characterized by tuning details of the type system (\eg, weak/strong full normalization, head/weak/call-by-value evaluation). Termination being only \emph{semi-decidable}, type inference cannot be decidable, which is why standard intersection types are somewhat incompatible with programming practice, although some restricted forms of intersection types have found applications in programming, see for example \cite{DBLP:conf/pldi/FreemanP91,DBLP:journals/mscs/Pierce97,DBLP:journals/jacm/FrischCB08,DBLP:conf/birthday/Dezani-Ciancaglini18}, or the recent survey by Bono and Dezani{-}Ciancaglini~\cite{DBLP:conf/lics/BonoD20}.


\subsubsection*{Beyond the Pure $\l$-Calculus.} 




Intersection types have been mostly developed in the realm of the pure $\l$-calculus. 
However, current programming languages are deeply effectful, exhibiting several simultaneous impure behaviours, such as raising exceptions, performing input/output operations, sampling from distributions, etc. Reasoning about effectful programs becomes a challenging goal since their behaviour becomes highly interactive, depending on the external environment. Type-based techniques seem very interesting in this respect, since they 
enable \emph{modular} and \emph{compositional} analysis of program behaviour. In particular, intersection type systems have already been successfully adapted to some concrete computational effects, such as probabilistic~\cite{DBLP:conf/ppdp/BreuvartL18,DBLP:journals/pacmpl/LagoFR21}, and pure nondeterminism~\cite{DBLP:conf/fossacs/Tsukada014}.
In spite of the remarkable results achieved by each of these formalisms -- 
for instance, probabilistic intersection types have been proved to 
characterize almost-sure termination -- all of these come with a major drawback: 
they are tailored to the specific family of effects considered. 
This results in a lack of robustness and modularity when it comes 
to extending languages with new effects. For instance, probabilistic intersection 
types as they are, cannot cope with, \eg, languages with \emph{both} randomness and 
output. 
This problem can be fixed (output is a well-behaved effect that nicely interacts with 
probabilistic nondeterminism) but in highly non-modular way. 
One has, in fact, to re-engineer the whole theoretical framework  behind probabilistic 
intersection types to account for output, too.


Now, the leading question is: \emph{can intersection types be scaled up in the case of effectful $\lambda$-calculi, in a modular way?}
In this paper, we answer this question in the affirmative by developing a 
general \emph{monadic intersection type system} for an untyped 
computational $\lambda$-calculus~ \cite{Moggi'89,Moggi'91} with algebraic operations 
\emph{\`a la} Plotkin and Power \cite{PlotkinP02}. In fact, our system covers both finitary and infinitary effectful operational behaviours in a sound and complete way, and generalises 
existing effectful intersection type systems, such as 
probabilistic intersection types. To achieve this result, we combine state-of-the-art techniques in 
monadic semantics, intersection types, and relational reasoning, in a novel and nontrivial way.

\subsubsection*{Monadic Semantics.}
As we have already mentioned, our work is grounded on the theoretical framework pioneered by Moggi \cite{Moggi'89,Moggi'91}, which describes computational effects via monads~\cite{MacLane/Book/1971}.
Moggi's work described how monads could give denotational semantics to effectful programs,
but did not tell anything
about how computational effects are actually produced. Plotkin and Power~\cite{PlotkinP02,PP03} introduced {\em algebraic effects} as a way of giving monadic semantics, in the style of Moggi, to certain operations which actually produce computational side effects.
The core syntax of effectful languages can be thus given in terms of computational 
calculi with effect-triggering operations. But what about the operational semantics and, most importantly, 
the (intersection) type system? 

There is a well-known way to give effectful 
operational semantics to programming languages, namely
making operational semantics effectful itself. That is, if we model 
the operational semantics of a language using a transition relation between terms, 
then we can make such a relation monadic  
by relating terms with \emph{monadic terms},
i.e. terms encapsulated by a monad, encoding the effects produced 
during the computation. 
Such relations---i.e. relations of the form $\relone \subseteq A \times T(B)$, 
with $T$ a monad---are known as \emph{monadic} or \emph{Kleisli relations} 
and have been successfully used to give operational semantics to 
monadic calculi~\cite{DBLP:conf/lics/GavazzoF21}.

What about the type system? In its bare essence, it is 
given by a relation between terms and types, hence leading to a situation 
similar to the one of operational semantics. The analogy is no coincidence: as 
we obtain monadic operational semantics relying on the theory 
of monadic relations, the very same theory allows us to 
define monadic type systems: a monadic typing relation associates 
terms with \emph{monadic types}. 
This idea has already been exploited by Dal Lago and collaborators~\cite{DBLP:conf/ppdp/BreuvartL18,DBLP:journals/pacmpl/LagoFR21},
who realised that to extend intersection types to probabilistic languages 
one has to type expressions not with types, but with 
\emph{distribution of types}.\footnote{Actually, as we will show in Section~\ref{sec:mon-types}, distributions do not behave well in this case. This is why in \cite{DBLP:conf/ppdp/BreuvartL18} convex sets of distributions are used. Multidistributions are instead used in \cite{DBLP:journals/pacmpl/LagoFR21}. } This is nothing but a monadic typing relation 
instantiated to the distribution monad.

\subsubsection*{Relational Reasoning.} Working at the abstract level of monadic relations gives several advantages, 
both in terms of modularity and expressiveness. 
Concerning the former, we shall develop abstract proof techniques 
that allow us to give proofs of subject reduction and expansion 
independently of the underlying monad. Such factorization results rely 
on the theory of (monadic) relational extensions.  Here, one studies 
how to extend a monadic relation $\relone \subseteq A \times T(B)$, such as the one modeling one-step operational semantics, or typing between terms and monadic types, 
to a relation $\kleisli{\relone} \subseteq T(A) \times T(B)$, necessary to model operational semantics, or typing, of \emph{monadic} terms. 
Such extensions, even if canonical, do not exist in general, 
and a celebrated result by Barr~\cite{Barr/LMM/1970} gives necessary and sufficient 
conditions for the existence of relational extensions: monads must be 
\emph{weakly cartesian} \cite{wc-1,wc-2}. Intuitively, being weakly cartesian means that during the evaluation there is no loss of information about the performed effects. This is a form of reversibility, that is needed, \eg, in subject expansion, 

This restriction rules out from our analysis monads such as the powerset or the distribution monad. 
Although that may appear as a weakness of our framework, it actually exploits a 
nontrivial limitative result that has already been observed in different forms in the 
literature: results involving forms of reversibility, such as 
subject expansion, are simply not available when monads are not weakly cartesian. 
This is the deep reason why probabilistic intersection types are defined via
 the multi-distribution, rather than distribution, monad. 
Remarkably, the same kind of limitative result has also been proved in the setting 
of monadic operational semantics and rewriting~\cite{DBLP:conf/lics/GavazzoF21}.

\subsubsection*{Contributions.} In this paper, we introduce the first (to the best of our knowledge) intersection type system handling the computational $\l$-calculus with algebraic operations. This is done by letting not only terms, but also intersection types be monadic. The main idea of intersection types is that they are a static way to describe the mechanism of evaluation of programs. Since the operational semantics of effectful languages can be conveniently described by evaluation inside monads, it is natural to embed also intersection types inside them. This way, we are able to push forward the correspondence between intersection type derivations and term evaluation to the effectful/monadic setting. More precisely, we develop several contributions and theoretical advances:
\begin{varitemize}
    \item \emph{The Type System:} We provide the first idempotent\footnote{The choice of the idempotent variant of intersection types should not be taken too strictly. All the results of the paper hold also turning intersections into multisets, \emph{mutatis mutandis}. Moreover, the meta-theory in the idempotent case is more involved (requiring logical relations to prove soundness), and we show this way the strength of our approach. Still, it is not an exercise of style, because intersection type systems used to formalize higher-order model checking algorithms (see Sec.~\ref{sec:conclusion} for a more detailed discussion) \emph{must} be idempotent, since otherwise one would lose decidability.} intersection type system for a $\l$-calculus with algebraic operations which is parametric in the underlying monad. We design the type system in such a way that not only terms, but also types become monadic.
    \item \emph{Characterization of Observable Behavior:} Differently from the pure untyped setting, in which intersection types characterize (different forms of) termination of programs, in the effectful setting we would like to characterize via the type system \emph{all} the effects produced during the evaluation. Indeed, we obtain such a result by generalizing standard soundness and completeness theorems, via abstract relational techniques. In particular, observable behaviors of typable (\ie all the terminating) terms can be read out of their types.
    \item \emph{Intrinsic Limits:} Our approach comes with the already described intrinsic limits about the class of well-behaving monads (the weakly cartesian ones). Moreover, if one sticks with the finitary case, where the natural notion of convergence is must termination, another restriction on the kinds of admissible operations is needed. Indeed, also operations that erase arguments break the subject expansion, and thus the completeness of the system. Still, this restriction can be removed considering an infinitary semantics.
    \item \emph{The Infinitary Case:} Some interesting notions of observation, such as the probability of convergence in probabilistic calculi, are naturally infinitary. For this reason, we extend our type system to capture infinitary behaviors. Interestingly, we need to add just one typing rule to the previous (finitary) system, namely the one that can type every term with the bottom of the monad. Naturally, this extension requires the monads to satisfy more conditions (mainly domain theoretic ones). Remarkably, this way we are actually able to relax the constraint on non-erasing operations introduced in the finitary system. 
 \end{varitemize}   

\subsubsection*{Related Work.} To the best of our knowledge, this is the first work about monadic intersection type systems for effectful calculi with algebraic operations, which are parametric on the underlying monad. On an orthogonal axis, an intersection type system for a variant over Moggi's computational $\l$-calculus, but without any reference to algebraic operations, has been proposed in~\cite{deLiguoroTreglia20}, while intersections types have developed for calculi with continuations in~\cite{DBLP:journals/lmcs/BakelBd18} and, paired with union types, in~\cite{DBLP:journals/lmcs/KesnerV19}. 
With concrete monads, instead, various proposals have appeared for the state monad~\cite{DR07,deLiguoroT21a,deLiguoroT21b,DBLP:conf/wollic/AlvesKR23}, and the distribution monad~\cite{EhrPagTas14,DBLP:conf/ppdp/BreuvartL18,DBLP:journals/pacmpl/LagoFR21}.
Moreover, lifting the monad to the type system has already been done in a series of works by Dal Lago and coauthors, \eg to analyze complexity~\cite{ADLG'19,DLG'19} and recently for the state monad in~\cite{DBLP:conf/wollic/AlvesKR23}.
More on the programming side, intersection types have been proposed for a
 $\lambda$-calculus with computational  side effects and reference types in~\cite{Davies-Pfenning'00}, but without any reference to monads.


\subsubsection*{Proofs.} Omitted proofs are in the Appendix.

\section{Intersection Types and the CbV $\lambda$-Calculus}\label{sec:cbv-types}
 We devote this section to a gentle introduction to intersection type systems. For the moment, we do not consider effectful calculi and we set our analysis in the (almost) standard setting of Plotkin's call-by-value (CbV) $\lambda$-calculus~\cite{DBLP:journals/tcs/Plotkin75}. Actually, the calculus we present is the kernel of CbV $\lambda$-calculus, that is as expressive as the Plotkin's~\cite{Acc15,FGdT23}, but allows only a restricted form of term application.
\subsubsection*{The (kernel) CbV $\lambda$-Calculus.} Given a countable set of variables $\mathcal{V}$, values and computations are defined by mutual induction as follows:
\begin{center}$\begin{array}{rlcl}
	\textsc{Computations} & \Comps\ni\tm,\tmtwo & \grameq & \val 
	\grammarpipe 
	\val\tm\\[3pt]
	\textsc{Values} & \Vals\ni\val,\valtwo & \grameq & \var\in\mathcal{V} \grammarpipe
	\la\var\tm\\[3pt]
		\textsc{Eval. Contexts}  &\Evctxs\ni\evctx & \grameq & \ctxhole 
	\grammarpipe \val\evctx 
\end{array}$\end{center}
\emph{Free} and \emph{bound variables} are defined as usual: $\la\var\tm$ binds $\var$ in $\tm$. Terms are considered modulo $\alpha$-equivalence. Capture-avoiding (meta-level) substitution of $\tmtwo$ for all the free occurrences of $\var$ in $\tm$ is written $\tm\isub\var\tmtwo$. As it is customary when working in the CbV setting, we restrict ourselves to \emph{closed} terms, \ie, we consider only terms without free variables. This means that the normal forms are all and only the closed values, noted $\closed\Vals$, \ie the $\lambda$-abstractions. 
The traditional $\beta$ rule is restricted to values, \ie only (closed) values can be substituted:
$
(\la\var\tm)\val\rightarrowtail\tm\isub\var\val
$.
The deterministic operational semantics $\mapsto$ is obtained by closing the $\beta$ rule (by value) $\rightarrowtail$ w.r.t. all evaluation contexts.
Please notice that although we restricted term application to have a value as the left subterm, we can recover the usual application 
between terms as $\tm\tmtwo \defeq 
(\la\var\var\tmtwo)\tm$, where $\var$ is a fresh variable.

\begin{figure}[t]
    \[
\begin{array}{c@{\hspace{0.8cm}}c}
	\infer[\rax]{\tjudg{\tye,\var:\intty}\var\linty}{\linty\in\intty} & 
	 \infer[\rint]{ \tjudg{\tye}{\val}{\int{\linty_i}_{i\in 
	F}} }{\left[ \tjudg{\tye}{\val}{\linty_i}\right]_{i\in F}}\\[7pt]
	
	 \infer[\rabs]{\tjudg{\tye}{\la\var\tm}{\arr\intty\inttytwo}} 
	 {\tjudg{\tye,\var:\intty}{\tm}{\inttytwo}}
	  & 
	
 	\infer[\rapp]{\tjudg{\tye}{\val\tm}{\inttytwo}}
 	{  \tjudg{\tye}{\val}{\arr{\intty}{\inttytwo}}
 	&  \tjudg{\tye}{\tm}{\intty}}
\end{array}
\]
\vspace{-8pt}
\caption{The intersection type system for closed call-by-value.}
\vspace{-8pt}
\label{fig:cbv-system}
\end{figure}

\subsubsection*{Intersection Types for the CbV $\lambda$-Calculus.}The CbV $\lambda$-Calculus is a universal, \ie Turing complete, model of computation. This way its halting problem is obviously undecidable. Nonetheless, terminating terms can be characterized by syntactic means. Intersection types are one way of doing this, in a compositional and logical way. The grammar of types is reminiscent of 
the call-by-value translation $(\cdot)^\mathtt{v}$ of intuitionistic logic into linear 
logic~\cite{girard_linear_1987,Ehrhard12} $(A\to B)^\mathtt{v}=\,!(A^\mathtt{v})\multimap\, !(B^\mathtt{v})$. Semantically, they can 
be seen as a syntactical presentation of filter models of the $\lambda$-calculus \cite{Coppo-et.al'84}. The grammar for types is based on \emph{two} layers of types, defined in a
mutually recursive way, \emph{value} types $\linty$, and \emph{intersections} (\ie sets) $\intty$ of value types.
\[\begin{array}{rr@{\hspace{0.1cm}}c@{\hspace{0.1cm}}lcl}
	\textsc{Value Types} & \Tys&\ni&\linty & \grameq & \arr{\intty}{\inttytwo} \\[3pt]
	\textsc{Intersections} & \IntTys&\ni&\intty,\inttytwo & \grameq & 
	\int{\linty_1,\ldots,\linty_n}\quad n\geq 0\\[3pt]
 \textsc{Types} & \GTys&\ni&\gty&\grameq&\linty\grammarpipe\intty
\end{array}\]

\begin{remark}
Please notice that intersections can be empty. The empty intersection type $\emi\defeq\{\}$ stands for the type of \emph{erasable} terms, which in Closed CbV are just those terms evaluating to closed values (\ie $\lambda$-abstractions). In CbV, terminating terms and erasable terms coincide, as the argument of a $\beta$-reduction has to be evaluated before being erased (and so its evaluation has to terminate).
\end{remark}

Type environments, ranged over by $\tye,\tyetwo$, are total maps
from variables to intersection types such that only finitely
many variables are mapped to non-empty intersection types, and we write $\tye = 
\var_1:\intty_1,\ldots,\var_n:\intty_n$ if 
$\dom\tye = \set{\var_1,\ldots,\var_n}$. Type 
judgments have the form $\tjudg{\tye}{\tm}{\gty}$. The 
typing rules are in Fig.~\ref{fig:cbv-system}, where $F$ stands for a finite, possibly empty, set of indexes; type derivations are written $\tyd$ 
and we write 
$\tyd\pof\tjudg{\tye}{\tm}{\gty}$ for a type derivation $\tyd$ with the 
judgment $\tjudg{\tye}{\tm}{\gty}$ as its conclusion.

Intuitively, intersections are needed because, during the evaluation of a term $\tm$, a subterm of $\tm$ can assume different types. For example in $(\la\var\var\var)(\la\vartwo\vartwo)$, the argument $\la\vartwo\vartwo$ has type $\emi\to\emi$, when it substitutes the first occurrence of $\var$ in functional position, and has type $\emi$ when it substitutes the second occurrence of $\var$ in argument position. These different uses, which require different types, are encoded into the intersection type. Moreover, the type system, contrarily to what happens in call-by-name, and consistently with the rationale of CbV, needs arguments of applications to be typed (with an intersection) just once.
\begin{example}\label{ex:pure}
We provide the type derivation for the term $\tjudg{}{(\la\var\var\var)(\mathsf{II})}{\emi}$, where $\mathsf{I}\defeq\la\var\var$, in Fig.~\ref{fig:derivation}. One can notice that our example term, being CbV-normalizing, can be typed with $\emi$.

\begin{figure*}[t]
{\scriptsize\[
  \infer[\rapp]
  {\tjudg{}{(\la\var\var\var)(\mathsf{II})}{\emi}}
  { \infer[\rabs]
    {\tjudg{}{\la\var{\var\var}}{\int{\id}\to\emi}}
    {\infer[\rapp]{\tjudg{\var:\int{\id}}{\var\var}{\emi}}
    {\infer[\rax]{\tjudg{\var:\int{\id}}\var{\id}}{}
    &\infer[\rint]{{\tjudg{\var:\int{\id}}\var\emi}}{}}}
  & \infer[\rapp]   
    {\tjudg{}{(\la\vartwo{\vartwo})(\la\varthree{\varthree})}{\int{{\emi}\to\emi}}}     
    { \infer[\rabs]
    {\tjudg{}{\la\vartwo{\vartwo}}{\int{\id}\to \int{\id}}}
    {\infer[\rint]
    {\tjudg{\vartwo: \int{\id}}{\vartwo}{\int{\id}}}
    {\infer[\rax]{\tjudg{\vartwo: \int{\id}}{\vartwo}{\id}}{}}} & 
    \infer[\rint]{\tjudg{}{\la\varthree{\varthree}}{\int{\id}}}
    {    \infer[\rabs]{\tjudg{}{\la\varthree{\varthree}}{\id}}{    
    \infer[\rint]{\tjudg{\varthree:\emi}\varthree\emi}{}}
    }}
  }
\]}
    \vspace{-18pt}
    \caption{Type derivation for $\tjudg{}{(\la\var\var\var)(\mathsf{II})}{\emi}$. We set $\id\defeq\emi\to\emi$.}
    \vspace{-12pt}
    \label{fig:derivation}
\end{figure*}

\end{example}
\subsubsection*{Characterization of Termination.} Intersection types characterize Closed 
CbV termination, that is, they type all and only those $\lambda$-terms that terminate
with respect to Closed CbV. We give a very brief overview of how this result is achieved. In the following sections, we shall prove all these results in the effectful setting. The reader could, however, benefit from the exposition of the main steps in this simpler setting.

Similarly to more traditional type systems, this intersection type system enjoys subject reduction, \ie types are preserved under reduction.
\begin{lemma}[Subject Reduction]
    Let $\tm$ be a closed $\lambda$-term. If $\tjudg{}{\tm}{\intty}$ and $\tm\mapsto\tmtwo$, then 
	$\tjudg{}{\tmtwo}{\intty}$.
\end{lemma}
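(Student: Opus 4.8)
The plan is to split the statement into two parts: a \emph{substitution lemma}, taking care of the fired $\beta$-redex, and a \emph{context-closure} step, propagating the invariant through the evaluation context. Since the type system is not syntax-directed (rule $\rint$ applies to an arbitrary value), all the inductions below are on type derivations, and I would rely on three inversion facts, all immediate from the shape of the rules in Fig.~\ref{fig:cbv-system}: (i) a computation $\val\tm$ can be typed only by $\rapp$, and hence only with an intersection; (ii) a $\lambda$-abstraction typed with a value (\ie arrow) type can be typed only by $\rabs$; (iii) a value typed with an intersection $\intty$ can be typed only by $\rint$, with one sub-derivation per element of $\intty$.

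The core ingredient is the substitution lemma: if $\tyd\pof\tjudg{\tye,\var:\intty}{\tm}{\gty}$ and $\tjudg{\tye}{\val}{\linty}$ holds for every $\linty\in\intty$, then $\tjudg{\tye}{\tm\isub\var\val}{\gty}$. I would prove it by induction on $\tyd$, inspecting the last rule. In the $\rint$ case $\tm$ is a value and $\gty$ an intersection, and the claim follows by applying the \ih{} to each premise and recombining with $\rint$; note that the shared environment $\tye$ of the idempotent presentation is exactly what makes this go through smoothly, with no splitting of contexts as would be required in a linear/non-idempotent system. In the $\rax$ case one distinguishes $\tm=\var$ -- where $\tm\isub\var\val=\val$ and $\gty=\linty\in\intty$, so the hypothesis on $\val$ closes the case -- from $\tm=\vartwo\neq\var$, which is trivial. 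The $\rabs$ case renames the bound variable fresh for $\var$ and $\val$, invokes the \ih{} under the extended environment (lifting the derivations of $\val$ by a routine weakening lemma), and reapplies $\rabs$; the $\rapp$ case applies the \ih{} to both premises and reapplies $\rapp$.

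For the theorem itself, first observe that $\tm$ reduces, hence is not a normal form, hence not a closed value: it is a computation $\val\tmthree$, so by (i) the derivation of $\tjudg{}{\tm}{\intty}$ indeed ends with $\rapp$ and $\intty$ is an intersection, consistently with the statement. Writing $\tm=\evctxp{(\la\var\tmfive)\valtwo}$ and $\tmtwo=\evctxp{\tmfive\isub\var\valtwo}$ for some $\evctx\in\Evctxs$, I would induct on $\evctx$. If $\evctx=\ctxhole$, the derivation of $\tjudg{}{(\la\var\tmfive)\valtwo}{\intty}$ ends with $\rapp$ from $\tjudg{}{\la\var\tmfive}{\arr{\intty'}{\intty}}$ and $\tjudg{}{\valtwo}{\intty'}$; by (ii) the former yields $\tjudg{\var:\intty'}{\tmfive}{\intty}$, by (iii) the latter yields $\tjudg{}{\valtwo}{\linty}$ for every $\linty\in\intty'$, and the substitution lemma gives $\tjudg{}{\tmfive\isub\var\valtwo}{\intty}$. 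If $\evctx=\valtwo\evctxtwo$, the derivation ends with $\rapp$ from $\tjudg{}{\valtwo}{\arr{\inttytwo}{\intty}}$ and $\tjudg{}{\evctxptwo{(\la\var\tmfive)\valthree}}{\inttytwo}$; the second subterm is itself a computation reducing to $\evctxptwo{\tmfive\isub\var\valthree}$, so the \ih{} (on $\evctxtwo$) gives it type $\inttytwo$, and reapplying $\rapp$ concludes. The main obstacle is the substitution lemma: pinning down the right statement (the per-component hypothesis on $\val$, the shared environment) and the bookkeeping in the $\rint$ and $\rabs$ cases -- weakening and $\alpha$-renaming; everything else is inversion followed by reassembly.
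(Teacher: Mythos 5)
Your proposal is correct and matches the paper's own strategy (used there for the monadic generalization, Proposition~\ref{prop:sub-red} and Lemma~\ref{lem:substitution}): a substitution lemma proved by induction on the typing derivation, combined with an outer induction on the structure of the evaluation context, with the base case handled by inversion on $\rapp$, $\rabs$ and $\rint$. The auxiliary ingredients you flag --- weakening and the per-component hypothesis on the substituted value --- are exactly the ones the paper also isolates (Lemma~\ref{l:weak}), so nothing is missing.
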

Moreover, as with simple types, all typable terms terminate.
\begin{proposition}[Soundness]
    Let $\tm$ be a closed $\lambda$-term. If $\tjudg{}{\tm}{\intty}$, then $\tm$ has normal form.
\end{proposition}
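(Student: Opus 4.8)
The plan is to prove soundness by a \emph{reducibility} (logical relations) argument. This is the natural route in the idempotent setting: the usual proof for non-idempotent intersection types, in which the size of the type derivation strictly decreases along $\mapsto$, breaks down here because $\int{\linty,\linty}=\int\linty$, so no decreasing measure is available. Note that subject reduction is \emph{not} needed for this direction (it is used, together with subject expansion, only for the converse characterization).

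First I would define, by mutual induction on the size of types, a family of reducibility predicates: a set $\reff\linty$ of closed values for every value type $\linty$, a set $\reff\intty$ of closed values for every intersection $\intty$, and a notion of \emph{reducible computation} at every type $\gty$. Concretely: $\val\in\reff{\arr\intty\inttytwo}$ iff $\val=\la\var\tm$ and the computation $\tm\isub\var\valtwo$ is reducible at $\inttytwo$ for every $\valtwo\in\reff\intty$; $\val\in\reff{\int{\linty_i}_{i\in F}}$ iff $\val\in\reff{\linty_i}$ for all $i\in F$ (so every closed value is reducible at $\emi$); and a closed computation $\tm$ is reducible at an intersection $\intty$ iff $\tm\mapsto^\ast\val$ for some $\val\in\reff\intty$. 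The recursion is well founded because in $\arr\intty\inttytwo$ the intersections $\intty,\inttytwo$ consist of strictly smaller value types. Two elementary facts are then recorded: \textbf{(i)} every reducible computation has a normal form, immediate from the definition; and \textbf{(ii)} reducibility of computations is closed under expansion along $\mapsto$, \ie if $\tm\mapsto\tmtwo$ and $\tmtwo$ is reducible at $\intty$ then so is $\tm$, which holds because $\mapsto$ is deterministic and hence $\tm,\tmtwo$ have the same normal form.

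The core is the fundamental lemma: if $\tyd\pof\tjudg\tye\tm\gty$ and $\rho$ maps each $\var\in\dom\tye$ to a value $\rho(\var)\in\reff{\tye(\var)}$, then $\tm\rho$ is reducible at $\gty$ (and $\tm\rho$ is a closed value whenever $\tm$ is). The proof is by induction on $\tyd$. For $\rax$: $\var\rho=\rho(\var)\in\reff{\tye(\var)}\subseteq\reff\linty$ for the relevant $\linty\in\intty$. For $\rint$: the \ih gives $\val\rho\in\reff{\linty_i}$ for each $i\in F$, whence $\val\rho\in\reff{\int{\linty_i}_{i\in F}}$. For $\rabs$: to show $\la\var{(\tm\rho)}\in\reff{\arr\intty\inttytwo}$, pick $\valtwo\in\reff\intty$ and apply the \ih to the premise with $\rho$ extended by $\var\mapsto\valtwo$, obtaining $(\tm\rho)\isub\var\valtwo$ reducible at $\inttytwo$. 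For $\rapp$: the \ih gives $\val\rho=\la\var\tmthree\in\reff{\arr\intty\inttytwo}$ and $\tm\rho\mapsto^\ast\valtwo$ with $\valtwo\in\reff\intty$; since reduction may take place in argument position (context $\val\evctx$), $(\val\tm)\rho=(\val\rho)(\tm\rho)\mapsto^\ast(\val\rho)\valtwo\mapsto\tmthree\isub\var\valtwo$, and the last term is reducible at $\inttytwo$ by the defining clause of $\reff{\arr\intty\inttytwo}$; applying \textbf{(ii)} repeatedly yields $(\val\tm)\rho$ reducible at $\inttytwo$.

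Finally, instantiating the fundamental lemma with $\tye$ empty and $\rho$ the empty substitution: from $\tjudg{}{\tm}{\intty}$ we obtain that $\tm$ is a reducible computation, so by \textbf{(i)} $\tm\mapsto^\ast\val$ for some closed value $\val$, \ie $\tm$ has a normal form. The step I expect to require the most care is the design of the reducibility predicate together with the $\rapp$ case of the fundamental lemma, where one must line up the argument-position reductions permitted by evaluation contexts with the arrow clause and then close under expansion; the remaining cases are routine bookkeeping. This whole argument is exactly what gets generalized, with considerably more machinery, in the monadic development of the later sections: the reducibility predicate becomes a monadic logical relation, and closure under expansion is replaced by subject expansion along monadic reduction.
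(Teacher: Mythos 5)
Your reducibility argument is correct and is essentially the paper's own approach: Section~2 merely announces the result (the footnote in the introduction even notes that the idempotent case requires logical relations for soundness), and the actual proof given in the effectful setting is exactly such a logical relation $\trelsem$, with the arrow clause phrased as ``$\val\valtwo$ reducible'' rather than ``$\tm\isub\var\valtwo$ reducible'' --- an immaterial difference given your closure under expansion. Your fundamental lemma corresponds to the inclusion ${\trel}\subseteq{\trelsem}$ of Proposition~\ref{prop:soundness-lr}, and your fact (i) to Lemma~\ref{l:semlr}, so the proposal matches the paper's route specialized to the pure case.
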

Contrarily to simple types, intersection types satisfy also subject \emph{expansion}. This means that types are preserved also by backward reductions (\ie expansions).
\begin{lemma}[Subject Expansion]
	Let $\tm$ be a closed $\lambda$-term. If $\tjudg{}{\tmtwo}{\intty}$ and 
	$\tm\mapsto\tmtwo$, then 
	$\tjudg{}{\tm}{\intty}$.
\end{lemma}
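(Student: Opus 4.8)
The plan is to reduce Subject Expansion to an \emph{anti-substitution} lemma for the typing relation and then to propagate that lemma through evaluation contexts. Recall that $\tm \mapsto \tmtwo$ means $\tm = \evctxp{(\la\var\tmthree)\val}$ and $\tmtwo = \evctxp{\tmthree\isub\var\val}$ for some evaluation context $\evctx$ and some closed redex $(\la\var\tmthree)\val$; so the claim unfolds to: for all such $\evctx, \tmthree, \val$, if $\tjudg{}{\evctxp{\tmthree\isub\var\val}}{\intty}$ then $\tjudg{}{\evctxp{(\la\var\tmthree)\val}}{\intty}$, and I would prove this by induction on $\evctx$.

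The crux is the base case $\evctx = \ctxhole$, which I factor through an anti-substitution lemma proved by induction on the typing derivation: if $\tyd \pof \tjudg{\tye}{\tmthree\isub\var\val}{\gty}$ with $\val$ closed, then there is an intersection $\intty'$ with $\tjudg{\tye,\var:\intty'}{\tmthree}{\gty}$ and $\tjudg{}{\val}{\intty'}$. The informative cases are $\tmthree = \var$ — take $\intty' \defeq \int\gty$ when $\gty$ is a value type and $\intty' \defeq \gty$ when $\gty$ is an intersection; in both cases $\tjudg{\tye,\var:\intty'}{\var}{\gty}$ is immediate by $\rax$ (possibly followed by $\rint$) and $\tjudg{}{\val}{\intty'}$ comes from $\tyd$ — and the structural cases $\tmthree = \la{\vartwo}{\tm'}$ and $\tmthree = \val'\tm'$, where the shape of $\tmthree$ together with $\gty$ pins down the last rule ($\rabs$ or $\rint$ for an abstraction, $\rapp$ for an application), the induction hypotheses on the sub-derivations return intersections $\intty'_1, \intty'_2, \ldots$ recording the distinct uses of $\val$, and one sets $\intty' \defeq \bigcup_i \intty'_i$. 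Rebuilding the derivation of $\tmthree$ then needs a routine \emph{weakening} lemma — enlarging $\var:\intty'_i$ to $\var:\intty'$ inside a derivation is harmless, since $\rax$ is the only rule that reads the environment and it only asks for membership — whereas $\tjudg{}{\val}{\intty'}$ is recovered by collecting the relevant premises under a single $\rint$. Granting the lemma, the base case is immediate: from $\tjudg{}{\tmthree\isub\var\val}{\intty}$ we get $\intty'$ with $\tjudg{\var:\intty'}{\tmthree}{\intty}$ and $\tjudg{}{\val}{\intty'}$, hence $\tjudg{}{\la\var\tmthree}{\arr{\intty'}{\intty}}$ by $\rabs$ and $\tjudg{}{(\la\var\tmthree)\val}{\intty}$ by $\rapp$.

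For the inductive step $\evctx = \val\evctxtwo$, the given judgment reads $\tjudg{}{\val\,\evctxptwo{\tmthree\isub\var\valtwo}}{\intty}$, with $(\la\var\tmthree)\valtwo$ the redex. This term is an application, hence derivable only by $\rapp$; inverting it produces an intersection $\intty'$ with $\tjudg{}{\val}{\arr{\intty'}{\intty}}$ and $\tjudg{}{\evctxptwo{\tmthree\isub\var\valtwo}}{\intty'}$. The induction hypothesis for $\evctxtwo$ upgrades the latter to $\tjudg{}{\evctxptwo{(\la\var\tmthree)\valtwo}}{\intty'}$, and one more $\rapp$ gives $\tjudg{}{\val\,\evctxptwo{(\la\var\tmthree)\valtwo}}{\intty}$. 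The two small inversion facts used here — an application is typed only by $\rapp$, and the conclusion of $\rapp$ is always an intersection — are read off Fig.~\ref{fig:cbv-system}.

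The genuinely delicate point is the anti-substitution lemma, and precisely its intersection bookkeeping: the returned type $\intty'$ must be wide enough to cover \emph{every} copy of $\val$ created by the substituted occurrences of $\var$, which is exactly why the structural cases take unions of the types returned by the recursive calls and why weakening is needed to push the enlarged hypothesis $\var:\intty'$ back through the derivation of $\tmthree$. The rest is routine; in particular, working with closed terms keeps $\val$ closed throughout, so no $\alpha$-renaming or freshness side-conditions arise, and the whole argument carries over verbatim to the multiset variant, with multiset union in place of $\cup$. (By contrast, Subject Reduction needs only the easy direction, the substitution lemma — from $\tjudg{\tye,\var:\intty'}{\tmthree}{\gty}$ and $\tjudg{}{\val}{\intty'}$ derive $\tjudg{\tye}{\tmthree\isub\var\val}{\gty}$ — together with the same induction on $\evctx$.)
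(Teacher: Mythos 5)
Your proof is correct and follows essentially the same route as the paper's: the paper does not prove this pure-calculus lemma directly but establishes its monadic generalization (Proposition~\ref{prop:subject-expansion}) by exactly your decomposition --- induction on evaluation contexts, with the base case factored through an anti-substitution lemma (Lemma~\ref{l:antisub}) proved by induction on the typing derivation, taking unions of the returned intersections and invoking admissibility of weakening (Lemma~\ref{l:weak}). Your treatment of the variable case and of the intersection bookkeeping matches the paper's, so nothing further is needed.
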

Together with the fact that normal forms, \ie $\lambda$-abstractions, can always be typed with the empty type $\emi$, this gives the completeness of the type system, \ie the fact that every terminating term is typable.
\begin{proposition}[Completeness]
    Let $\tm$ be a closed $\lambda$-term. If $\tm$ has normal form, then there exists an intersection type $\intty$ such that $\tjudg{}{\tm}{\intty}$.
\end{proposition}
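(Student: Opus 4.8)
The plan is to combine the two facts already isolated in the overview: that normal forms are typable with the empty intersection $\emi$, and Subject Expansion, iterated along a reduction sequence down to a normal form.

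\emph{Base case: normal forms are typable.} Since we work exclusively with closed terms, every normal form is a closed value $\val \in \closed\Vals$, i.e. a $\lambda$-abstraction. Applying rule $\rint$ with an empty index set $F = \emptyset$ — which has no premises — immediately gives $\tjudg{}{\val}{\emi}$, and $\emi \in \IntTys$. The only thing to check is that $\rint$ is applicable, which it is precisely because $\val$ is a value; this is where the restriction to Closed CbV (so that normal forms are values, not neutral terms) is used.

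\emph{Inductive step.} Assume $\tm$ has a normal form, i.e. $\tm \mapsto^{*} \val$ with $\val \in \closed\Vals$, and argue by induction on the length $k$ of this reduction sequence. If $k = 0$ then $\tm = \val$ and we are done by the base case. If $k > 0$, factor the sequence as $\tm \mapsto \tm'$ followed by $\tm' \mapsto^{*} \val$ in $k-1$ steps. Note $\tm'$ is again closed, since $\mapsto$ preserves closedness, so the induction hypothesis applies and yields an intersection type $\intty$ with $\tjudg{}{\tm'}{\intty}$. Subject Expansion, applied to the single step $\tm \mapsto \tm'$ (whose hypotheses — $\tm$ closed, $\tm \mapsto \tm'$, $\tjudg{}{\tm'}{\intty}$ — are all met), gives $\tjudg{}{\tm}{\intty}$, which is the desired conclusion. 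In fact, since the type at the base case is $\emi$ and Subject Expansion leaves the type unchanged, one may take $\intty = \emi$ throughout, recovering the slogan that in Closed CbV the terminating terms are exactly the erasable ones.

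\emph{Main obstacle.} Once Subject Expansion is granted, there is no real difficulty left: the two points needing a word of justification — that normal forms are values (so $\rint$ fires) and that closedness is preserved along $\mapsto$ (so the closed-term hypothesis of Subject Expansion survives each step) — are both immediate from the definition of the operational semantics. The genuine work behind completeness is entirely absorbed into the Subject Expansion lemma, namely the anti-substitution reasoning showing that a typing of $\tm\isub\var\val$ can be reorganised into a typing of $(\la\var\tm)\val$ by collecting the types used for $\val$ into an intersection; since the excerpt allows us to assume that lemma, the proof above is complete.
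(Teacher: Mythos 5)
Your proof is correct and follows exactly the route the paper itself sketches: normal forms (closed values) are typed with $\emi$ via $\rint$ with an empty index set, and Subject Expansion is iterated backwards along the (deterministic, finite) reduction sequence. Nothing further is needed.
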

Putting soundness and completeness together, we obtain the full characterization of termination via typability.
\begin{theorem}[Characterization]
    Let $\tm$ be a closed $\lambda$-term. Then there exists an intersection type $\intty$ such that $\tjudg{}{\tm}{\intty}$ if and only if $\tm$ has normal form.
\end{theorem}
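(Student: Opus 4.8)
The plan is to read the theorem off as the conjunction of the two results already established, \emph{Soundness} and \emph{Completeness}, treating the two directions of the biconditional separately. Neither direction requires anything beyond invoking those statements.

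For the implication from typability to normalization, assume there is an intersection type $\intty$ with $\tjudg{}{\tm}{\intty}$; then Soundness applies verbatim and yields that $\tm$ has a normal form. For the converse, assume $\tm$ has a normal form; then Completeness applies verbatim and produces an intersection type $\intty$ with $\tjudg{}{\tm}{\intty}$. Combining the two implications closes the proof.

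So the theorem itself presents no real obstacle; all the content sits in the two propositions it packages, and it is worth recording where the work actually is. The delicate one is Soundness: since the system is \emph{idempotent}, a type derivation carries no measure that strictly decreases along $\mapsto$, so one cannot conclude by a naive induction on derivation size. Instead I would run a reducibility (logical relations) argument: define, by induction on types, reducibility predicates $\reff{\linty}$ on closed values and $\reff{\intty}$ on closed computations; check that they are closed under anti-reduction and that every member normalizes; prove the fundamental lemma that every typable term is reducible (by induction on the type derivation, via a substitution lemma handling the environment); and conclude that typable terms normalize. Completeness is comparatively routine: by Subject Expansion typability is stable under backward $\mapsto$-steps, a normal form is a $\lambda$-abstraction $\la\var\tm$ which is always typable (e.g. with $\emi$ using rule $\rint$ with an empty index set), and an induction on the length of the evaluation of $\tm$ to its normal form propagates typability back to $\tm$.
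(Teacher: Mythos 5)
Your proof matches the paper's: the theorem is obtained exactly by combining the Soundness and Completeness propositions, one for each direction of the biconditional, and your account of where the real work lies (a logical-relations argument for Soundness in the idempotent setting, and Subject Expansion plus typability of normal forms with $\emi$ for Completeness) is precisely the route the paper takes.
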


\section{Preliminaries on Monads, Algebraic Effects, Operations}\label{sec:preliminaries}

In this section, we recall some preliminary notions 
on monads~\cite{MacLane/Book/1971}, algebras 
\cite{sankappanavar1981course}, and 
relational reasoning \cite{DBLP:books/cu/Schmidt2011},
 that will be central to the rest of this paper. 
Due to space constraints, there is no hope to be comprehensive, and 
thus we assume the reader to have minimal familiarity 
with those fields. 
Unless explicitly stated, we work in the category $\Set$ of sets and 
functions and we tacitly restrict all definitions to it. 
Since we will extensively work with relations, we employ the relational notation 
even for functions, writing $f;g: A \to C$ for the composition (in diagrammatic order) 
of $f: A \to B$ and $g: B \to C$, and $\idrel_A: A \to A$ (mostly omitting subscripts) 
for the identity function. 

\subsubsection*{Monads and Algebraic Effects.}
We use monads
 \cite{Moggi'89,Moggi'91} 
to model computational effects.

\begin{definition}[Monad]
\label{def:monad}
A \emph{monad} (on $\Set$) is a triple 
$(\monad, \unit, \mu)$ 
consisting of a functor $\monad$ (on $\Set$) 
together with two natural transformations: 
$\unit: 1_{\Set} \Rightarrow \monad$ (called \emph{unit}) and 
$\multiplication: \monad \monad \Rightarrow \monad$ (called \emph{multiplication}) 
subject to the following laws: 
$\eta; \mu = T(\eta); \mu = \idrel$ 
and $T(\mu); \mu =\mu;\mu$.
\end{definition}

Given a monad $(T, \eta, \mu)$ we oftentimes identify it with its carrier 
functor. Moreover, we write $\kleisli{f}: T(A) \to T(B)$ for the Kleisli extension of 
$f: A \to T(B)$, where $\kleisli{f} \defeq T(f);\mu$, and 
$\bindsymbol$ for the binding operator induced by $\kleisli{-}$. 
Such an operator maps a monadic element $t \in T(A)$ and a monadic function 
$f: A \to T(B)$ to the monadic element $t \bind f$ in $T(B)$ defined 
as $\kleisli{f}(t)$. It is well-known that using this construction a monad could be presented also as a Kleisli triple $(\monad, \unit, \kleisli{-})$, or with the bind operation instead of $\mu$, \ie as $(T, \eta, \bindsymbol)$~\cite{Wadler'95}.

To model how actual effects are produced, Plotkin and Power 
\cite{PP01,PP03}
introduced the notion of 
an \emph{algebraic operation}, which we shall use to make calculi 
truly effectful.

\begin{definition}[Algebraic Operation]
Given a monad $(\monad, \unit, \multiplication)$, 
an $n$-ary \emph{algebraic operation} is a 
natural transformation $\alpha: \monad^n \Rightarrow \monad$ 
respecting the monad multiplication. 
\end{definition}
From an operational perspective, algebraic operations describe those 
operations whose execution is independent of the context in which they are executed. 

\begin{example}[Concrete Monads and Operations]
\label{ex:monads}
\begin{varenumerate}
    \item[]
  \item
    Divergent computations are modelled by the 
    \emph{maybe} or \emph{partiality} monad $(\mathcal{E}, \unit, 
    \multiplication)$, where 
    $ \mathcal{E}(A) \defeq A + \{\bot\}$, 
    $\unit$ is the left injection $\iota_{\ell}$, and 
    $\multiplication: ((A + \{\bot\}) + \{\bot\}) 
    \to A + \{\bot\}$ sends $\iota_{\ell}(\iota_{\ell}(x))$ to 
    $\iota_{\ell}(x)$, and all the rest to $\bot$.
    Therefore, an element in $\maybe A$ is either an element $a \in A$ 
    (meaning that we have a terminating computation returning $a$), or 
     the element $\bot$ (meaning that the computation diverges).
    As non-termination is an intrinsic feature of complete programming languages, 
     we do not consider 
     explicit operations to produce divergence. 
 Nonetheless, notice that 
we might consider the constant $\bot$ as a zero-ary operation 
 producing divergence (linguistically, this essentially corresponds to adding an 
always diverging constant $\mathtt{diverge}$).
\item Replacing $\{\bot\}$ with a set of errors $\mathit{Err}$, 
  we obtain the exception monad. Exceptions are produced by means of 
  $0$-ary operations $\mathtt{raise}_{e}$ indexed by elements in $\mathit{Err}$.
\item Probabilistic computations are modelled by the (finitary) distribution 
  monad $(\mathcal{D}, \eta, \mu)$, 
  where $\mathcal{D}(X)$ is the set of distributions over $X$ with 
  \emph{finite} support (where the support of a distribution $d\in\mathcal{D}(X)$ is defined as $\supp{d}\defeq\{x\in X\text{ such that }d(x)>0\}$), $\unit(x)$ is the Dirac distribution on $x$, and 
  $\mu(\Phi)(x) \defeq \sum_{\phi} \Phi(\phi) \cdot \phi(x)$. 
  Finite distributions are produced using weighted sums, \ie 
  $[0,1]$-indexed binary operations $+_p$ defined thus: 
  $(\phi_1 +_p \phi_2)(x) \defeq p \cdot \phi_1(x) + (1-p) \cdot \phi_2(x)$. 
  In a similar fashion, one defines the finitary subdistribution monad $\mathcal{D}^{\leq 1}$ 
  and the countably supported (sub)distribution monad $\mathcal{D}_{\omega}^{(\leq 1)}$. 
  In the former case, we simply allow distribution to have weight smaller than $1$, whereas 
  in the latter case, we allow distributions to have a countable support (i.e. 
  the set of elements where the distribution is non-null must be countable).
\item
    Computations with output are modelled by
    the \emph{writer} or \emph{output} monad
    $(\writer, \unit, 
    \multiplication)$, where
    $\writer A = W \times A$ and  $(W, 1, \cdot)$ is a monoid.  
    Unit and multiplication are defined by 
    $\unit(x) = (1,x)$ 
    and $\multiplication(w, (v,x)) = (w\cdot v, x)$. 
    Taking the monoid of words, then we can think of $(w,x)$ as the result of 
    a program printing $w$ and returning $x$. 
    If, instead, we take the monoid $(\mathbb{N}, 0, +)$, then we obtain 
    the \emph{cost} or \emph{complexity} monad \cite{Sands/Improvement-theory/1998}, whereby 
    we read $(n,x)$ as the result of a computation that produces $x$ 
    with cost $n$. 
    We consider a $W$-indexed family of unary operations 
    $\ccode{out}_w$ defined by 
    $\ccode{out}_w(v,x) = (w\cdot v,x)$. These are indeed algebraic. 
    When dealing with cost, one usually considers the single operation $\ccode{out}_1$, 
    usually written $\ccode{tick}$ or simply $\checkmark$.
  \item We model nondeterminism using the 
     powerset monad $(\powerset, \unit, 
     \multiplication)$, where 
     $\unit(x) = \{x\}$ and 
     $\multiplication(\mathcal{U}) = \bigcup \mathcal{U}$.
     We generate nondeterminism using (binary) set-theoretic union, 
     which is indeed algebraic.
     The finitary powerset monad $\FPowerset$ is obtained from 
     $\powerset$ by taking as underlying functor the finite powerset 
     functor $\fpowerset$. Similarly, the non-empty powerset monad 
     $\nepowerset$ is obtained by the taking the non-empty powerset 
     functor $\nepowerset$.
\end{varenumerate}
\end{example}

Most monads seen in Example~\ref{ex:monads} have \emph{countable support} 
in the sense that whenever $t \in \monad(A)$ there exists a countable 
set $\supp{t} \subseteq A$ upon which $t$ is built. 
Such a set is called the \emph{support} of $t$ and generalises 
the notion of a support one has for distributions. 
In general, not all monads have support and thus 
we restrict our analysis to such monads.
First, we consider only monads that preserve injections: 
that is, if $\iota: X \hookrightarrow A$ is an injection, then so is
$T(\iota): T(X) \hookrightarrow T(A)$.  
We regard $T(\iota)$ as a monadic inclusion and write 
$t \in T(X)$ if there exists (a necessarily unique) $s \in T(X)$ such that 
$T(\iota)(s) = t$.
Notice that all monads of Example~\ref{ex:monads} preserves injections and 
that if a monad preserves weak pullbacks (a condition we shall exploit in 
Section~\ref{sec:mon-types}), then it preserves injections. 

\begin{definition}[Support]
\begin{varenumerate}
\item[] 
\item
Given an element $t \in T(A)$, the support of $t$, if it exists, is the
smallest subset $\iota: X \hookrightarrow A$ such that 
$t \in T(X)$. We denote such a
set by $\supp{t}$.
\item 
We say that a monad $T$ has \emph{countable} (resp. \emph{finitary}) \emph{support} 
if any $t \in T(A)$ has countable (resp. finite) support --- \ie $\supp{t}$ exists and it 
is countable (resp. finite) --- for any set $A$.
\end{varenumerate}
\end{definition}

\begin{example}
   All the monads in Example~\ref{ex:monads} are countable,
with the exception of the (non-finitary) powerset monads. Nonetheless, we
can regard them as countable by taking their countable restriction. 
Indeed, as we shall apply them to countable sets (of terms), 
such a restriction is by no means a constraint. 
The output monad, the maybe monad, the finitary (sub)distribution monad, and the finitary powerset monad 
all have finitary support. For example, let us take the set $\mathcal{D}(\mathbb{N})$ of the probability distributions on natural numbers. If $\mathcal{D}(\mathbb{N})\ni d\defeq\frac{1}{3}\cdot 5,\frac{2}{3}\cdot 7$, applying the definition of support stated above, we obtain that the smallest set $X$ such that $d\in\mathcal{D}(X)$, which is $X=\{5,7\}$. This set matches exactly the one obtained from the standard definition of support for probability distributions given in Example~\ref{ex:monads}.
\end{example}

\subsubsection*{Algebraic Theories.} 

Since effects are ultimately produced by 
algebraic operations, we oftentimes  
describe computational effects by means of \emph{algebraic theories}, 
\ie via a collection of operations 
and equations.  

Recall that a \emph{signature} $\signature$ is a family of sets $\{\signature_k\}_{k \in \mathbb{N}}$, 
    the elements $\opone, \optwo, \hh$ of each $\signature_k$ being called $k$-ary \emph{operations}. 
    The set $\tree{\signature}(X)$ of $\signature$-\emph{terms} 
    (just terms) over $X$ is the least such that (1)
    $x \in \tree{\signature}(X)$ for any $x \in X$, and (2)
    $\opone(\tone_1, \hh, \tone_n) \in \tree{\signature}(X)$, whenever  
    $\tone_1, \hh, \tone_n \in \tree{\signature}(X)$. 
The construction of $\signature$-terms defines a functor 
$\tree{\signature}$ which is part of a monad  
whose unit is given by the subset inclusion injection $\iota: X \hookrightarrow \tree{\signature}{X}$ 
and whose multiplication is given by term 
substitution. 

An \emph{algebraic} or \emph{equational theory} 
over $\tree{\signature}(X)$ 
is given by a relation
    $E \subseteq \tree{\signature}(X) \times \tree{\signature}(X)$ 
    of equations between such terms. 
For a theory $E$, we write 
$\eq_{E}$ (or just $\eq$)
for the least congruence relation on terms that is closed under term substitution and contains $E$. 
The \emph{free $E$-theory} over $X$ 
is the quotient of $\tree{\signature}(X)$ by $\eq_{E}$. 
This construction gives a functor which is part of a 
monad, called the \emph{free theory monad} of $E$. 

\begin{example}[Concrete Algebraic Theories]
  \begin{varenumerate}
    \item[]
    \item The theory of divergence has a single $0$-ary operation 
      and no equation. Its free theory monad gives the maybe monad.
    \item The theory of nondeterminism consists of a single binary 
      operation $\vee$ together with the usual join semilattice 
      equations \cite{AbramskyJung/DomainTheory/1994}.
      Its associated free theory monad gives the 
      finitary non-empty powerset monad. If we drop the 
      idempotency equation $x \vee x \eq x$, we obtain 
      the theory of multisets~\cite{multisets} and the associated 
      multiset monad. If we also drop commutativity, we obtain the theory of lists 
      and the associated list monad.
    \item The theory of probabilistic nondeterminism has 
      binary operations 
      $+_p$ indexed by rational numbers $0 \leq p \leq 1$ 
      subject to the usual axioms of a barycentric algebra 
      \cite{stone49}:
      \begin{align*}
        x +_p x &\eq x;
        & 
        x +_1 y &\eq x;
        &
        x +_p y &\eq y +_{1-p} x;
      \end{align*}
      \vspace{-0.8cm}
      \begin{align*}
        x +_p (y +_q z) &\eq (x +_{\frac{p}{p + (1-p)q}} y) +_{p + (1-p)q} z.
      \end{align*}
      The free theory monad of this theory gives the finitary distribution monad.
      The theory of multi-distribution or indexed valuations (and their 
      corresponding monads) \cite{DBLP:journals/scp/AvanziniLY20,DBLP:journals/mscs/VaraccaW06,varacca2003probability}, 
      is obtained by dropping the idempotency axiom $x +_p x \eq x$. 
    \item Fixing a monoid $W$, the theory of the writer monad 
      has a unary operation $\ccode{out}_{w}$ for each 
      $w \in W$, and equations 
      \begin{align*}
      \ccode{out}_1(x) &\eq x 
      &
      \ccode{out}_w(\ccode{out}_v(x)) &\eq \ccode{out}_{w\cdot v}(x).
      \end{align*}
    \end{varenumerate}
\end{example}

\subsubsection*{Relations.}
We will extensively work with relations. 
We denote by $\rel$ the category with sets as objects 
and binary relations as arrows. As it is customary, we use the notation 
$\relone: A \torel B$ for a relation $\relone \subseteq A \times B$, 
and write $\rel(A,B)$ for the collection of relations 
of type $A \torel B$. 
We tacitly regard each function $f$ as a relation via its graph 
and write $\idrel_A: A \torel A$ for the identity relation, the latter 
being the graph of the identity function. We furthermore denote by 
$\relone; \reltwo: A \torel C$ the composition of 
$\relone: A \torel B$ and $\reltwo: B \torel C$,
and by $\dual{\relone}: B \torel A$ 
the dual or transpose of $\relone:A \torel B$. 


\section{Monadic Intersection Types}\label{sec:mon-types}
In this section, we present the monadic extension of the intersection type system for CbV presented in Section~\ref{sec:cbv-types}.

\subsubsection*{Effectful CbV.}
The target calculus of the remaining part of this work is an effectful 
extension of the CbV $\lambda$-calculus previously introduced. 
We follow the methodology of algebraic effects \cite{PlotkinP02} and 
fix a signature $\Sigma$ of effect triggering operations, 
as seen in the previous section. The calculus 
$\Lambda_{\Sigma}^{\mathsf{cbv}}$ is obtained by extending the grammar 
of the (kernel) CbV $\lambda$-calculus as follows:
\[
\Comps\ni\tm,\tmtwo \grameq 
	\cdots \mid \op{t_1, \hdots, t_n}
\]
As before, we denote by $\closed{\Comps}$ and $\closed{\Vals}$ the 
collection of closed computations and values, respectively. 
Finally, we write $\closed{\mathbb{R}}$ for the subset of $\closed{\Comps}$ of redexes, 
\ie computations of the form $(\lambda x.t)v$ or $\op{t_1, \hdots, t_n}$.

We give an operational semantics to 
$\Lambda_{\Sigma}^{\mathsf{cbv}}$ in monadic style \cite{DBLP:conf/fossacs/PlotkinP01,DBLP:conf/lics/GavazzoF21}. 
Let $(T, \eta, \mu)$ be an arbitrary but fixed monad with 
countable support. We assume that to each $n$-ary operation 
$\opfun \in \Sigma$ it is associated a $n$-ary algebraic 
operation $\gefopfun$ on $T$.

We define a function 
$\mapsto:\closed{\Comps} \to T(\closed{\Comps})$ on closed terms 
that performs a single computation step (possibly performing effects) 
by first defining ground reduction and then closing the latter under 
evaluation contexts. 
To improve readability, we write 
$t \mathrel{\tosem} e$ in place of ${\tosem}(t) = e$ and we refer to 
elements in $T(\Comps)$ (resp. $T(\Vals)$) as monadic (or effectful) computations 
(resp. values).

\begin{definition}[Operational Semantics]
We define the function $\mapstoo: \closed{\mathbb{R}} \cup \closed{\Vals} \to T(\closed{\Comps})$ as follows:
\[
\begin{array}{rcl}
	(\la\var\tm)\val&\mapstoo&\unit(\tm\isub\var\val)\\[3pt]
	\op{\tm_1,\ldots,\tm_n}&\mapstoo&\gefop{\unit(\tm_1),\ldots,\unit(\tm_n)}\\[3pt]
	\val &\mapstoo&\unit(\val)
\end{array}
\]
The function $\tosem: \closed{\Comps} \to T(\closed{\Comps})$ is then defined as the contextual closure of $\mapstoo$, \ie
$
\evctx[r] \mathrel{\tosem} e\, \bindsymbol\, (\metalambda\! u. \unit(\evctx[u]))
$, where $r$ is a redex and $r \mathrel{\mapstoo} e$.
\end{definition}

In this last definition the symbol $\metalambda$ has to be intended as a meta-lambda notation, \ie by $\metalambda\! u. \unit(\evctx[u])$ we mean the function $h:\closed{\Comps} \to T(\closed{\Comps})$ such that $h(u)\defeq\unit(\evctx[u])$. In particular, in the definition of the contextual closure we exploit the algebricity of the effects, making them commute with evaluation contexts. Moreover, notice that $\tosem$ is indeed a function. Consequently, we can rely on its Kleisli 
extension $\kleisli{\tosem}$ to reduce monadic computations. 
We write $\tosem^n$ for the $n$-iteration of $\tosem$, where 
$\tosem^0 \defeq \eta$ and $\tosem^{n+1} \defeq \tosem; \kleisli{(\tosem^n)}$. 
In a similar fashion, we write $\tosem^*$ for 
$\bigcup_n \tosem^n$.
Please notice that since algebraic operations are finitary, all monadic computations 
that a computation $t$ can achieve in finite times have finite support, meaning 
that $t \mathrel{\tosem^*} e$ implies that $\supp{e}$ is finite. 

\begin{definition}[Finitary Convergence]
    We say that a closed computation $t$ converges if there 
    exists $e \in T(\closed{\Comps})$ such that 
    $t \mathrel{\tosem^*} e$ and $\supp{e} \subseteq \closed{\Vals}$.\footnote{
    To be formally precise, here we should say that $\supp{e}$ belongs to
    the image of $\closed{\Vals}$ into $\closed{\Comps}$. In order to maintain
    the work as readable as 
    possible, we will be sloppy and here (and in similar situations) 
    simply identify $\closed{\Vals}$ and its image in $\closed{\Comps}$.} 
    In that case, we write $\sem t = e$.
\end{definition}
Please notice that $\sem\cdot$ is a \emph{partial} function. Indeed, terms can diverge.

\subsubsection*{The Monadic Type System.} The main idea behind the development of the type system is that not only terms, but also intersection types become monadic. The natural design choice is to follow the informal CbV translation of intuitionistic logic into linear logic combined with Moggi's translation:
\[
\arr{\linty}{\lintytwo}\cong\; !\linty \multimap{\mfunc(!\lintytwo)}
\]
A third level of (monadic) types is then added to the grammar of types: 
\[\begin{array}{r@{\hspace{1cm}}r@{\hspace{0.1cm}}c@{\hspace{0.1cm}}lcl}
	\textsc{Value Types} & \Tys&\ni&\linty & \grameq & \arr{\intty}{\monty} \\[3pt]
	\textsc{Intersections} & \IntTys&\ni&\intty & \grameq & 
	\int{\linty_1,...,\linty_n}\ n\geq 0\\[3pt]
	\textsc{Monadic Types} &  \monty,\montytwo&\ni&\MonTys & \grameq & \mfunc(\IntTys)\\[3pt]
 \textsc{Types} & \GTys&\ni&\gty&\grameq&\linty\grammarpipe\intty \grammarpipe\monty
\end{array}\]

\begin{figure}[t]
{\scriptsize \[
\begin{array}{c@{\hspace{0.4cm}}c@{\hspace{0.8cm}}c}
	\infer[\rax]{\tjudg{\tye,\var:\intty}\var\linty}{\linty\in\intty} & 
	\infer[\rapp]{\tjudg{\tye}{\val\tm}{\montytwo\bind{(\{\intty_i\Rightarrow\monty_i\}_{1\leq i\leq n})}}}
 	{ \left[ \tjudg{\tye}{\val}{\arr{\intty_i}{\monty_i}}\right]_{1\leq i\leq n}
 	&  \tjudg{\tye}{\tm}{\montytwo} & \supp{\montytwo}\subseteq\{\intty_1 ,...,\intty_n\}}
	\\[10pt]

	 \infer[\rabs]{\tjudg{\tye}{\la\var\tm}{\arr\intty\monty}} 
	 {\tjudg{\tye,\var:\intty}{\tm}{\monty}}
	  & 
	\infer[\rop]{\tjudg{\tye}{\op{\tm_1,\ldots,\tm_n}}{\gefop{\monty_1,\ldots,\monty_n}}} 
	{\left[
	\tjudg{\tye}{\tm_i}{\monty_i}\right]_{1\leq i\leq n}}\\[10pt]

        \infer[\rint]{ \tjudg{\tye}{\val}{\int{\linty_i}_{i\in 
	F}} }{\left[ \tjudg{\tye}{\val}{\linty_i}\right]_{i\in F}} & 
        \infer[\runit]{\tjudg{\tye}{\val}{\unit(\intty)}}{\tjudg{\tye}{\val}{\intty}}
\end{array}
\]}
\vspace{-8pt}
\caption{The monadic intersection type system.}
\vspace{-12pt}
\label{fig:mon-system}
\end{figure}

We maintain all the notations already presented in Section~\ref{sec:cbv-types} for the CbV type system. The typing rules are in Fig.~\ref{fig:mon-system}.
While rules $\rabs,\rax$ and $\rint$ are almost unchanged, the other rules deserve some comments. The rule $\runit$ is needed to give a monadic type to values. Since values do not produce any effect, they are injected into the monad just with $\unit$. Rule $\rapp$ types applications $\val\tm$ with a monadic type. The important point is that the subterm $\val$ in function position has to be typed many times, one for each element in the support of the (monadic) type of the argument $\tm$. Please notice (i) that with the notation $\{\intty_i\Rightarrow\monty_i\}_{1\leq i\leq n}$ we mean the function that maps pointwise $I_i$ to $M_i$ for each $1\leq i\leq n$; and (ii) the $\bindsymbol$ operator at the level of types.  This should not come unexpectedly, since types are monadic, and thus are composed using monadic laws. In particular, the effects produced by the argument, encoded in its type, have to be composed with the effects that will be generated by the rest of the computation (see the example below for more intuitions). Finally, rule $\rop$ types operations with the monadic type built with the corresponding algebraic operation, applied to the (monadic) types of the arguments of the operation itself.
\begin{example}
We provide in Fig.~\ref{fig:derivation2} the derivation for $\tjudg{}{(\la\var\var(\out{b}{\var}))(\out{a}{\mathsf{II}})}{(ab,\emi)}$, which is the very same term as in Example~\ref{ex:pure}, decorated with output operations. As monoid of words, we consider the monoid $\Sigma^*$ freely generated from the alphabet $\Sigma\defeq\{a,b\}$. One can notice that the assigned type contains all the information about the symbols printed on the output buffer during the evaluation of the term. Please notice how types are composed in the last rule $\rapp$. The right-hand side is typed only once, since the support of the monadic type on the left-hand side is a singleton. Notice that the bind operator $\bindsymbol:W{\times} X\to(X\to W{\times} X) \to W{\times} X$ in the case of the output monad is defined as: $(a,x)\bindsymbol f\defeq (ab,y)$ if $f(x)=(b,y)$. Intuitively, the usual composition is done, but for the fact that the strings ($a$ and $b$ in this case) are concatenated.
    \begin{figure*}[t]
{\scriptsize\[
\begin{array}{c}
\infer[\runit]{\pi:\ \tjudg{}{\la\varthree{\varthree}}{\unit(\int{\id})}}
    {   \infer[\rint]{\tjudg{}{\la\varthree{\varthree}}{\int\id}}{ \infer[\rabs]{\tjudg{}{\la\varthree{\varthree}}{\id}}{ \infer[\runit]{\tjudg{\varthree:\emi}\varthree{\unit(\emi)}}{
    \infer[\rint]{\tjudg{\varthree:\emi}\varthree\emi}{}}}
    }}\\
  \infer[\rapp]
  {\tjudg{}{(\la\var\var(\out{b}{\var}))(\out{a}{\mathsf{II}})}{(ab,\emi)}}
  { \infer[\rabs]
    {\tjudg{}{\la\var\var(\out{b}{\var})}{\int{\id}\to(b,\emi)}}
    {\infer[\rapp]{\tjudg{\var:\int{\id}}{\var(\out{b}\var)}{(b,\emi)}}
    {\infer[\rax]{\tjudg{\var:\int{\id}}\var{\id}}{}
    & \infer[\rop]{\tjudg{\var:\int{\id}}{\out b\var}{(b,\emi)}}{\infer[\runit]{\tjudg{\var:\int{\id}}{\var}{\unit(\emi)}}{\infer[\rint]{{\tjudg{\var:\int{\id}}\var\emi}}{}}}}}
  & \infer[\rop]{\tjudg{}{\out{a}{\mathsf{II}}}{(a,\int{\id})}}{\infer[\rapp]   
    {\tjudg{}{(\la\vartwo{\vartwo})(\la\varthree{\varthree})}{\unit(\int{\id})}}     
    { \infer[\rabs]
    {\tjudg{}{\la\vartwo{\vartwo}}{\int{\id}\to \unit(\int{\id})}}
    {\infer[\runit]{\tjudg{\vartwo: \int{\id}}{\vartwo}{\unit(\int\id)}}{\infer[\rint]
    {\tjudg{\vartwo: \int{\id}}{\vartwo}{\int{\id}}}
    {\infer[\rax]{\tjudg{\vartwo: \int{\id}}{\vartwo}{\id}}{}}}} & 
    \pi}}
}
\end{array}\]}
    \vspace{-8pt}
    \caption{Type derivation for $\tjudg{}{(\la\var\var(\out{b}{\var}))(\out{a}{\mathsf{II}})}{(ab,\emi)}$. We set $\id\defeq\emi\to\unit(\emi)=\emi\to(\ems,\emi)$.}
    \vspace{-8pt}
    \label{fig:derivation2}
\end{figure*}
\end{example}

\subsubsection*{Relational Reasoning.} 
To prove soundness and completeness of the monadic type system, 
we will need to reason both about expressions and monadic expressions. 
In fact, as long as we are interested in reduction sequences we have to deal both with terms, to start the sequence, and with their (monadic) reducts, to continue 
the computation. Working with monads, we have already seen that we 
can extend the dynamic semantic of $\Lambda^{\mathsf{cbv}}$ to monadic 
computations, for free. Computations, however, come both with a dynamic and 
\emph{static} semantics (\ie types); and if the former semantics (viz. $\tosem$) 
extends to monadic computation (as $\kleisli{\tosem}$), it is not immediately 
clear how to do the same with the latter. In fact, whereas $\tosem$ is a function, 
the static semantics of $\Lambda^{\mathsf{cbv}}$, given by the typing relation 
$\vdash$, is genuinely relational, meaning that we cannot rely on the axioms of a monad 
to extend it.
\subsubsection*{Relational Extensions.} We overcome this problem by relying on the notion of a relational extension of a monad 
\cite{Barr/LMM/1970,DBLP:conf/amast/BackhouseBHMVW91,DBLP:books/algebra/of/programming,carboni-kelly-wood,kawahara1973notes,Hoffman/Cottage-industry/2015}.
Remarkably, relational extensions come with powerful proof techniques, whereby
one extends term-based results (such as subject reduction and expansion, in our case) 
to monadic terms essentially. 
All of that, however, has a price: relational extensions cannot be given for all monads. 
As long as we are interested in `forward properties', such as subject reduction, 
it is enough to give a relaxed notion of relational extensions --- called lax relational extensions --- 
that is available for a large class of monads (such as all the ones seen so far). 
But if we ask for `backward properties', such as subject expansion, then we need the full 
axiomatics of a relational extension. And by a well-known result by Barr~\cite{Barr/LMM/1970}, 
we know that a monad has a relational extension if and only if it is 
\emph{weakly cartesian}~\cite{wc-1,wc-2}. 
From an equational perspective, weakly cartesian monads are defined by
affine theories~\cite{Gautam1957}, meaning that they cannot have equations that duplicate variables. 
This excludes important monads, such as the distribution and powerset monads.  
This way, one has to rely on their `linearization' to obtain well-behaved intersection type systems. In the case of the distribution and powerset monad one does so 
simply by dropping the idempotency equations from their equational theories, thus obtaining the so-called multi-distribution \cite{DBLP:journals/pacmpl/LagoFR21,varacca2003probability,DBLP:journals/mscs/VaraccaW06,DBLP:journals/scp/AvanziniLY20} 
and multiset monads \cite{multisets}. 
It is important to stress that this is not a design issue, but an intrinsic limit of the 
model, as shown by Example~\ref{ex:need-wc} (below in this section).

\subsubsection*{Lifting the Type System.}
The type system in Figure~\ref{fig:mon-system} defines 
a dependent relation
$\trelone \in \prod_{\Gamma} \mathcal{P}(\mathbb{\Lambda}_{\Gamma} \times \mathbb{G})$, 
where $\mathbb{\Lambda} \defeq \mathbb{V} + \mathbb{C}$ 
is the collection of terms of the calculus, and $\mathbb{\Lambda}_{\Gamma}$ 
is the collection of terms with free variables among $\Gamma$.
Notice that $\trelone$ respects syntactic categories, in the sense that 
since $\mathbb{G} = \mathbb{A} + \mathbb{I} + T(\mathbb{I})$, we can see 
$\trelone$ as the sum of three relations 
$\trelone_1 \in \prod_{\Gamma} \mathcal{P}(\mathbb{V}_{\Gamma} \times \mathbb{A})$, 
$\trelone_2 \in \prod_{\Gamma} \mathcal{P}(\mathbb{V}_{\Gamma} \times \mathbb{I})$, 
and $\trelone_3 \in \prod_{\Gamma} \mathcal{P}(\mathbb{C}_{\Gamma} \times T(\mathbb{I}))$. 
In the following, we shall tacitly use this decomposition.  
Moreover, since type soundness and completeness refer to programs, 
we will mostly work with $\trelone$ restricted to closed terms (\ie when $\Gamma$ is empty): 
in that case, $\trelone$ is just an ordinary binary relation. 

When instantiated to monadic types (and closed computations), the relation 
$\trel \subseteq \closed{\mathbb{C}} \times T(\mathbb{I})$ is a 
so-called monadic relation~\cite{DBLP:conf/lics/GavazzoF21}. 
Under suitable conditions on monads, 
monadic relations come with an operation similar to the Kleisli extension 
that allows them to be composed and to regard $\eta$ (seen as a relation) 
as the unit of such an operation.

\begin{definition}[Relational Extension, \hspace{1sp}\cite{Barr/LMM/1970}]
\label{def:relator}
    A \emph{relational extension} 
    of a monad $(\monad, \unit, \multiplication)$ 
    is a family of \emph{monotone} maps 
    $\relator: \rel(A,B) \to \rel(\monad A, \monad B)$ such that:
    \[
    \begin{array}{r@{\hspace{0.1cm}}c@{\hspace{0.1cm}}l@{\hspace{0.8cm}}r@{\hspace{0.1cm}}c@{\hspace{0.1cm}}l}
    \idrel &=& \relator(\idrel)
    &
    \relator(\relone); \relator(\reltwo) 
    &=& \relator(\relone; \reltwo)
    \\
    T(f) &=& \relator(f)
    &
    \dual{\relator(\relone)} &=& \relator(\dual{\relone})
    \\
    \relone; \unit &=& \unit; \relator(\relone)
    &
    \relator(\relator(\relone)); \mu 
    &=& \mu; \relator(\relone)
    \end{array}
    \]
    Replacing $=$ with $\subseteq$, we obtain 
    the notion of a \emph{lax relational extension}.
\end{definition}

Any monad $\monad$ comes with a canonical candidate relational extension: 
its Barr extension $\bext{T}$. Recall that for each relation 
$\relone: A \torel B$, we can regard $\relone$ as a set 
  $\mathcal{G}(\relone) \subseteq A \times B$. In particular, the projections 
  $\pi_1: \mathcal{G}(\relone) \to A$, $\pi_2: \mathcal{G}(\relone) \to B$ give 
  $\relone = \dual{\pi_1}; \pi_2$.

\begin{definition}[Barr Extension]
\label{def:barr-extension}
  The Barr extension $\bext{\monad}$ of $\monad$ 
  is defined as 
  $\bext{\monad}(\relone) = \dual{(\monad(\pi_1))}; \monad(\pi_2).$
  Elementwise, we have
  $
  \phi_1 \mathbin{\bext{\monad}\relone} \phi_2$ 
  iff 
  $$\exists \Phi \in \monad \mathcal{G}(\relone).\ 
  \monad(\pi_1)(\Phi) = \phi_1 \text{ and } \monad(\pi_2)(\Phi) = \phi_2.
  $$
\end{definition}

\begin{example}[Concrete Barr Exts.]
Let $\relone: A \torel B$.
\begin{varenumerate}
  \item For the powerset monad, we have 
    $u \mathbin{\bext{\powerset}{\relone}} v$ iff 
    $\forall x \in u.\exists y \in v.\ x\mathbin{\relone} y$ 
    and $\forall y \in v.\exists x \in u.\ x\mathbin{\relone} y$. 
    A similar definition holds for variations of the powerset monad.
  \item For the distribution monad, we have 
    $\phi_1 \mathbin{\bext{\mathcal{D}}{\relone}} \phi_2$ iff
    there exists $\Phi \in \mathcal{D}(A \times B)$ such that
    $\sum_y \Phi(x,y) = \phi_1(x)$, 
    $\sum_x \Phi(x,y) = \phi_2(y)$, and
    $\Phi(x,y) > 0 \implies x \mathbin{\relone} y 
    $. A similar definition holds for variations of 
    the distribution monad.
  \item The output monad, we have 
    $(a,x) \mathrel{\bext{\writer}(\relone)} (b,y)$ iff 
    $a=b$ and $x \mathrel{R} y$.
\item More generally, if a monad is presented by a theory 
    $(\Sigma, E)$, then we have $t \mathrel{\bext{\tree{\Sigma}}(\relone)} s$ 
    iff $t \eq_E C[x_1, \hdots, x_n]$, $s \eq_E C[y_1, \hdots, y_n]$, and 
    $x_i \mathrel{R} y_i$, for any $i$.
\end{varenumerate}
\end{example}

The Barr extension of a monad is not a relational extension, in general. 
However, the Barr extension of a monad is a relational extension
iff the monad is weakly cartesian~\cite{wc-1,wc-2}.

\begin{theorem}[\cite{Barr/LMM/1970}]\label{thm:barr}
    Recall that a monad $(\monad, \eta, \mu)$ is \emph{weakly cartesian} if
    (i) it preserves weak pullbacks and
    (ii) all naturality squares of $\unit$ and $\mu$ are weak pullbacks. 
    If $\monad$ is weakly cartesian, then its Barr extension is the \emph{unique}
    relational extension of $\monad$. 
   If $\monad$ preserves weak pullbacks, then its Barr extension is a lax relational extension.
\end{theorem}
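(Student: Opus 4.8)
The plan is to verify directly that the Barr extension $\bext{\monad}$ satisfies every clause of Definition~\ref{def:relator}, keeping track of precisely which hypothesis each clause consumes, and then to obtain uniqueness as a formal consequence of the axioms. Throughout I use the identification of a relation $\relone : A \torel B$ with its graph equipped with the projections $\pi_1 : \mathcal{G}(\relone) \to A$, $\pi_2 : \mathcal{G}(\relone) \to B$, so that $\relone = \dual{\pi_1};\pi_2$ and $\bext{\monad}(\relone) = \dual{(\monad\pi_1)};\monad\pi_2$.

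\textbf{The functor clauses.} The clauses $\idrel = \bext{\monad}(\idrel)$, $\monad f = \bext{\monad}(f)$ and $\dual{\bext{\monad}(\relone)} = \bext{\monad}(\dual{\relone})$ are immediate from the graph description: $\mathcal{G}(\idrel_A)$ is the diagonal with both projections isomorphisms, $\mathcal{G}(f)$ has projections $\id$ and $f$, and $\mathcal{G}(\dual{\relone})$ is $\mathcal{G}(\relone)$ with its projections interchanged. Monotonicity holds because $\relone \subseteq \reltwo$ induces an inclusion $\mathcal{G}(\relone) \hookrightarrow \mathcal{G}(\reltwo)$ over the projections and $\monad$ preserves injections --- which, as recalled in Section~\ref{sec:preliminaries}, follows from preservation of weak pullbacks. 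The substantive clause is functoriality $\bext{\monad}(\relone;\reltwo) = \bext{\monad}(\relone);\bext{\monad}(\reltwo)$. The inclusion from left to right is unconditional: the canonical map $\{(a,b,c) \mid a \mathbin\relone b \mathbin\reltwo c\} \to \mathcal{G}(\relone;\reltwo)$, $(a,b,c)\mapsto (a,c)$, is surjective, hence split by choice, and pushing a witness of $\bext{\monad}(\relone;\reltwo)$ along a section yields witnesses of $\bext{\monad}(\relone)$ and $\bext{\monad}(\reltwo)$ meeting at a common element of $\monad B$. The reverse inclusion is exactly where preservation of weak pullbacks enters: given witnesses $\Phi \in \monad\mathcal{G}(\relone)$ and $\Psi \in \monad\mathcal{G}(\reltwo)$ mapping to a common $\chi \in \monad B$, the set $\mathcal{G}(\relone) \times_B \mathcal{G}(\reltwo)$ is a (weak) pullback, so $\monad$ of it is a weak pullback of $\monad\mathcal{G}(\relone)$ and $\monad\mathcal{G}(\reltwo)$ over $\monad B$, which produces a joint lift $\Theta$; pushing $\Theta$ along $((a,b),(b,c))\mapsto(a,c)$ gives the required witness.

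\textbf{The monad clauses.} For $\relone;\unit = \unit;\bext{\monad}(\relone)$ the inclusion $\relone;\unit \subseteq \unit;\bext{\monad}(\relone)$ is witnessed by $\unit_{\mathcal{G}(\relone)}(a,b)$ using naturality of $\unit$; for the reverse inclusion one has $\Phi \in \monad\mathcal{G}(\relone)$ with $\monad\pi_1(\Phi) = \unit_A(a)$ and must recover an actual pair $(a,b) \in \mathcal{G}(\relone)$ with $\unit(a,b) = \Phi$, which is precisely what the hypothesis ``the naturality square of $\unit$ at $\pi_1$ is a weak pullback'' supplies, whence $\monad\pi_2(\Phi) = \unit_B(b)$ closes the case. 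The clause $\bext{\monad}(\bext{\monad}(\relone));\mu = \mu;\bext{\monad}(\relone)$ follows the same pattern but two applications of $\monad$ deep, and is the main obstacle. Writing $q : \monad\mathcal{G}(\relone) \twoheadrightarrow \mathcal{G}(\bext{\monad}(\relone))$ for the corestriction of $\langle\monad\pi_1,\monad\pi_2\rangle$ to its image (its codomain being the graph of $\bext{\monad}(\relone)$ by definition), one inclusion is again free: a witness of $\bext{\monad}(\bext{\monad}(\relone))$ lifts along $\monad q$ by choice, and applying $\mu$, using naturality of $\mu$ at $\pi_1$ and $\pi_2$, produces a witness of $\bext{\monad}(\relone)$ over $\mu$. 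For the other inclusion, from $\Phi \in \monad\mathcal{G}(\relone)$ with $\monad\pi_1(\Phi) = \mu_A(\Omega)$ one invokes the hypothesis that the naturality square of $\mu$ at $\pi_1$ is a weak pullback to obtain $\Theta \in \monad\monad\mathcal{G}(\relone)$ with $\mu_{\mathcal{G}(\relone)}(\Theta) = \Phi$ and $\monad\monad\pi_1(\Theta) = \Omega$; then $\monad q(\Theta)$ is a witness of $\bext{\monad}(\bext{\monad}(\relone))$ relating $\Omega$ to $\monad\monad\pi_2(\Theta)$, whose image under $\mu_B$ is $\monad\pi_2(\Phi)$, as needed. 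The bookkeeping of which square to use, and in which variable, is the delicate part; preservation of weak pullbacks (hence of injections) is what keeps $\mathcal{G}(\bext{\monad}(\relone))$ a well-behaved object throughout.

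\textbf{Uniqueness and the lax case.} Uniqueness is then formal and uses nothing about the particular monad: if $\relator$ is any relational extension, then $\relator(\relone) = \relator(\dual{\pi_1};\pi_2) = \dual{(\relator(\pi_1))};\relator(\pi_2) = \dual{(\monad\pi_1)};\monad\pi_2 = \bext{\monad}(\relone)$ by the composition, transpose and function clauses, so $\relator = \bext{\monad}$; in particular a monad admits a relational extension at all only when its Barr extension is one, \ie only when it is weakly cartesian. Finally, inspecting the arguments above under the weaker assumption that $\monad$ merely preserves weak pullbacks, all the inclusions actually required by the lax axioms go through (the functoriality clause even with equality), the two naturality equalities weakening to the unconditional inclusions established above; hence $\bext{\monad}$ is a lax relational extension. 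The only place where the full weakly cartesian hypothesis is genuinely needed is in the two monad-naturality clauses of the strict case, the multiplication clause being the one requiring the most care.
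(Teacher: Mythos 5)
The paper does not actually prove Theorem~\ref{thm:barr}: it is imported from Barr's paper as a black box, so there is no in-paper argument to compare against. Your reconstruction is correct and is essentially the classical proof. You correctly isolate which hypothesis pays for which clause: the identity, function, transpose and monotonicity clauses and one inclusion of each of the composition, unit and multiplication clauses are unconditional (modulo choosing sections of surjections); the inclusion $\bext{\monad}(\relone);\bext{\monad}(\reltwo) \subseteq \bext{\monad}(\relone;\reltwo)$ is exactly where preservation of weak pullbacks is spent, via $\monad$ applied to the pullback $\mathcal{G}(\relone)\times_B\mathcal{G}(\reltwo)$; and the remaining inclusions $\unit;\bext{\monad}(\relone) \subseteq \relone;\unit$ and $\mu;\bext{\monad}(\relone) \subseteq \bext{\monad}(\bext{\monad}(\relone));\mu$ are precisely where the weak-pullback naturality squares of $\unit$ and $\mu$ at $\pi_1$ are invoked. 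Your bookkeeping through the corestriction $q$ in the multiplication clause checks out, the uniqueness argument $\relator(\relone) = \dual{\relator(\pi_1)};\relator(\pi_2) = \dual{(\monad\pi_1)};\monad\pi_2$ is the standard formal one, and the lax case follows as you say since the lax axioms only demand the unconditional-or-WP inclusions. Two small quibbles: monotonicity needs only functoriality of $\monad$ (push a witness forward along $\monad(\iota)$; whether $\monad(\iota)$ is injective is irrelevant), so your appeal to preservation of injections there is superfluous; and your closing remark that a relational extension exists \emph{only} when $\monad$ is weakly cartesian is the converse of Barr's theorem, which your argument does not establish --- it is also not part of the statement you were asked to prove, so nothing is missing, but you should not present it as a consequence of what you wrote.
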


For brevity, we say that a monad is WC---resp. WP---if it is weakly cartesian---if it preserves weak pullbacks.

\begin{example}
    All the monads seen so far are WP. 
    The output and maybe monad, additionally, are WC, whereas 
    the powerset and distribution monads are not \cite{wc-1}, as 
    naturality squares of their unit are not weak pullbacks. 
    If a monad $T$ is presented by an affine equational theory \cite{Gautam1957} 
    $(\Sigma, E)$, 
    meaning that all equations in $E$ are affine, then 
    it is WC. Consequently, the multiset and multidistribution monad 
    are WC.
\end{example}

Given a monad $\monad$ and 
a monadic relation $\relone \subseteq A \times \monad(B)$, 
we define its Kleisli extension $\kleisli{\relone} \subseteq \monad(A) \times \monad(B)$ 
as $\bext{\monad}(\relone); \mu$. Using Kleisli extension, we define the 
composition of monadic relations $\relone \subseteq A \times \monad(B)$ and 
$\reltwo \subseteq B \times \monad(C)$ as the relation 
$\relone \mcomp \reltwo \subseteq A \times \monad(C)$ defined as 
$\relone; \kleislirel{\reltwo}$. 
If $\monad$ is WC (resp. WP), $\mcomp$ is (lax) associative and has $\eta$ as 
(lax) unit \cite{DBLP:conf/lics/GavazzoF21,hofmann2014monoidal}. 
Using the Kleisli extension we can design
abstract proof techniques ensuring that properties of 
$\trel$ with respect to the
one-step semantics $\tosem$ can be lifted to 
$\kleisli{\trel}$ and $\kleisli{\tosem}$.

\begin{proposition}[\cite{Barr/LMM/1970}]
Let $\relone: A \torel B$ be a relation and $\relator$ be a 
lax relational extension of a monad $T$. Then, 
(i) $\relator(\relone)$ is closed under algebraic operation; 
(ii) $\relator(\relone)$ is closed under monadic binding: 
$R;g  \subseteq f ; \Phi(S)$  implies  $\Phi(R);\kleisli{g} \subseteq \kleisli{f} ; \Phi(S)$. 
\end{proposition}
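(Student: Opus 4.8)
\emph{Proof strategy.} The plan is to prove (ii) first, by a purely relational calculation that uses only the \emph{lax} axioms of Definition~\ref{def:relator}, and then to obtain (i) as a corollary of (ii) via the Plotkin--Power correspondence between algebraic operations and generic effects.

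The one preliminary fact needed is that a lax extension, despite satisfying its axioms only up to $\subseteq$, is still compatible with functions on either side: for every function $f$ and every relation $W$ one has $\relator(f;W)\subseteq T(f);\relator(W)$ and $\relator(W;f)\subseteq\relator(W);T(f)$. Both are short relational calculations: to get the first, pre-compose $\relator(f;W)$ with $T(f);\dual{T(f)}\supseteq\idrel$ (legitimate since $T(f)$ is a total function), then use the transpose axiom to replace $\dual{T(f)}$ by $\relator(\dual f)$, apply lax functoriality to merge $\relator(\dual f);\relator(f;W)$ into $\relator(\dual f;f;W)$, and finish with $\dual f;f\subseteq\idrel$ and monotonicity; the second is symmetric. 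With this in hand, (ii) is the chain
\begin{align*}
\relator(\relone);\kleisli{g}
 &= \relator(\relone);T(g);\mu
  \subseteq \relator(\relone;g);\mu
  \subseteq \relator(f;\relator(\reltwo));\mu \\
 &\subseteq T(f);\relator(\relator(\reltwo));\mu
  \subseteq T(f);\mu;\relator(\reltwo)
  = \kleisli{f};\relator(\reltwo),
\end{align*}
where the first inclusion uses $T(g)\subseteq\relator(g)$ and lax functoriality, the second uses the hypothesis $\relone;g\subseteq f;\relator(\reltwo)$ and monotonicity, the third is the preliminary fact, and the fourth is the lax multiplication axiom $\relator(\relator(\reltwo));\mu\subseteq\mu;\relator(\reltwo)$. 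It is essential here that $g$ and $f$ are Kleisli arrows, i.e. honest functions, since that is exactly what makes the preliminary fact applicable; in particular, no weak-cartesianness is required for this forward direction.

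For (i), let $\alpha$ be an $n$-ary algebraic operation on $T$, put $\bar n=\{1,\dots,n\}$, and let $g_\alpha\defeq\alpha_{\bar n}(\unit(1),\dots,\unit(n))\in T(\bar n)$ be its generic effect. Since $\alpha$ is natural and respects $\mu$, the Kleisli extension $\kleisli{h}$ commutes with $\alpha$ for every $h:\bar n\to T(A)$, hence $\alpha_A(t_1,\dots,t_n)=\kleisli{h}(g_\alpha)$ when $h(i)\defeq t_i$. Now assume $(x_i,y_i)\in\relator(\relone)$ for $i=1,\dots,n$ and set $h(i)\defeq x_i$, $k(i)\defeq y_i$; this is precisely the inclusion $\idrel_{\bar n};k\subseteq h;\relator(\relone)$. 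Applying (ii) with its $R,g,f,S$ instantiated to $\idrel_{\bar n},k,h,\relone$ respectively, and using the lax identity axiom $\idrel_{T\bar n}\subseteq\relator(\idrel_{\bar n})$, we obtain $\kleisli{k}\subseteq\kleisli{h};\relator(\relone)$; evaluating both sides at $g_\alpha$ and using that $\kleisli{h}$ is a function gives $(\kleisli{h}(g_\alpha),\kleisli{k}(g_\alpha))=(\alpha_A(x_1,\dots,x_n),\alpha_A(y_1,\dots,y_n))\in\relator(\relone)$, which is exactly closure of $\relator(\relone)$ under $\alpha$.

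The main obstacle — and the reason the statement is not entirely routine — is the one-directional nature of the lax axioms: one cannot freely split $\relator$ of a composite. The crux is to route every such split through a \emph{function}, where the inclusion we need is available, and to invoke the multiplication axiom only in the direction in which it is stated; getting the order of these rewrites right, and tracking carefully which arrows are genuine functions (Kleisli arrows) and which are honest relations, is where essentially all the care goes. A secondary point is the generic-effect reformulation of ``$\relator(\relone)$ is closed under $\alpha$'', which is what lets (i) be reduced to (ii) instead of needing its own calculation.
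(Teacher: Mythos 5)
Your proof is correct, but note that the paper itself offers no proof of this proposition --- it is imported wholesale from Barr --- so the only comparison available is with the style of the neighbouring lifting lemmas (Propositions~\ref{prop:subject-expansion-lifting} and~\ref{prop:subject-reduction-lifting}), whose point-free relational calculations your chain for (ii) closely mirrors; your reduction of (i) to (ii) via generic effects likewise matches the paper's own later use of generic effects to present $\kleisli{\vdash}$. Two remarks. First, the second half of your preliminary fact, $\relator(W;f)\subseteq\relator(W);T(f)$, is \emph{not} obtained by a symmetric argument and is in fact false for general lax extensions: the mirrored manoeuvre would require $\idrel\subseteq\dual{T(f)};T(f)$, i.e.\ surjectivity of $T(f)$ rather than totality, and the lax extension sending every relation to the total relation (which satisfies all the $\subseteq$-axioms) violates the inclusion whenever $T(f)$ is not surjective. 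Fortunately you never use this half --- your chain only whiskers a function on the \emph{left}, where $\relator(f;W)\subseteq T(f);\relator(W)$ does hold by exactly the totality/single-valuedness argument you give --- so the claim should simply be deleted or restricted, but nothing downstream breaks. Second, in (i) the final pair should read $(\alpha_A(x_1,\dots,x_n),\alpha_B(y_1,\dots,y_n))$; as written the second component carries the wrong subscript. Everything else --- the direction in which each lax axiom is invoked, the verification that $\kleisli{h}(g_\alpha)=\alpha_A(h(1),\dots,h(n))$ from naturality of $\alpha$ and $\unit$ together with compatibility with $\mu$, and the recasting of closure under $\alpha$ as the inclusion $\idrel_{\bar n};k\subseteq h;\relator(\relone)$ --- checks out.
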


The next result will be crucial to prove subject expansion.

\begin{proposition}
\label{prop:subject-expansion-lifting}
Let $\monad$ be WC. Then:
    $f; \kleislirel{\relone} \subseteq \reltwo \implies 
    \kleisli{f}; \kleislirel{\relone} \subseteq \kleislirel{\reltwo}$. 
\end{proposition}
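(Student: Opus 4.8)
The plan is to deduce the proposition from the single \emph{equality} of monadic relations $\kleisli{f};\kleislirel{\relone} = \kleislirel{(f;\kleislirel{\relone})}$, which is essentially the associativity of the Kleisli composition $\mcomp$ of monadic relations and which is available here precisely because $\monad$ is WC: by Theorem~\ref{thm:barr} its Barr extension $\bext{\monad}$ is then a genuine (not merely lax) relational extension, so all the equational laws of Definition~\ref{def:relator} hold. Granting this equality, the statement is immediate: since $\bext{\monad}$ is monotone and postcomposition with $\mu$ is monotone, the operation $\kleislirel{(-)} = \bext{\monad}(-);\mu$ is monotone, so the hypothesis $f;\kleislirel{\relone}\subseteq\reltwo$ yields $\kleislirel{(f;\kleislirel{\relone})}\subseteq\kleislirel{\reltwo}$, hence $\kleisli{f};\kleislirel{\relone}\subseteq\kleislirel{\reltwo}$. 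Note that $f;\kleislirel{\relone}$ is again a monadic relation in $\rel(A,\monad(C))$, so $\kleislirel{(f;\kleislirel{\relone})}$ is well-defined.

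For the equality I would carry out an explicit diagram chase, writing $\relator$ for $\bext{\monad}$ and unfolding the two Kleisli extensions via $\kleisli{f} = \relator(f);\mu$ and $\kleislirel{\relone} = \relator(\relone);\mu$. Starting from $\kleisli{f};\kleislirel{\relone} = \relator(f);\mu;\relator(\relone);\mu$, the steps are: (i) replace $\mu;\relator(\relone)$ by $\relator(\relator(\relone));\mu$, using the naturality law $\relator(\relator(\relone));\mu = \mu;\relator(\relone)$ of Definition~\ref{def:relator}; (ii) contract $\relator(f);\relator(\relator(\relone))$ to $\relator(f;\relator(\relone))$ by functoriality of $\relator$; (iii) replace the remaining $\mu;\mu$ by $\monad(\mu);\mu = \relator(\mu);\mu$, using the monad law $\monad(\mu);\mu = \mu;\mu$ together with $\relator(g) = \monad(g)$ for a function $g$; (iv) contract $\relator(f;\relator(\relone));\relator(\mu)$ to $\relator(f;\relator(\relone);\mu) = \relator(f;\kleislirel{\relone})$ by functoriality again; (v) recognize $\relator(f;\kleislirel{\relone});\mu$ as $\kleislirel{(f;\kleislirel{\relone})}$ by definition. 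The silent reassociations of $;$ are routine.

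The point I expect to matter is that \emph{every} one of steps (i)--(v) must be a genuine equality, and this is exactly what weak cartesianness buys us. If $\monad$ were only WP, so that $\relator = \bext{\monad}$ is merely a lax relational extension, then the naturality and functoriality laws hold only as inclusions, and one checks that these inclusions degrade in non-matching directions: both $\kleisli{f};\kleislirel{\relone}$ and $\kleislirel{(f;\kleislirel{\relone})}$ end up only bounded below by $\relator(f);\relator(\relator(\relone));\mu;\mu$, and no inclusion between the two can be derived. Thus the weak-cartesianness hypothesis is used in an essential way, consistently with the fact that this lemma is the abstract engine behind subject expansion, which itself fails outside the weakly cartesian monads.
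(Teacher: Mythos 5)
Your proof is correct and follows essentially the same route as the paper: the paper's own argument is exactly the chain $\kleisli{f};\kleislirel{\relone} = \monad f;\mu;\relator\relone;\mu = \monad f;\relator\relator\relone;\mu;\mu = \monad f;\relator\relator\relone;\monad\mu;\mu = \relator(f;\kleislirel{\relone});\mu \subseteq \relator\reltwo;\mu = \kleislirel{\reltwo}$, which is your steps (i)--(v) followed by monotonicity of $\relator$. The only cosmetic difference is that you isolate the intermediate equality $\kleisli{f};\kleislirel{\relone} = \kleislirel{(f;\kleislirel{\relone})}$ as a named step before applying monotonicity of $\kleislirel{(-)}$, whereas the paper applies the hypothesis inline.
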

In particular, taking both $\relone$ and $\reltwo$ as 
the typability relation $\trel \subseteq \Comps \times \monad(\IntTys)$ and 
as $f$ the one-step semantic function $\tosem: \Comps \to \monad(\Comps)$, we see that 
$\tosem; \kleislirel{\trel} \subseteq \trel$ states that whenever 
we have a term $\tm$ with $\tm\, \tosem\, e$ and a monadic type $M$ with 
$\kleisli{\vdash} e: M$, then
$\vdash \tm: M$.
This is exactly the statement of the subject expansion theorem at the level of term-based 
evaluation that we shall prove in the next section: 
if $\tm \mathrel{\tosem} e$ and $\kleisli{\vdash} e : M$, then
$\vdash \tm: M$.
Prop.~\ref{prop:subject-expansion-lifting} then implies that subject expansion can be 
extended to full monadic reduction $\kleisli{\tosem}$: 
if $e \mathrel{\kleisli{\tosem}} e'$ and $\kleisli{\vdash}e' : M$, then
$\kleisli{\vdash} e : M$.

Obviously, Proposition~\ref{prop:subject-expansion-lifting} still requires us 
to prove $\tosem; \kleislirel{\trel} \subseteq \trel$, and we would like to 
do so syntactically. Although natural, this relational extension is not always possible. The problem lies 
in the fact that if we assign a monadic type $M$ to an element of the form 
$\unit(t)$ via $\kleislirel{\trel}$, there is no guarantee that $t$ itself 
has type $M$. This becomes problematic when dealing with values. 
Since a value $v$ (regarded as a computation) reduces to $\unit(v)$ and 
our monadic type system assigns only types of the form $\unit(I)$ to $v$, 
provided that $\vdash v: I$,
we need to ensure that any type $M$ such that $\kleislirel{\trel} \unit(v) : M$ 
is itself a type of $v$, and hence of the form  $\unit(I)$. This, however, is not always the case.

\begin{example}
\label{ex:need-wc}
    Let us consider the distribution monad $\mathcal{D}$ and 
    recall that its unit maps a point to its Dirac distribution. 
    Let $v$ be a value such that $\vdash v: A$ and $\vdash v: B$, so that 
    $\vdash v: \{A\}$ and $\vdash v: \{B\}$. 
    By the very definition of the monadic type system, the computation induced by the value $v$ 
    can only have monadic types of the form $\unit(I)$ (\ie Dirac distributions 
    $1 \cdot I$). 
    Yet, the lifted relation $\kleislirel{\vdash}$ gives 
    $\kleislirel{\trel} \unit(v) : \frac{1}{2} \cdot \{A\} + \frac{1}{2} \cdot \{B\}$, since 
    $\unit(v) = 1 \cdot v = \frac{1}{2} \cdot v + \frac{1}{2} \cdot v$ 
    and $\vdash v: \{A\}$ and $\vdash v: \{B\}$ entail 
    $\kleislirel{\vdash} 1\cdot v: 1 \cdot \{A\}$ and $\kleislirel{\vdash} 1\cdot v: 1 \cdot \{B\}$. 
    Consequently, we have $\kleislirel{\vdash} \unit(v) : \frac{1}{2} \cdot \{A\} + \frac{1}{2} \cdot \{B\}$ 
    but we cannot have $\vdash v : \frac{1}{2} \cdot \{A\} + \frac{1}{2} \cdot \{B\}$.
\end{example}

The ultimate source of the problem outlined in the above example 
is that the unit of $\mathcal{D}$ is not weakly cartesian.

\begin{proposition}
\label{prop:subject-expansion-eta} 
  Let $\monad$ be WC. For any monadic relation  
    $\relone \subseteq A \times \monad(B)$, we have 
    $\eta; \kleislirel{\relone} = \relone$. 
    In particular, $\eta; \kleislirel{\trel} = {\trel}$.
\end{proposition}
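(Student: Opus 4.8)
The plan is to prove the identity equationally, leaning on the fact that a WC monad has, by Theorem~\ref{thm:barr}, the Barr extension $\bext{\monad}$ as a genuine (not merely lax) relational extension, so that all the equations of Definition~\ref{def:relator} hold with $\relator = \bext{\monad}$. The only axiom I actually need is the unit law $\relone' ; \unit = \unit ; \relator(\relone')$. I would instantiate it at the relation $\relone' := \relone \colon A \torel \monad(B)$ itself --- that is, reading the codomain object of the law as $\monad(B)$ --- which yields $\eta_A ; \bext{\monad}(\relone) = \relone ; \eta_{\monad(B)}$. Everything else is just unfolding definitions.

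Concretely, using $\kleislirel{\relone} = \bext{\monad}(\relone) ; \mu$, I would compute
\[
\eta ; \kleislirel{\relone} \;=\; \eta ; \bext{\monad}(\relone) ; \mu \;=\; \relone ; \eta ; \mu \;=\; \relone ; \idrel \;=\; \relone ,
\]
where the second equality is the unit law of the relational extension just recalled, and the third is the monad unit law $\eta ; \mu = \idrel$ in the form $\eta_{\monad(B)} ; \mu_B = \idrel_{\monad(B)}$. The final statement $\eta ; \kleislirel{\trel} = {\trel}$ then follows at once, since $\trel \subseteq \closed{\Comps} \times \monad(\IntTys)$ is a monadic relation.

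Given that the argument is a three-line equational chain, there is no genuinely hard step; the points requiring attention are purely a matter of bookkeeping --- making sure that the $\eta ; \mu$ being cancelled is $\eta_{\monad(B)} ; \mu_B$ (not $\eta_A ; \mu$ or $\monad(\eta_B) ; \mu$), and that the unit law is applied with codomain object $\monad(B)$ --- and, conceptually, noticing \emph{where} weak cartesianness enters: it is exactly what promotes the relevant law from the lax inclusion $\relone ; \unit \subseteq \unit ; \relator(\relone)$ (which gives only $\relone \subseteq \eta ; \kleislirel{\relone}$, and holds already for WP monads) to the equality. As a sanity check one can reprove the statement elementwise through the Barr extension: from $\eta_A(a) \mathrel{\bext{\monad}\relone} s$, the weak-pullback property of the naturality square of $\eta$ at $\pi_1 \colon \mathcal{G}(\relone) \to A$ forces the witness $\Phi \in \monad\mathcal{G}(\relone)$ to be $\eta(a,t)$ with $a \mathrel{\relone} t$, hence $s = \eta_{\monad(B)}(t)$ and $\mu_B(s) = t$; this also makes visible that the failure of the unit of $\mathcal{D}$ to be weakly cartesian, witnessed by Example~\ref{ex:need-wc}, is precisely the obstruction ruling out the non-WC case.
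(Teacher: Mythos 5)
Your proof is correct and is essentially identical to the paper's: both compute $\eta;\kleislirel{\relone} = \eta;\relator\relone;\mu = \relone;\eta;\mu = \relone$ using the unit law of the relational extension followed by the monad unit law. The additional remarks on where weak cartesianness is used (promoting the lax inclusion to an equality) and the elementwise sanity check are accurate but not part of the paper's argument.
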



The techniques seen so far have been designed to prove subject expansion 
of the monadic type system. As expected, we are also interested in proving 
subject reduction and thus it is natural to design similar 
proof techniques in that setting. This can be easily done following the 
same path as for subject expansion, but with a main difference: subject reduction 
does not require the unit of the monad to be WC, and hence 
subject reduction results can be proved for a much larger class of monads.\footnote{
Notice that even if $\unit$ is not WC, we still have 
$\relone; \eta \subseteq \eta; \relator\relone$ (but not the other inclusion, which is 
crucial in Proposition~\ref{prop:subject-expansion-eta}).}
 \begin{proposition}
\label{prop:subject-reduction-lifting}
Let $\monad$ be WP. Then:
    $\relone^{\circ}; f \subseteq {\kleislirel{\relone}}^{\circ} \implies 
    {\kleislirel{\relone}}^{\circ}; \kleisli{f} \subseteq {\kleislirel{\relone}}^{\circ}$. 
\end{proposition}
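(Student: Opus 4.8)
The plan is to run the entire argument inside the relational calculus of the Barr extension $\bext{\monad}$, which by Theorem~\ref{thm:barr} is a lax relational extension of $\monad$ since $\monad$ is WP. The first move is to dualise. Since taking transposes is monotone and satisfies $\dual{(\dual{\relone})} = \relone$ and $\dual{(\relone;\reltwo)} = \dual{\reltwo};\dual{\relone}$, the hypothesis $\dual{\relone};f \subseteq \dual{(\kleislirel{\relone})}$ is equivalent to $\dual{f};\relone \subseteq \kleislirel{\relone}$, and the goal $\dual{(\kleislirel{\relone})};\kleisli{f} \subseteq \dual{(\kleislirel{\relone})}$ is equivalent to $\dual{(\kleisli{f})};\kleislirel{\relone} \subseteq \kleislirel{\relone}$; so it is enough to derive the latter from the former. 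Next I would unfold the two Kleisli extensions: by definition $\kleislirel{\relone} = \bext{\monad}(\relone);\mu$, and since $f$ is a function, $\kleisli{f} = \monad(f);\mu = \bext{\monad}(f);\mu$, hence $\dual{(\kleisli{f})} = \dual{\mu};\bext{\monad}(\dual{f})$ by self-duality of $\bext{\monad}$. Thus the goal reduces to
\[
\dual{\mu};\bext{\monad}(\dual{f});\bext{\monad}(\relone);\mu \;\subseteq\; \bext{\monad}(\relone);\mu .
\]

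From here the proof is a short chain of inclusions, carried out left-to-right inside the composite. Lax functoriality of $\bext{\monad}$ gives $\bext{\monad}(\dual{f});\bext{\monad}(\relone) \subseteq \bext{\monad}(\dual{f};\relone)$; monotonicity of $\bext{\monad}$ together with the dualised hypothesis $\dual{f};\relone \subseteq \bext{\monad}(\relone);\mu$ then gives $\bext{\monad}(\dual{f};\relone) \subseteq \bext{\monad}(\bext{\monad}(\relone);\mu)$. Since $\mu$ is a function, $\bext{\monad}$ coincides with $\monad$ on it and composes with it strictly rather than merely laxly, so $\bext{\monad}(\bext{\monad}(\relone);\mu) = \bext{\monad}(\bext{\monad}(\relone));\monad(\mu)$, and the monad law $\monad(\mu);\mu = \mu;\mu$ turns the trailing $\monad(\mu);\mu$ into $\mu;\mu$. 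The multiplication axiom of a lax relational extension, $\bext{\monad}(\bext{\monad}(\relone));\mu \subseteq \mu;\bext{\monad}(\relone)$, then pushes the inner $\mu$ across, and finally $\dual{\mu};\mu \subseteq \idrel$ — which holds simply because $\mu$ is a function — absorbs the leading $\dual{\mu};\mu$. Chaining these, the left-hand side collapses to $\bext{\monad}(\relone);\mu = \kleislirel{\relone}$, which is what we wanted. Instantiating $\relone = {\trel}$ and $f = {\tosem}$, the hypothesis is exactly term-level subject reduction — if $\vdash t : M$ and $t \tosem e$ then $\kleisli{\vdash} e : M$ — and the conclusion lifts it to full monadic reduction $\kleisli{\tosem}$.

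The step I expect to require the most care is not any individual rewriting but the direction-bookkeeping: every axiom invoked must be used in the $\subseteq$-direction exactly as written in Definition~\ref{def:relator} — lax functoriality $\bext{\monad}(\relone);\bext{\monad}(\reltwo) \subseteq \bext{\monad}(\relone;\reltwo)$ and the multiplication inequality $\bext{\monad}(\bext{\monad}(\relone));\mu \subseteq \mu;\bext{\monad}(\relone)$ — together with self-duality of $\bext{\monad}$, its strictness on functions, the monad laws, and $\dual{\mu};\mu \subseteq \idrel$. This is precisely why WP is enough here, in contrast to the subject-expansion lifting (Proposition~\ref{prop:subject-expansion-lifting}), which genuinely needs WC: the expansion argument has to traverse these axioms in the opposite direction, and in particular use the converse of the unit axiom (cf.\ Proposition~\ref{prop:subject-expansion-eta} and the failure exhibited in Example~\ref{ex:need-wc}), which is available only for weakly cartesian monads. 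Intuitively, subject reduction only ever glues information forward along $\mu$ — captured exactly by $\dual{\mu};\mu \subseteq \idrel$ — whereas expansion would have to un-glue it. Aside from this, the remaining verifications, in particular that the sources and targets of the composed relations line up throughout, are routine.
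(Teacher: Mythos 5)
Your proposal is correct and follows essentially the same route as the paper's own proof: dualise the hypothesis and goal, unfold both Kleisli extensions, and then collapse $\dual{\mu};\bext{\monad}(\dual{f});\bext{\monad}(\relone);\mu$ via lax functoriality, monotonicity plus the dualised hypothesis, the monad law $\monad(\mu);\mu=\mu;\mu$, the multiplication inequality of the lax extension, and finally $\dual{\mu};\mu\subseteq\idrel$. The paper states a couple of these steps as equalities where you use only the inclusion direction, but the inclusions you invoke are exactly the ones needed and available for the Barr extension of a WP monad, so the two arguments coincide.
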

In particular, we can instantiate Proposition~\ref{prop:subject-reduction-lifting} 
with $\relone$ and $f$ as $\trel$ and $\tosem$, hence obtaining 
$\trel^{\circ}; \tosem \subseteq {\kleislirel{\trel}}^{\circ} \implies 
    {\kleislirel{\trel}}^{\circ}; \kleisli{\tosem} \subseteq {\kleislirel{\trel}}^{\circ}$,
meaning that whenever subject reduction holds at the level of terms
(\ie $\vdash \tm: M \mathrel{\&} \tm \mathrel{\tosem} e \implies \kleisli{\vdash} e : M$), 
then it holds at the level of their (monadic) evaluation 
(\ie $\kleislirel{\vdash} e: M$ and $e \mathrel{\kleisli{\tosem}} e'$ 
implies $\kleisli{\vdash} e' : M$).

\subsubsection*{Soundness.}
The proof of soundness of the type system consists in showing:
\begin{varenumerate}
    \item \emph{Subject reduction:} types are preserved by reduction.
    \item \emph{Termination:} all typable terms terminate.
\end{varenumerate}
As we have seen, by Proposition~\ref{prop:subject-reduction-lifting}, it is sufficient to prove subject reduction 
with respect to the single-step, term-based reduction $\tosem$. This is
done by induction on the structure of evaluation contexts, with the help of a substitution lemma, proved by induction on the structure of terms.
\begin{proposition}[Subject Reduction]\label{prop:sub-red}
\mbox{}
Let $\monad$ be WP. Then:
\begin{varenumerate}
    \item  Let $\tm$ be a closed $\lambda$-term. If $\tjudg{}{\tm}{\monty}$ and $\tm\,\tosem\,\etm$, then 
	    $\ltjudg{}{\etm}{\monty}$.
	\item Let $e$ be a monadic closed $\lambda$-term. If $\kleislirel{\vdash} e: M$ 
	    and $e \mathrel{\kleisli{\tosem}} e'$, then $\kleisli{\vdash} e' : M$.
	\end{varenumerate}
\end{proposition}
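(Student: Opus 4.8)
The plan is to establish part (1) directly and syntactically, and then to obtain part (2) as a formal consequence via Proposition~\ref{prop:subject-reduction-lifting}. For part (2): instantiate Proposition~\ref{prop:subject-reduction-lifting} with $\relone \defeq {\trel}$ and $f \defeq {\tosem}$. The hypothesis needed is $\trel^{\circ}; \tosem \subseteq {\kleislirel{\trel}}^{\circ}$, which unfolds exactly to the statement of part (1) (read contravariantly: $\vdash \tm : M$ and $\tm \tosem \etm$ imply $\kleisli{\vdash}\etm : M$). Since $\monad$ is WP, Theorem~\ref{thm:barr} gives that $\bext{\monad}$ is a lax relational extension, which is what Proposition~\ref{prop:subject-reduction-lifting} requires; its conclusion ${\kleislirel{\trel}}^{\circ}; \kleisli{\tosem} \subseteq {\kleislirel{\trel}}^{\circ}$ is precisely part (2). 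So the entire content of the proof is part (1).

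For part (1), I would first prove a \emph{substitution lemma}: if $\tjudg{\tye, \var : \intty}{\tm}{\gty}$ and for every $\linty \in \intty$ we have $\tjudg{\tye}{\val}{\linty}$, then $\tjudg{\tye}{\tm\isub\var\val}{\gty}$ (stated uniformly for $\gty$ a value type, an intersection, or a monadic type). This goes by induction on the derivation of $\tm$: the $\rax$ case uses the hypothesis on $\val$ (or is trivial when the variable is not $\var$); the $\rint$ and $\runit$ cases distribute the intersection/$\unit$ over the premises and apply the \ih; the $\rabs$ case pushes the substitution under the binder (using $\alpha$-renaming so $\var$ is not captured) and invokes the \ih; the $\rapp$ and $\rop$ cases apply the \ih to each premise and reassemble the type using the same $\bindsymbol$ (resp. $\gefopfun$) combination, noting that substitution does not alter the support constraint $\supp{\montytwo} \subseteq \{\intty_1,\dots,\intty_n\}$ since the type is unchanged.

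Then I would prove the one-step statement $\vdash \tm : \monty$ and $\tm \tosem \etm$ imply $\kleisli{\vdash}\etm : \monty$ by induction on the evaluation context $\evctx$ witnessing $\tm = \evctxp{\tmthree}$ with $\tmthree$ a redex and $\tmthree \mapstoo \etmtwo$, so $\etm = \etmtwo \bind (\metalambda u.\,\unit(\evctxp{u}))$. In the base case $\evctx = \ctxhole$ there are two redex shapes. For $\tmthree = (\la\var\tm')\val$: inverting the typing derivation (which must end in $\rapp$) gives $\tjudg{}{\la\var\tm'}{\arr{\intty_i}{\monty_i}}$ for each $i$ and $\tjudg{}{\val}{\montytwo}$ with $\supp{\montytwo} \subseteq \{\intty_i\}$; since $\val$ is a value, its monadic type is $\unit(\intty_j)$ for some $j$ (inverting $\runit$), so $\montytwo = \unit(\intty_j)$ and $\montytwo \bind (\{\intty_i \Rightarrow \monty_i\}) = \monty_j$; inverting $\rabs$ gives $\tjudg{\var:\intty_j}{\tm'}{\monty_j}$, and from $\tjudg{}{\val}{\intty_j}$ (via $\rint$-inversion, one derivation per $\linty \in \intty_j$) the substitution lemma yields $\tjudg{}{\tm'\isub\var\val}{\monty_j}$; finally $\etm = \unit(\tm'\isub\var\val)$ reduces via $\kleisli{\vdash} \unit(\cdot)$ to the same judgment since $\bext{\monad}$ commutes appropriately with $\unit$ (or directly: $\kleislirel{\vdash}\unit(\tm'\isub\var\val):\monty_j$ because $\vdash \tm'\isub\var\val : \monty_j$ and one step of the unit laws). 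For $\tmthree = \op{\tm_1,\dots,\tm_n}$: inverting $\rop$ gives $\tjudg{}{\tm_i}{\monty_i}$ and $\monty = \gefop{\monty_1,\dots,\monty_n}$; since $\tmthree \mapstoo \gefop{\unit(\tm_1),\dots,\unit(\tm_n)}$ and each $\kleislirel{\vdash}\unit(\tm_i):\monty_i$, closure of $\bext{\monad}(\vdash)$ under the algebraic operation $\gefopfun$ (Proposition~\cite{Barr/LMM/1970}, item (i)) yields $\kleisli{\vdash}\gefop{\unit(\tm_1),\dots,\unit(\tm_n)}:\gefop{\monty_1,\dots,\monty_n}$. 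For the inductive step $\evctx = \val\,\evctx'$: inverting $\rapp$ on $\tjudg{}{\val\,\evctxp{\tmthree}}{\monty}$, the subterm $\evctxp{\tmthree}$ has some monadic type $\montytwo$, the \ih gives $\kleisli{\vdash}\etmtwo':\montytwo$ where $\evctxp{\tmthree} \tosem \etmtwo'$, and then closure under monadic binding (Proposition~\cite{Barr/LMM/1970}, item (ii)) with the typed function $u \mapsto \val\,u$ propagates the type through the context, matching the $\bindsymbol$-composition in the conclusion of $\rapp$.

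The main obstacle I expect is the bookkeeping around the $\rapp$ rule and the support side-condition: one must check that the way the type $\montytwo \bind (\{\intty_i \Rightarrow \monty_i\})$ decomposes after one reduction step is exactly mirrored by the $\bindsymbol$ in the monad, i.e. that the "static bind" at the level of types is compatible with the "dynamic bind" $\etmtwo \bind (\metalambda u.\unit(\evctxp{u}))$ at the level of monadic computations. This is where the abstract closure-under-binding lemma for lax relational extensions does the real work, and getting the typed function $g$ (sending $u$ to $\val u$ with the appropriate family of arrow types) to match the hypotheses $R;g \subseteq f;\Phi(S)$ of that lemma is the delicate point; everything else is a routine rule-by-rule inversion.
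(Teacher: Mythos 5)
Your proposal matches the paper's proof essentially step for step: part (2) is obtained by instantiating Proposition~\ref{prop:subject-reduction-lifting}, and part (1) by a substitution lemma followed by induction on evaluation contexts, with the same inversion of $\runit$ forcing $\montytwo=\unit(\intty)$ (hence $n=1$) in the $\beta$-case, closure under algebraic operations in the $\rop$ case, and compatibility of the static and dynamic $\bindsymbol$ in the $\evctx=\val\evctx'$ case (which the paper checks by direct computation with the generic-effect representation $\gef{\montytwo_1,\ldots,\montytwo_n}$ rather than by invoking the abstract closure-under-binding lemma, an inessential difference). The only omission is the third base case $\val \mathrel{\mapstoo} \unit(\val)$ --- the semantics also steps on bare closed values, not only on the two redex shapes --- which the paper dispatches with a single application of $\rextunit$.
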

Proving termination instead needs more work.
\subsubsection*{Effectful Observations.} Knowing that typing is preserved by reduction, it remains to show that 
whenever a computation $t$ has type $M$, its observable operational 
behaviour is fully captured by $M$. 
In the pure case, such a behaviour is just termination, so that 
one usually shows that typable terms terminate. In the effectful setting, 
termination can be given in many forms. First, if effects capture some forms of 
nondeterminism, meaning that elements in $\monad \mathbb{\Lambda}$ may have more 
than one element in their support, then termination can be divided into 
\emph{may} or \emph{must} termination (\ie whether term reaches monadic expressions 
with one, at least, or all values in their support). In both of these cases, termination 
remains a boolean notion (viz. a predicate). To account for effects it is natural to 
ask not only whether a computation terminates, but also which effects are produced during 
evaluation (\eg what is stored in memory locations, which are the printed outputs, the cost of the 
computation, etc). A further option is to make termination effectful itself, a well-known 
example of effectful termination being almost-sure termination (\ie probability of convergence). 
Such notions are usually infinitary and require non-boolean reasoning. 

Since here we deal with the finitary case, we agree to observe must termination of computations as well as
the effects produced during their evaluation. In the next section, we shall deal 
with infinitary evaluation and, consequently, with effectful termination. 
Let us begin by formalising how to observe effects. In a monadic setting, 
it is customary \cite{gavazzo/LICS/2017,lago-gavazzo-ictcs,lago-gavazzo-tcs-20,DBLP:journals/toplas/SimpsonV20}, to model (effectful) observables as elements of 
$\monad(X)$, where $X$ is what is observable of 
expressions. As we are interested in must termination, only values 
are observable, and, moreover, they cannot be scrutinized further 
 (\ie we observe that a computation 
gives a result  (a value), but we cannot inspect such a result\footnote{This is standard in weak, untyped $\l$-calculi. One could add constants, such as booleans, or numerals, and then observe their shape, in a straightforward way.}). 

\begin{definition}
\label{def:observation}
    We define the observation function for monadic objects in $T(X)$ as 
    $\obsfun_{X} \defeq T(!_{X}): T(X) \to T(1)$, where 
    $1 \defeq \{\star\}$ and $!_{X}: X \to 1$ is the unique arrow collapsing all 
    the elements of $X$ to $\star$. 
    We extend $\obsfun_{\mathbb{\Lambda}}$ to a partial function 
    on terms by stipulating $\obsfun_{\mathbb{\Lambda}}(t) \defeq \obsfun_{\mathbb{\Lambda}}(e)$, 
    provided that $ \sem t = e$.

\end{definition}
 As usual, we omit subscripts whenever possible, writing 
 $\obs{e}$, $\obs{M}$, etc.

\begin{example}[Concrete Observations]
\begin{varenumerate}
\item[]
\item The output, or writer, monad $\mathcal{W}$ has a notion of observation $\obsfun:\mathcal{W}(X)\to W$, if $W$ is the underlying monoid of words. This is immediate to see because $\mathcal{W}(1)\defeq W\times \{\star\}\cong W$. Then, we have that $\obsfun((w,x))\defeq w$. This means that what we can observe is the string that has been printed on the output buffer during the computation.
\item The partiality monad $\mathcal{E}$ provides a binary notion of observation, indeed $\obsfun:\mathcal{E}(X)\to \{\star,\bot\}$. This is actually the way in which one could observe divergence.
\item The powerset monad $\mathcal{P}$ comes with the natural notion of must termination, since $\obsfun:\mathcal{P}(X)\to \{\emptyset,1\}$.
\end{varenumerate}
\end{example}

\begin{remark}
    According to Definition~\ref{def:observation}, the observable effects produced by a computation are elements of $T(1)$. This certainly works well for some effects and monads, 
    such as output and cost, but it may be unusual for others. For instance, probabilistic 
    nondeterminism is usually modelled using (variations of) the sub-distribution monad 
    $\mathcal{D}$ and, since $\mathcal{D}(1) \cong [0,1]$, it is natural to 
    interpret elements in the latter set as actual probabilities of events (such as the probability 
    of termination, in our case). However, we have already seen that it is 
    simply not possible to have well-behaved forms of intersection types working 
    with $\mathcal{D}$, and that we can overcome that issue by working with 
    the multi-sub-distribution monad $\mathcal{M}$. Unfortunately, 
    $\mathcal{M}(1) \not \cong [0,1]$, although we would still like 
    to think about the observable behaviour of a program as its probability of 
    convergence. 
    This is not a big issue since it does not take much to realise that 
    our analysis of observations works \emph{mutatis mutandis} if we replace 
    $T(1)$ with $S(1)$, where $S$ is another monad such that there is a monad 
    morphism $\nu: T \Rightarrow S$. This way, for instance, even if modelling 
    static and dynamic semantics in terms of $\mathcal{M}$, 
    we can regard 
    $\obs{t}$ as the probability convergence of convergence of $t$, due 
    to the monad morphism $\nu: \mathcal{M} \Rightarrow \mathcal{D}$ 
    collapsing multi-sub-distributions into ordinary sub-distributions.
\end{remark}

\subsubsection*{Termination by Logical Relations.} Our goal is now to prove that whenever a term $t$ has type $M$ then: (i) $t$ must 
terminate, and (ii) the observable behaviour of $t$, \ie, $\obs{t}$, is 
fully described by 
$M$. That is, $\obs{t} = \obs{M}$. 
To achieve such a goal, we define a logical relation $\models$ between (closed) terms and types 
acting as the semantic interpretation of $\vdash$ in such a way that ${\vdash} = {\models}$ 
and $\models t: M$ implies $t \Downarrow e$ (must termination) and 
$\obs{e} = \obs{M}$. Remarkably, such a logical relation makes crucial use of 
the Barr extension of $\monad$.

\begin{definition}
We define the logical semantics of $\trel$ (restricted to closed expressions) 
as the relation ${\trelsem} \defeq {\trelsem_{\mathbb{A}}} + {\trelsem_{\mathbb{I}}} + 
{\trelsem_{\mathbb{M}}}$
that inductively refines $\trel$ (\ie $\trelsem\,\subseteq\trel$) as follows, where 
we use the notation $\widehat{\trelsem} e : M$ in place of 
$e  \mathrel{\bextt{\trelsem_{\mathbb{I}}}} M$. 
\[
\begin{array}{l@{\hspace{0.1cm}}c@{\hspace{0.1cm}}l}
    \trelsem_{\mathbb{A}} v: I \to M 
    &\mbox{ iff }& \forall w.\, \trelsem_{\mathbb{I}} w: I \mbox{\,implies\,} \trelsem_{\mathbb{M}} vw: M
    \\
    \trelsem_{\mathbb{I}} v: \{A_1,..,A_n\} 
    &\mbox{ iff }& \forall i.\, \trelsem_{\mathbb{A}} v: A_i
    \\
    \trelsem_{\mathbb{M}} t: M 
    &\mbox{ iff }& \mathrel{\widehat{\trelsem_{\mathbb{I}}}} \sem t : M
\end{array}\]
\end{definition}
As usual, we omit subscripts whenever unambiguous. 
We first show that, indeed, $\trelsem$ ensures the desired property.

\begin{lemma}\label{l:semlr}
$\trelsem t: M$ implies $\exists e\text{ such that }\sem t = e$ and $\obsfun(e) = \obsfun(M)$.
\end{lemma}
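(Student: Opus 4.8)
The plan is to read the statement off the definitions: it is essentially a structural consequence of how the logical relation and the observation map are set up. First I would unfold the clause for $\trelsem_{\mathbb{M}}$: by definition, $\trelsem t : M$ means $\sem{t} \mathrel{\bext{\monad}(\trelsem_{\mathbb{I}})} M$. Since $\sem{\cdot}$ is a \emph{partial} function and this clause asserts a property of its value, it can hold only if $\sem{t}$ is defined, i.e.\ only if there is $e \in \monad(\closed{\Comps})$ with $\sem{t} = e$; moreover, by the definition of finitary convergence, $\supp{e} \subseteq \closed{\Vals}$, so $e$ may be regarded as an element of $\monad(\closed{\Vals})$. Hence the existential part of the statement --- which is precisely must termination of $t$ --- is already built into the hypothesis, and all that is left is the equality $\obsfun(e) = \obsfun(M)$, for which we now have the extra datum $e \mathrel{\bext{\monad}(\trelsem_{\mathbb{I}})} M$.

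For that equality I would unfold the Barr extension (Definition~\ref{def:barr-extension}): the relation $e \mathrel{\bext{\monad}(\trelsem_{\mathbb{I}})} M$ provides a witness $\Phi \in \monad(\mathcal{G}(\trelsem_{\mathbb{I}}))$ with $\monad(\pi_1)(\Phi) = e$ and $\monad(\pi_2)(\Phi) = M$, where $\pi_1 : \mathcal{G}(\trelsem_{\mathbb{I}}) \to \closed{\Vals}$ and $\pi_2 : \mathcal{G}(\trelsem_{\mathbb{I}}) \to \mathbb{I}$ are the two projections of the graph. Recalling that $\obsfun_X = \monad(!_X)$ for $!_X : X \to 1$ the unique map to the terminal object (Definition~\ref{def:observation}), functoriality of $\monad$ gives
\begin{align*}
\obsfun(e) &= \monad(!_{\closed{\Vals}})\bigl(\monad(\pi_1)(\Phi)\bigr)
= \monad\bigl(\pi_1 ; !_{\closed{\Vals}}\bigr)(\Phi)
= \monad\bigl(!_{\mathcal{G}(\trelsem_{\mathbb{I}})}\bigr)(\Phi) \\
&= \monad\bigl(\pi_2 ; !_{\mathbb{I}}\bigr)(\Phi)
= \monad(!_{\mathbb{I}})\bigl(\monad(\pi_2)(\Phi)\bigr)
= \obsfun(M),
\end{align*}
where the two central equalities hold because any two arrows with the same source into the terminal object $1$ coincide, so $\pi_1 ; !_{\closed{\Vals}} = !_{\mathcal{G}(\trelsem_{\mathbb{I}})} = \pi_2 ; !_{\mathbb{I}}$. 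This finishes the argument.

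I do not expect a genuine obstacle here: the Barr extension is designed so that related monadic objects share the same ``shape'', and $\obsfun$ records exactly that shape, so the lemma is really a piece of bookkeeping. The only points that need a line of care are (i) noticing that must termination of $t$ is \emph{forced} by $\trelsem_{\mathbb{M}} t : M$ through the partiality of $\sem{\cdot}$, rather than being something to prove separately; and (ii) the (harmless) identification of $\closed{\Vals}$ with its image in $\closed{\Comps}$, needed so that $\obsfun(e)$ computed in $\monad(\closed{\Vals})$ agrees with the value used in $\obs{t}$ --- legitimate because $\monad$ preserves injections and $!_{\closed{\Comps}}$ restricts to $!_{\closed{\Vals}}$ along the inclusion $\closed{\Vals} \hookrightarrow \closed{\Comps}$. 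The substantive part of soundness, namely establishing ${\trel} = {\trelsem}$ (the ``fundamental lemma'' of the logical relation), lives elsewhere and is not needed for this lemma.
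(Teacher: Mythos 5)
Your proof is correct and follows essentially the same route as the paper: both reduce the claim to the general fact that $\bextt{\relone};T(!)\subseteq T(!)$ for an arbitrary relation $\relone$, with the existence of $e$ being forced by the definition of $\trelsem_{\mathbb{M}}$ and the partiality of $\sem{\cdot}$. The only (minor) difference is that you justify the lifting step by unfolding the span definition of the Barr extension and using functoriality of $\monad$ together with terminality of $1$, whereas the paper derives it from the lax relational extension laws applied to the trivial inclusion $\relone;!\subseteq\ !;\idrel$; your elementwise version is marginally more self-contained, since it does not appeal to the weak-pullback-preservation hypothesis underlying those laws.
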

Then, we prove the soundness of our type system showing that $\trelsem$ and $\vdash$ coincide.

\begin{proposition}[Soundness]\label{prop:soundness-lr}
    ${\trel} = {\trelsem}$.
\end{proposition}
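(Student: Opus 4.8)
The plan is to prove the two inclusions separately. The inclusion ${\trelsem} \subseteq {\trel}$ is built into the definition of $\trelsem$ (it is stipulated to inductively refine $\trel$), so the real content is ${\trel} \subseteq {\trelsem}$: every typing judgment derivable in the system of Figure~\ref{fig:mon-system} is also in the logical semantics. I would prove this by induction on the type derivation $\tyd \pof \tjudg{}{t}{\gty}$, simultaneously over the three layers $\trelsem_{\mathbb{A}}$, $\trelsem_{\mathbb{I}}$, $\trelsem_{\mathbb{M}}$ (this is forced, since the three relations are mutually recursive). Because the logical relation at arrow types quantifies over \emph{all} semantic arguments $w$ with $\trelsem_{\mathbb{I}} w : I$ — arguments that need not be subterms of $t$ — a naive induction on derivations fails at the $\rabs$ case; the standard fix is to first generalize the statement to open terms via a \emph{substitution/closure} formulation: if $\tyd \pof \tjudg{x_1:I_1,\dots,x_k:I_k}{t}{\gty}$ and $\trelsem_{\mathbb{I}} v_j : I_j$ for each $j$ (with the $v_j$ closed values), then $\trelsem\, t\isub{x_1}{v_1}\cdots\isub{x_k}{v_k} : \gty$. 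The soundness statement for closed terms is then the special case $k = 0$.

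The inductive cases are as follows. For $\rax$: the claim reduces to $\trelsem_{\mathbb{I}} v_j : I_j$ together with $A \in I_j$, which by definition of $\trelsem_{\mathbb{I}}$ gives $\trelsem_{\mathbb{A}} v_j : A$. For $\rint$: immediate from the definition of $\trelsem_{\mathbb{I}}$ (conjunction over the finite index set), using the induction hypothesis at each $\linty_i$. For $\runit$: from $\trelsem_{\mathbb{I}} v : \intty$ we must derive $\trelsem_{\mathbb{M}} v : \unit(\intty)$, i.e. $\widehat{\trelsem_{\mathbb{I}}}\, \sem{v} : \unit(\intty)$; since $v$ is a value, $\sem{v} = \unit(v)$, and by the characterization of the Barr extension on $\unit$ (Proposition~\ref{prop:subject-expansion-eta}, $\eta;\kleislirel{\relone}=\relone$ for WC monads — here used at the level of the extension, $\unit(v) \mathrel{\bextt{\trelsem_{\mathbb{I}}}} \unit(\intty)$ iff $\trelsem_{\mathbb{I}} v : \intty$) we are done. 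For $\rabs$: given $\tyd$ ending in $\tjudg{}{\la x t}{\arr\intty\monty}$ from $\tjudg{x:\intty}{t}{\monty}$, we must show $\trelsem_{\mathbb{A}} (\la x t) : \arr\intty\monty$, i.e. for every $w$ with $\trelsem_{\mathbb{I}} w : \intty$ we have $\trelsem_{\mathbb{M}} (\la x t) w : \monty$; since $(\la x t) w \tosem \unit(t\isub{x}{w})$ and $\sem{\cdot}$ is preserved along $\tosem$, this follows from the (open-term) induction hypothesis applied with the extra pair $x \mapsto w$, plus the fact that $\sem{(\la x t)w} = \sem{t\isub x w}$. For $\rop$: from $\trelsem_{\mathbb{M}} t_i : \monty_i$ for each $i$ we need $\trelsem_{\mathbb{M}} \op{t_1,\dots,t_n} : \gefop{\monty_1,\dots,\monty_n}$; unfolding, $\sem{\op{t_1,\dots,t_n}} = \gefop{\sem{t_1},\dots,\sem{t_n}}$ (by definition of the operational semantics and algebraicity of $\gefopfun$, assuming each $t_i$ converges), and closure of the Barr extension under algebraic operations (Proposition~\cite{Barr/LMM/1970}, item (i) in the excerpt) gives $\gefop{\sem{t_1},\dots,\sem{t_n}} \mathrel{\bextt{\trelsem_{\mathbb{I}}}} \gefop{\monty_1,\dots,\monty_n}$.

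The main obstacle is the $\rapp$ case, where the monadic $\bindsymbol$ at the level of types must be matched against the monadic structure of $\sem{v\,t}$. Here we have $\trelsem_{\mathbb{M}} t : \montytwo$, i.e. $\sem{t} \mathrel{\bextt{\trelsem_{\mathbb{I}}}} \montytwo$, and for each $\intty_i \in \supp{\montytwo}$ we have $\trelsem_{\mathbb{A}} v : \arr{\intty_i}{\monty_i}$, i.e. $v$ sends every semantic $w$ of type $\intty_i$ to $\trelsem_{\mathbb{M}} vw : \monty_i$. We must conclude $\trelsem_{\mathbb{M}} v\,t : \montytwo \bind (\{\intty_i \Rightarrow \monty_i\})$. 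Operationally, $\sem{v\,t} = \sem{t} \bind (\metalambda u.\, \sem{v\,u})$: the argument reduces first inside the monad, producing some $e = \sem{t}$ with finite support of values, then $v$ is applied pointwise. The key lemma I would isolate — essentially the ``monadic binding'' closure, Proposition~\cite{Barr/LMM/1970} item (ii), instantiated with $R = S = {\trelsem_{\mathbb{I}}}$ read as a monadic relation and $g,f$ the appropriate application maps — is that from $\sem{t}\mathrel{\bextt{\trelsem_{\mathbb{I}}}}\montytwo$ and the pointwise fact ``$w$ of type $\intty_i \Rightarrow \sem{vw}$ of type $\monty_i$'', the Barr extension is preserved under the bind, yielding $\sem{t}\bind(\metalambda u.\sem{vu}) \mathrel{\bextt{\trelsem_{\mathbb{I}}}} \montytwo\bind(\{\intty_i\Rightarrow\monty_i\})$. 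Making this rigorous requires (a) checking the side condition $\supp{\montytwo}\subseteq\{\intty_1,\dots,\intty_n\}$ lets us define the type-level function $\{\intty_i\Rightarrow\monty_i\}$ on all of $\supp{\montytwo}$, and (b) confirming $\bextt{\trelsem_{\mathbb{I}}}$ genuinely satisfies the hypotheses of the binding-closure proposition — for which WP of $\monad$ (hence $\bext{\monad}$ a lax relational extension, Theorem~\ref{thm:barr}) suffices, and WC is not needed here; WC was only used earlier for the $\unit$/expansion directions. A subsidiary point to check throughout: Lemma~\ref{l:semlr} guarantees that whenever $\trelsem_{\mathbb{M}} u : \monty$ holds, $\sem{u}$ is actually defined, so the partiality of $\sem{\cdot}$ never obstructs the unfoldings above — this is what closes the loop between the logical relation and must-termination.
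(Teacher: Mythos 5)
Your proposal is correct and follows essentially the same route as the paper: the inclusion $\trelsem\subseteq\trel$ by definition, and $\trel\subseteq\trelsem$ by induction on derivations after generalizing to open terms via closing substitutions, with the $\rapp$ case discharged through the (lax) Barr-extension's closure under monadic binding and the remaining cases via saturation of $\trelsem$ under reduction. The only nitpick is that in the $\runit$ case you invoke the WC-dependent Proposition~\ref{prop:subject-expansion-eta} where only the WP-level inclusion $\relone;\unit\subseteq\unit;\relator(\relone)$ is needed --- a point you yourself note later, and which matches the paper's use of the $\rextunit$ rule.
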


\subsubsection*{Completeness.} Having proved soundness of our type system, 
we now move on to completeness, meaning that normalising terms are typable. 
The proof of completeness follows the usual pattern for intersection types, and makes crucial use of subject expansion. The proof of subject expansion 
is divided into two parts: first, we prove subject expansion with respect to the single-step reduction 
on terms and then extend such a result to monadic terms and monadic reduction 
relying on Proposition~\ref{prop:subject-expansion-lifting}.
Concerning the first part, we would like to prove subject expansion by induction on the structure of evaluation contexts (after having proved a straightforward anti-substitution lemma). However, the statement is not true in general.
\begin{example}
    Let us consider the multidistribution monad and the binary operation $\oplus_1$, \ie the first projection. Let us consider the reduction $\la\var\var\oplus_1\Omega\mapsto 1\cdot\la\var\var$. Even if $\kleisli\vdash 1\cdot\la\var\var:1\cdot\monty $, it is not possible to type $\la\var\var\oplus_1\Omega$ with rule $\rop$, because of course there is no way of typing $\Omega$.
\end{example}
Then, we need a restriction on our calculus, this time about operations. We allow only operations $\op{\tm_1,\ldots,\tm_n}$ that do \emph{not} erase their arguments, \ie for which $\supp{\gefop{\unit(\tm_1),\ldots,\unit(\tm_n)}}=\{\tm_1,\ldots,\tm_n\}$. In terms of equational theories, this is guaranteed by considering \emph{linear} theories. We already anticipate that we will be able to remove this restriction in the next section, by the use of infinitary means. 
\begin{proposition}[Subject Expansion]
\label{prop:subject-expansion}\mbox{} Let $\monad$ be WC. Then:
\begin{enumerate}
    \item 	If $\tm \mathrel{\tosem} e$ and $\kleisli{\vdash} e : M$, 
        then $\vdash \tm: M$.
	\item If $e \mathrel{\kleisli{\tosem}} e'$ and $\kleisli{\vdash}e' : M$, 
	    then $\kleisli{\vdash} e : M$.
\end{enumerate}
\end{proposition}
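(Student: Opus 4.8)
The plan is to obtain part~(2) from part~(1) by Proposition~\ref{prop:subject-expansion-lifting}: instantiating it with $f \defeq {\tosem}$ and both relations equal to the typability relation $\vdash$, the inclusion $\tosem; \kleisli{\vdash} \subseteq {\vdash}$ asserted by part~(1) is exactly its premise, and its conclusion $\kleisli{\tosem}; \kleisli{\vdash} \subseteq \kleisli{\vdash}$ is precisely part~(2). So the whole task is part~(1), namely $\tosem; \kleisli{\vdash} \subseteq {\vdash}$. If $\tm$ is a (closed) value the claim is trivial, since $\tm\,\tosem\,\unit(\tm)$ and, by Proposition~\ref{prop:subject-expansion-eta}, $\kleisli{\vdash}\unit(\tm):M$ iff $\vdash\tm:M$; otherwise $\tm=\evctx[r]$ with $\evctx$ an evaluation context, $r$ a redex, $r\mapstoo e_0$, and $e=e_0\bind(\metalambda\! u.\,\unit(\evctx[u]))$, and I would induct on $\evctx$. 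Two routine ingredients are set up first: an \emph{anti-substitution lemma} --- if $\vdash\tm\isub{\var}{\val}:M$ then there is $\intty$ with $\var:\intty\vdash\tm:M$ and $\vdash\val:\intty$, proved by induction on the derivation --- and the \emph{generation} fact that, by inspection of Figure~\ref{fig:mon-system}, the only rule concluding $\vdash\val\tm':M$ is $\rapp$, so such a judgement decomposes into typings $\vdash\val:\arr{\intty_i}{\monty_i}$ for $1\le i\le n$, a typing $\vdash\tm':\montytwo$, the side condition $\supp{\montytwo}\subseteq\{\intty_1,\dots,\intty_n\}$, and $M=\montytwo\bind(\{\intty_i\Rightarrow\monty_i\}_{1\le i\le n})$.

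\textbf{Base case $\evctx=\ctxhole$.} Here $e=e_0$ by the monad laws. If $r=(\la\var{\tm'})\val$, then $e_0=\unit(\tm'\isub{\var}{\val})$, so $\vdash\tm'\isub{\var}{\val}:M$ by Proposition~\ref{prop:subject-expansion-eta}; anti-substitution yields $\intty$ with $\var:\intty\vdash\tm':M$ and $\vdash\val:\intty$, hence $\vdash\la\var{\tm'}:\arr{\intty}{M}$ by $\rabs$ and $\vdash\val:\unit(\intty)$ by $\runit$, and since $\supp{\unit(\intty)}=\{\intty\}$, rule $\rapp$ with $n=1$ derives $\vdash(\la\var{\tm'})\val:\unit(\intty)\bind\{\intty\Rightarrow M\}=M$. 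If $r=\op{\tm_1,\dots,\tm_n}$, then $e_0=\gefop{\unit(\tm_1),\dots,\unit(\tm_n)}$, and the crux is an \emph{inversion principle for operations}: $\kleisli{\vdash}\gefop{\unit(\tm_1),\dots,\unit(\tm_n)}:M$ must force $M=\gefop{\monty_1,\dots,\monty_n}$ for monadic types $\monty_i$ with $\vdash\tm_i:\monty_i$ for every $i$; granted this, rule $\rop$ gives $\vdash\op{\tm_1,\dots,\tm_n}:M$ at once.

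\textbf{Inductive case $\evctx=\val\evctx'$.} Associativity of the Kleisli extension rewrites $e=e_0\bind(\metalambda\! u.\,\unit(\val\evctx'[u]))$ as $e'\bind(\metalambda\! u'.\,\unit(\val u'))$, where $e'\defeq e_0\bind(\metalambda\! u.\,\unit(\evctx'[u]))$, so that $\evctx'[r]\,\tosem\,e'$ (and $\evctx'[r]$ is closed, so the induction hypothesis applies to it). From $\kleisli{\vdash}e:M$, using the generation fact pointwise together with Proposition~\ref{prop:subject-expansion-eta} and transporting it along $\bext{\monad}$ via the relational-extension axioms, one extracts a monadic type $M'$ with $\kleisli{\vdash}e':M'$, typings $\vdash\val:\arr{\intty_i}{\monty_i}$, the side condition $\supp{M'}\subseteq\{\intty_i\}$, and $M=M'\bind(\{\intty_i\Rightarrow\monty_i\}_i)$. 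The induction hypothesis on $\evctx'$ gives $\vdash\evctx'[r]:M'$, and then $\rapp$ derives $\vdash\val\evctx'[r]:M'\bind(\{\intty_i\Rightarrow\monty_i\}_i)=M$.

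\textbf{Expected main obstacle.} The delicate step is the operation-inversion principle of the base case --- and the closely related reversal of the outer $\bind$ in the inductive case --- and this is exactly where the two hypotheses of the statement are spent. Weak cartesianness of $\monad$ makes $\bext{\monad}$ a genuine relational extension (Theorem~\ref{thm:barr}), so that the naturality squares of $\unit$ and $\mu$ are weak pullbacks: given this, any witness $\Phi\in\monad\mathcal{G}({\vdash})$ of $\kleisli{\vdash}e_0:M$ can be decomposed along the algebraic structure of $\gefopfun$, the typing pushed into each $\unit(\tm_i)$, and collapsed via $\mu\circ\unit=\id$ to typings $\vdash\tm_i:\monty_i$ with $M=\gefop{\monty_1,\dots,\monty_n}$ --- here one uses naturality of the algebraic operation $\gefopfun$ together with the fact that relational extensions are closed under algebraic operations and commute with $\mu$. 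The \emph{non-erasing} restriction is the second ingredient: it guarantees that $\gefop{\unit(\tm_1),\dots,\unit(\tm_n)}$ does not collapse to an element mentioning only a proper subset of the $\tm_i$, so that every $\monad(\pi_1)$-preimage genuinely has the shape ``$\gefopfun$ over the leaves $\tm_1,\dots,\tm_n$'', forcing each $\tm_i$ to be typable. Without it, as the $\divc/\oplus_1$ example already shows, the reduct records no type constraint on the erased argument, and rule $\rop$ --- which demands a typing for \emph{every} argument --- can no longer be reinstated.
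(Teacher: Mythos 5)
Your proposal is correct and follows essentially the same route as the paper: part~(2) is reduced to part~(1) via Proposition~\ref{prop:subject-expansion-lifting}, and part~(1) is proved by induction on evaluation contexts, using an anti-substitution lemma for the $\beta$-redex case, Proposition~\ref{prop:subject-expansion-eta} for the value case, the non-erasing/WC hypotheses to invert the typing of operations, and the decomposition of the outer bind plus rule $\rapp$ for the inductive case $\evctx=\val\evctx'$. The paper is comparably terse on the operation-inversion step (it simply invokes the fact that operations do not cancel arguments), so your treatment matches its proof in both structure and level of detail.
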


Proposition~\ref{prop:subject-expansion}, together with the fact that monadic values can always be typed, 
gives the completeness of the type system.

\begin{theorem}[Completeness]
\label{thm:completeness}
	If $\sem\tm=\etm$, then there exists a monadic type  $\monty$ such that $\tjudg{}{\tm}{\monty}$.
\end{theorem}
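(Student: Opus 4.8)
The plan is to reduce the statement to subject expansion (Proposition~\ref{prop:subject-expansion}) plus the fact that monadic values are typable, mirroring the classical pattern sketched in Section~\ref{sec:cbv-types} but carried out monadically. First I would establish the base case: if $\sem\tm = \etm$ with $\supp\etm \subseteq \closed\Vals$, then $\etm$ is a monadic value, and I need a monadic type $\monty$ with $\kleisli\vdash \etm:\monty$. Each closed value $\val$ in the support is a $\lambda$-abstraction, hence typable by $\emi\to\unit(\emi)$ via $\rabs$ followed by $\runit$ and $\rint$ (the $\mathsf{I}$-like derivation from Example~\ref{ex:pure}); this gives $\vdash\val:\{\linty\}$ for a suitable $\linty$, and therefore $\vdash\val:\unit(\{\linty\})$ by $\runit$. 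Concretely, taking $\monty \defeq \obsfun\text{-preserving lift of }\etm$ along the pointwise typing $\val\mapsto\unit(\emi)$, one gets $\kleisli\vdash\etm:\monty$ by definition of $\kleisli\vdash = \bextt{\vdash};\mu$: one exhibits the witness $\Phi\in\monad\mathcal G(\vdash)$ built from $\etm$ by pairing each value with its type derivation, so that the two projections recover $\etm$ and $\monty$.

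Next I would run the induction on the length of the reduction $\tm \tosem^* \etm$. Since $\sem\tm = \etm$ means $\tm \tosem^n \etm$ for some $n$ with $\supp\etm\subseteq\closed\Vals$, I proceed by induction on $n$. The case $n=0$ is the base case above. For the inductive step, write $\tm \tosem \etm_1$ and $\etm_1 \kleisli\tosem^{n-1} \etm$ (using $\tosem^{n} = \tosem;\kleisli{(\tosem^{n-1})}$). By the induction hypothesis applied to each term in $\supp{\etm_1}$ — which is legitimate because supports are finite (algebraic operations are finitary) and each such term converges to the corresponding part of $\etm$ — together with the lifting machinery, one obtains a monadic type $\monty_1$ with $\kleisli\vdash \etm_1:\monty_1$; more precisely, one first gets typability of each component and then assembles these into $\kleisli\vdash\etm_1:\monty_1$ using that $\kleisli\vdash$ is built from $\bextt{\vdash}$, exactly as in the base case. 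Then Proposition~\ref{prop:subject-expansion}(1) applied to $\tm\tosem\etm_1$ and $\kleisli\vdash\etm_1:\monty_1$ yields $\vdash\tm:\monty_1$, and we set $\monty \defeq \monty_1$.

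The main obstacle I anticipate is the bookkeeping in the inductive step: one must propagate typability through a \emph{monadic} multistep reduction, where the intermediate object $\etm_1 \in T(\closed\Comps)$ is not a single term but a monadic bundle of them. The clean way to handle this is not to induct naively on terms in the support, but to phrase the whole argument at the monadic level from the start: prove by induction on $n$ the statement "if $\etm \kleisli\tosem^{n}\etm'$ with $\supp{\etm'}\subseteq\closed\Vals$, then $\kleisli\vdash\etm':\monty'$ for some $\monty'$ implies $\kleisli\vdash\etm:\monty$ for some $\monty$", using Proposition~\ref{prop:subject-expansion}(2) for the single monadic step and the base case for $n=0$; then instantiate with $\etm = \unit(\tm)$ and use Proposition~\ref{prop:subject-expansion-eta} (which holds since $\monad$ is WC) to transfer $\kleisli\vdash\unit(\tm):\monty$ back to $\vdash\tm:\monty$. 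This reorganization isolates the one genuinely monadic ingredient — that $\kleisli\vdash$ is preserved by $\kleisli\tosem$-expansion, already supplied by Proposition~\ref{prop:subject-expansion}(2) — and reduces everything else to the value base case and a bare induction on the number of steps.
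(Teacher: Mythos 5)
Your overall strategy coincides with the paper's: strengthen the statement to monadic terms, induct on the length of the reduction $\tm \tosem^n \etm$, type the terminal monadic value directly, and pull typability back through each step via subject expansion (Proposition~\ref{prop:subject-expansion}). Your third paragraph, where you reorganize the induction entirely at the level of $\kleisli{\tosem}$ and $\kleisli{\vdash}$ and use Proposition~\ref{prop:subject-expansion-eta} to return to the term level, is essentially the proof the paper gives (the paper is in fact slightly less explicit about that last transfer).

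There is, however, a genuine error in your base case. You claim that every closed value $\val$, being a $\lambda$-abstraction, is typable with $\emi\to\unit(\emi)$ via $\rabs$, ``the $\mathsf{I}$-like derivation.'' This is false in general: to derive $\tjudg{}{\la\var\tm}{\arr\emi{\unit(\emi)}}$ the rule $\rabs$ demands a derivation of $\tjudg{\var:\emi}{\tm}{\unit(\emi)}$, i.e.\ a typing of the \emph{body}, and if the body diverges (e.g.\ $\val = \la\var{\Omega}$ for a closed diverging computation $\Omega$) no such derivation exists --- by soundness it would imply convergence of $\Omega$. The identity-style derivation only works because the body of $\mathsf{I}$ is a variable. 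The correct (and standard) move, which the paper uses, is to type each value in the support with the \emph{empty intersection} $\emi$ via $\rint$ with an empty family of premises --- this requires nothing of the body --- and then lift to $\unit(\emi)$ via $\runit$; the monadic value $\gef{\unit(\val_1),\ldots,\unit(\val_m)}$ then receives the type $\gef{\unit(\emi),\ldots,\unit(\emi)}$. With that replacement your base case, and hence the whole argument, goes through; the rest of your witness construction for $\kleisli{\vdash}$ via an element of $\monad\mathcal{G}({\vdash})$ is a correct unfolding of the Barr extension. Note also that the base case silently relies on operations being non-erasing (so that the support of the terminal monadic object really consists only of values that were produced by the reduction), which is exactly the restriction the paper imposes for finitary completeness.
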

Soundness and completeness together provide a characterization of finitary effectful termination via typability with intersection types.
\begin{corollary}[Characterization]
    The following clauses are equivalent:
    \begin{enumerate}
        \item Effectful termination: $\obs{\sem t}=o$.
        \item Typability: there exists $\monty$, such that $\tjudg{}{\tm}{\monty}$ and $\obs\monty=o$.
    \end{enumerate}
\end{corollary}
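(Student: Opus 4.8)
The plan is to derive the corollary directly from the two results just established, namely the Soundness statement (Proposition~\ref{prop:soundness-lr}, ${\trel} = {\trelsem}$, together with its consequence Lemma~\ref{l:semlr}) and the Completeness statement (Theorem~\ref{thm:completeness}). Both implications are essentially bookkeeping once these are in hand, so the work is in tracking the observation function $\obsfun$ through the equivalences. I would phrase the proof as a cycle of implications between clause~(1), clause~(2), and the auxiliary statement $\models t : M$ for a suitable $M$.

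First I would prove $(1) \Rightarrow (2)$. Assume $\obs{\sem t} = o$; in particular $\sem t$ is defined, say $\sem t = e$, so $t$ converges. By Completeness (Theorem~\ref{thm:completeness}) there exists a monadic type $\monty$ with $\tjudg{}{\tm}{\monty}$. It remains to check $\obs{\monty} = o$. For this I invoke Soundness: from $\vdash t : M$ and ${\trel} = {\trelsem}$ we get $\trelsem t : M$, and then Lemma~\ref{l:semlr} gives $\sem t = e'$ with $\obsfun(e') = \obsfun(M)$. Since $\sem\cdot$ is a (partial) function, $e' = e$, hence $\obs{\monty} = \obs{e} = \obs{\sem t} = o$, as required.

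Next, $(2) \Rightarrow (1)$. Assume $\tjudg{}{\tm}{\monty}$ with $\obs{\monty} = o$. By Soundness (${\trel} = {\trelsem}$) we have $\trelsem t : M$, and Lemma~\ref{l:semlr} yields some $e$ with $\sem t = e$ and $\obsfun(e) = \obsfun(M) = o$. By definition $\obsfun_{\mathbb{\Lambda}}(t) = \obsfun_{\mathbb{\Lambda}}(e)$ whenever $\sem t = e$, so $\obs{\sem t} = \obs{e} = o$, which is clause~(1). This closes the equivalence.

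I do not expect a genuine obstacle here: the corollary is a straightforward consequence of Proposition~\ref{prop:soundness-lr}, Lemma~\ref{l:semlr}, and Theorem~\ref{thm:completeness}. The only point that requires a moment of care is the well-definedness step---using that $\sem\cdot$ is a partial \emph{function} so that the normal form produced by the logical-relation argument in Lemma~\ref{l:semlr} really is the same $e$ appearing in clause~(1)---and the bookkeeping that $o$ is a fixed element of $T(1)$ threaded consistently through $\obs{\sem t}$, $\obs{e}$, and $\obs{\monty}$. If one wanted to present the statement for a monad $S$ with a monad morphism $\nu : T \Rightarrow S$ (as in the Remark following Definition~\ref{def:observation}), the same argument goes through after postcomposing every observation with $\nu_1 : T(1) \to S(1)$, since monad morphisms commute with $\obsfun = T(!)$; but for the stated corollary this extra generality is not needed.
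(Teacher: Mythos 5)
Your proof is correct and follows exactly the route the paper intends: the paper leaves this corollary without an explicit proof, stating only that it follows from combining Soundness (Proposition~\ref{prop:soundness-lr} with Lemma~\ref{l:semlr}) and Completeness (Theorem~\ref{thm:completeness}), which is precisely the bookkeeping you carry out. The only detail worth keeping is the one you already flagged---that $\sem{\cdot}$ being a partial function identifies the $e$ from Lemma~\ref{l:semlr} with the $e$ in clause~(1).
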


\section{Infinitary Effectful Semantics}\label{sec:infinitary}
In this section, we extend the type system of Section~\ref{sec:mon-types} 
to account for infinitary behaviours. To do so, we 
require monads to have enough structure to support 
such behaviours. A standard approach to do that is 
by requiring suitable order-theoretic enrichments. 
Here, we consider monads whose Kleisli category is enriched
in the category of directed complete pointed 
partial order (\dcppo{s}) \cite{Kelly/EnrichedCats,AbramskyJung/DomainTheory/1994}, 
but in order to maintain the paper as self-contained as possible, 
we use the following more concrete (and restricted) definition.

\begin{definition}
A monad $(T, \eta, \bindsymbol)$ is 
    \dcppo{-}ordered if, for any set $A$, we have a 
    \dcppo{} $(T(A), \cpoleq_A, \bot_A)$ such that 
    the bind operator is strict and continuous in both 
    arguments.
\end{definition}

As usual, we omit subscripts whenever unambiguous. Notice that 
if $T$ is \dcppo{-}ordered, then all its algebraic 
operations are strict and continuous. 

\begin{example}
Both the multiset and multidistribution monad can be 
turned into \dcppo{-}ordered monads by simply adding 
a zero-ary operation symbol $\bot$ to their equational theories. 
Semantically, $\bot$ corresponds to the empty multiset and 
multidistribution, respectively. 
\end{example}

From now on, we tacitly work with an arbitrary but fixed 
\dcppo{-}ordered monad $(T, \eta, \bindsymbol)$.

\subsubsection*{Infinitary Typing.}
Extending the monadic type system to the infinitary case 
is straightforward. We simply add the typing rule 
\[\infer[\rbot]{\Gamma \vdash t: \bot}{}\] allowing to type any computation 
with the total uninformative type $\bot$. Consequently, 
we can assign several types to each term.

\begin{example} We provide in Fig.~\ref{fig:derivationthree} the type derivation for the term $\tjudg{}{(\la\var\var\var)(\mathsf{II}\oplus \Omega)}{\frac{1}{2}\emi}$, again a simple variation on the theme of the previous examples. One can notice that we are able to type it, even if clearly the term does \emph{not} converge. Its type $\frac{1}{2}\emi$ says exactly that: the probability of convergence is $\obs{\frac{1}{2}\emi}=\frac{1}{2}$.
\begin{figure*}[t]
{\scriptsize\[
\begin{array}{c}
\Phi:\infer[\rabs]
    {\tjudg{}{\la\var{\var\var}}{\int{\id}\to\unit(\emi)}}
    {\infer[\rapp]{\tjudg{\var:\int{\id}}{\var\var}{\unit(\emi)}}
    {\infer[\rax]{\tjudg{\var:\int{\id}}\var{\id}}{}
    &\infer[\runit]{\tjudg{\var:\int{\id}}\var{\unit(\emi)}}{\infer[\rint]{{\tjudg{\var:\int{\id}}\var\emi}}{}}}}
 \\\hhline{-}\\[2mm]
 \Psi:\infer[\rop]{\tjudg{}{\mathsf{II}\oplus\Omega}{\frac{1}{2}\int{\id}}}{
  \infer[\rapp]   
    {\tjudg{}{(\la\vartwo{\vartwo})(\la\varthree{\varthree})}{1\int{\id}}}
    { \infer[\rabs]
    {\tjudg{}{\la\vartwo{\vartwo}}{\int{\id}\to 1\int{\id}}}
    {\infer[\runit]{\tjudg{\vartwo:\int\id}{\vartwo}{1\int\id}}{\infer[\rint]
    {\tjudg{\vartwo: \int{\id}}{\vartwo}{\int{\id}}}
    {\infer[\rax]{\tjudg{\vartwo: \int{\id}}{\vartwo}{\id}}{}}}} & 
    \infer[\runit]{\tjudg{}{\la\varthree\varthree}{1\int\id}}{\infer[\rint]{\tjudg{}{\la\varthree{\varthree}}{\int{\id}}}
    {    \infer[\rabs]{\tjudg{}{\la\varthree{\varthree}}{\id}}{ \infer[\runit]{\tjudg{\varthree:\emi}\varthree{\unit(\emi)}}{  
    \infer[\rint]{\tjudg{\varthree:\emi}\varthree\emi}{}}}}
    }} & \infer[\rbot]{\tjudg{}{\Omega}{\bot}}{}
  }
\\\hhline{-}\\[2mm]
  \infer[\rapp]
  {\tjudg{}{(\la\var\var\var)(\mathsf{II}\oplus\Omega)}{\frac{1}{2}\emi}}
  {\Phi \quad &\quad \Psi}
\end{array}
\]}
    \vspace{-8pt}
    \caption{Type derivation for $\tjudg{}{(\la\var\var\var)(\mathsf{II}\oplus \Omega)}{\frac{1}{2}\emi}$. $\id\defeq\emi\to\unit(\emi)=\emi\to 1{\cdot}\emi$.}
    \vspace{-8pt}
    \label{fig:derivationthree}
\end{figure*}
\end{example}

Nonetheless, we can think about type derivations (with occurrences of the $\rbot$ rule) as approximations of the semantic
content of a computation, the latter being reached only at the limit. Moreover, the set of the observations $\mathcal{O}(\tm)$ of a term $t$, defined as the collection of all the observations $\obs M$ for $\vdash t : M$ is directed. 
Consequently, we can associate to each $t$ a more informative observation obtained through types
given as $O(\tm) \defeq \join \mathcal{O}(\tm)$. Notice that even if $O(\tm)$ 
is a valid observation, there may be no (necessarily finite) derivation  $\pi \mathrel{\pof} {\vdash t: \monty}$ such that $\obs\monty=O(\tm)$.

\subsubsection*{Infinitary Operational Semantics.}
A standard approach to deal with infinitary effectful semantics 
consists in defining a monadic evaluation function mapping computations 
to monadic values. To capture forms of convergence in the limit, 
such a function is defined as a suitable least upper bound of maps 
evaluating computations for a fixed number of steps. We implement this 
strategy building upon the definition of $\tosem$. 

\begin{definition}[Approximate Operational Semantics]
    Let $\phi: \Comps \to T(\Vals)$ mapping values $v$ (as elements 
    in $\Comps$) to $\eta(v)$ and all other terms $t$ to $\bot$. 
    Then, we define the $\mathbb{N}$-indexed family of maps 
    $\sem{-}^n: \closed{\Comps} \to \monad(\closed{\Vals})$ by 
    $\sem{t}^0 \defeq \bot$, and
    $\sem{t}^n \defeq e \bind \phi$, if $n > 0$ and $t \mathrel{\tosem^n} e$.
\end{definition}


\begin{lemma}[\cite{gavazzo/LICS/2017}]
    For any closed computation $t$, the sequence $\{\sem{t}^n\}_{n \geq 0}$ 
    forms a directed set (an $\omega$-chain, actually).
\end{lemma}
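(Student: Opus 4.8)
The plan is to prove the stronger claim that $\{\sem{t}^n\}_{n\geq 0}$ is an $\omega$-chain, \ie that $\sem{t}^n \cpoleq \sem{t}^{n+1}$ for every $n\in\mathbb{N}$ and every closed computation $t$; directedness then follows immediately, since such a chain is nonempty (it contains $\sem{t}^0 = \bot$) and totally ordered, so any finite subset is bounded by its element of largest index, which again belongs to the chain. I would argue by induction on $n$, uniformly in $t$. The base case $n = 0$ is trivial: $\sem{t}^0 = \bot \cpoleq \sem{t}^1$ because $\bot$ is the least element of $\monad(\closed{\Vals})$.

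For the inductive step I would first observe that $\tosem$ is a total function on closed computations, so there is a unique $e \in \monad(\closed{\Comps})$ with $t \mathrel{\tosem} e$ (in particular a value $v$ reduces to $\eta(v)$). Unfolding $\tosem^{n+1} = \tosem ; \kleisli{(\tosem^n)}$ and using associativity of $\bindsymbol$ together with the defining equation $\sem{s}^n = \tosem^n(s) \bind \phi$ (valid for $n \geq 1$), one obtains, for every $n \geq 1$,
\begin{align*}
\sem{t}^{n+1} &= \tosem^{n+1}(t) \bind \phi = (e \bind \tosem^n) \bind \phi \\
&= e \bind (\metalambda\! s.\, \tosem^n(s) \bind \phi) = e \bind (\metalambda\! s.\, \sem{s}^n).
\end{align*}
Since $\monad$ is dcppo-ordered, $\bindsymbol$ is continuous, hence monotone, in its second argument. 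Thus, for the step from $n$ to $n+1$ with $n \geq 2$, the inductive hypothesis gives $\sem{s}^{n-1} \cpoleq \sem{s}^n$ for all $s$, whence $e \bind (\metalambda\! s.\, \sem{s}^{n-1}) \cpoleq e \bind (\metalambda\! s.\, \sem{s}^n)$, which by the displayed identity (applied at $n-1$ and at $n$) is exactly $\sem{t}^n \cpoleq \sem{t}^{n+1}$.

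The only transition not covered by this uniform argument is $n = 1 \to n = 2$, since $\sem{t}^1 = e \bind \phi$ does not have the recursive shape $e \bind (\metalambda\! s.\, \sem{s}^{\,\cdot})$. Here I would check directly that $\phi \cpoleq (\metalambda\! s.\, \sem{s}^1)$ holds pointwise: on a value $v$ both sides equal $\eta(v)$ (as $\sem{v}^1 = \eta(v) \bind \phi = \eta(v)$), and on a non-value $\phi$ returns $\bot$; monotonicity of $\bindsymbol$ then yields $\sem{t}^1 = e \bind \phi \cpoleq e \bind (\metalambda\! s.\, \sem{s}^1) = \sem{t}^2$, closing the induction. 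I expect the only genuinely delicate point to be precisely this bookkeeping around the index $0$: the clause $\sem{t}^0 = \bot$ does not fit the pattern $e \bind \phi$, which forces the low-index transitions to be handled as small separate cases; everything else is a routine consequence of the monad laws and of the continuity of $\bindsymbol$ coming from the dcppo-enrichment. (The statement is in any case due to \cite{gavazzo/LICS/2017}.)
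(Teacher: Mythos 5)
The paper does not actually prove this lemma: it is imported wholesale from the cited reference \cite{gavazzo/LICS/2017}, and no argument for it appears in the appendix. So there is nothing to compare your proof against; judged on its own, your argument is correct and complete. The three ingredients you use are exactly the right ones: the base case from $\bot$ being least, the identity $\sem{t}^{n+1} = \tosem(t) \bind (\metalambda s.\, \sem{s}^n)$ obtained from associativity of $\bindsymbol$ and the defining clause $\sem{s}^n = \tosem^n(s)\bind\phi$ (valid only for $n\geq 1$, which you correctly track), and monotonicity of $\bindsymbol$ in its second argument, which follows from the dcppo-enrichment. Your separate treatment of the transition $\sem{t}^1 \cpoleq \sem{t}^2$ via the pointwise inequality $\phi \cpoleq \metalambda s.\,\sem{s}^1$ is the one place where a careless induction would break, and you handle it correctly (using that a value $v$ steps to $\eta(v)$ and the left unit law to get $\sem{v}^1 = \eta(v) = \phi(v)$). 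The only implicit assumption worth flagging is that $\tosem$ is total on closed computations, so that $e$ with $t \mathrel{\tosem} e$ always exists; the paper asserts this when defining the operational semantics, so you are entitled to it.
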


Consequently, we define (overriding the previous finitary definition) $\sem{t} = \lub_n \sem{t}^n$. Notice that this also gives 
a straightforward way to extend the observation function $\obsfun$ (on terms) 
to the infinitary setting. We simply define 
$\obs{t} \defeq T(!)(\sem{t})$, with $!: \Vals \to 1$ be as before. Moreover, 
since $T$ is \dcppo{-}enriched, $\obsfun$ is continuous (and thus monotone). 
In particular, we can define a bounded observation function 
as $\obsn{t}{n} \defeq T(!)(\sem{t}^n)$ and see that 
$\obs{t} = \lub_n \obsn{t}{n}$.
We now have all the ingredients needed to extend our characterization to the 
infinitary setting.
\section{Characterizing Infinitary Behaviors}\label{sec:infchar}

In this section, we extend the soundness and completeness results 
previously seen to the infinitary setting. 
Remarkably, most of the proofs given in the finitary case, such as 
those of subject reduction and expansion, scale to the infinitary case. This is no coincidence but a main strength 
of the abstract relational approach that we have developed in the previous 
part of this work.

\subsubsection*{Soundness.}
As in the finitary case, we have subject reduction for WP monads.

\begin{proposition}[Subject Reduction, Infinitary]
\label{prop:subject-reduction-infinitary}\mbox{} Let $\monad$ be WP. Then:
\begin{enumerate}
    \item  Let $\tm$ be a closed $\lambda$-term. If $\tjudg{}{\tm}{\monty}$ and $\tm\,\tosem\etm$, then 
	    $\kleisli{\vdash}{\etm}:{\monty}$.
	\item Let $e$ be a monadic closed $\lambda$-term. If $\kleislirel{\vdash} e: M$ 
	    and $e \mathrel{\kleisli{\tosem}} e'$, then $\kleisli{\vdash} e' : M$.
	\end{enumerate}
\end{proposition}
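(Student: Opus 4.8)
The plan is to re-run the proof of the finitary subject-reduction result (Proposition~\ref{prop:sub-red}) and verify that the single new typing rule $\rbot$ is harmless --- making this kind of transfer essentially automatic is precisely what the abstract relational setup was built for. As in the finitary case, item~(2) follows from item~(1) by Proposition~\ref{prop:subject-reduction-lifting}: instantiating it with $\relone \defeq {\trel}$ and $f \defeq {\tosem}$, the hypothesis $\trel^{\circ};{\tosem} \subseteq {\kleislirel{\trel}}^{\circ}$ \emph{is} item~(1) and its conclusion ${\kleislirel{\trel}}^{\circ};\kleisli{{\tosem}} \subseteq {\kleislirel{\trel}}^{\circ}$ \emph{is} item~(2); this step uses only that $\monad$ is WP. So it suffices to prove item~(1), which I would do by induction on the evaluation context $\evctx$ with $\tm = \evctx[r]$ ($r$ a redex), after re-proving the standard substitution lemma; each of these inductions is the finitary one with a single extra case for $\rbot$.

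Those two extra cases are both immediate. For the substitution lemma: if $\Gamma,\var{:}\intty \vdash \tmfour : \monty$ ends with $\rbot$ then $\monty = \bot$, and $\Gamma \vdash \tmfour\isub\var\val : \bot$ by $\rbot$ again. For item~(1): if the derivation of $\vdash \tm : \monty$ ends with $\rbot$ then $\monty = \bot$, and in fact $\kleisli{\vdash}\, \etm : \bot$ holds for \emph{every} $\etm \in \monad(\closed{\Comps})$, hence a fortiori for the reduct of $\tm$. Indeed, every closed computation is typable with $\bot$, so $g : \closed{\Comps} \to \mathcal{G}(\trel)$ defined by $g(\tmthree) \defeq (\tmthree,\bot)$ is a well-defined function, and $\monad(g)(\etm)$ witnesses $\kleisli{\vdash}\, \etm : \bot$: one has $\monad(\pi_1)(\monad(g)(\etm)) = \etm$, while $\mu(\monad(\pi_2)(\monad(g)(\etm))) = \etm \bind (\metalambda\! \tmtwo.\bot) = \bot$ because $\bind$ is strict in its function argument --- the one place where \dcppo{-}orderedness enters.

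When the final rule is not $\rbot$ we have $\monty \neq \bot$, the final rule is the one forced by the shape of $\tm$, and the argument is verbatim the finitary one. In the base case $\evctx = \ctxhole$ one inverts $\rapp$, $\runit$ and $\rabs$ for a $\beta$-redex $(\la\var\tmfour)\val$ --- using the unit law $\eta(\intty)\bind k = k(\intty)$, the substitution lemma, and the inclusion $\vdash \tmthree:\monty \implies \kleisli{\vdash}\,\eta(\tmthree):\monty$, which holds for WP monads since $\relone;\eta \subseteq \eta;\relator\relone$ and $\eta;\mu = \id$ --- and one inverts $\rop$ for an operation-redex, using naturality and $\mu$-compatibility of $\gefopfun$. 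In the inductive case $\evctx = \val\evctx'$ one inverts $\rapp$, applies the induction hypothesis to $\evctx'$, and closes under monadic binding, keeping track of the support of the argument's monadic type along $\mu$ so that the premises $[\vdash \val : \arr{\intty_i}{\monty_i}]_{1\leq i\leq n}$ suffice.

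I do not expect any real obstacle. The relational machinery --- Proposition~\ref{prop:subject-reduction-lifting} and the closure-under-binding and closure-under-algebraic-operations results for lax relational extensions --- is completely insensitive to $\rbot$, and since $\rbot$ produces only $\bot$, it never occurs as the last step of a derivation of a non-$\bot$ monadic type, so it cannot interfere with the inversion lemmas that drive the main branch of the induction. The only genuinely new ingredient is the absorption law $\bot \bind (-) = \bot$, i.e. strictness of $\bind$. This is exactly the ``essentially for free'' transfer from the finitary system that the paper advertises.
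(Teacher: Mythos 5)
Your proposal is correct and follows the paper's route exactly: the paper's own proof of this proposition is literally ``exactly as in the finitary case'', i.e.\ item~(2) from item~(1) via Proposition~\ref{prop:subject-reduction-lifting}, and item~(1) by induction on evaluation contexts using the substitution lemma. You additionally spell out the one genuinely new case (a derivation ending in $\rbot$), and your treatment of it --- every monadic term receives type $\bot$ under $\kleisli{\trel}$, witnessed by $\monad(g)(\etm)$ with $g(\tmthree)=(\tmthree,\bot)$ and strictness of $\bindsymbol$ --- is sound and fills in a detail the paper elides.
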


Notice that Proposition~\ref{prop:subject-reduction-infinitary} is given relying on the 
Barr extension of $T$ which, by its very definition, does not take into account 
the order $\cpoleq$ induced by $T$. In particular, whenever 
we have $\kleisli{\vdash} e : M$, then $e$ and $M$ must have the same 
effectful behaviour. This means that as long as we stick with $\bextt{\vdash}$, 
it is simply not possible to extend subject 
reduction (and thus soundness) to the full evaluation $\sem{-}$, the latter being 
infinitary. 
Consequently, contrary to the finitary case, there is no hope to prove that whenever 
$\vdash t: M$, then $M$ encodes the whole observable behaviour of $t$, i.e. 
$\obs{\sem{t}}$. What we can show, however, is that $M$ provides an approximation 
of such a behaviour, and that the limit of such approximations is 
precisely the operational behaviour of $t$. The right tool to achieve such a goal,
is an \emph{ordered} version of the Barr extension \cite{Huge-Jacobs/Simulations-in-coalgebra/2004}.

\begin{definition}[Right Barr Extension]
    Given $\relone \subseteq A \times B$, we define its \emph{right Barr extension} 
    $\widehat{T}_{\cpogeq}(\relone) \subseteq T(A) \times T(B)$ as 
    $\widehat{T}_{\cpogeq}(\relone) \defeq \bextt{\relone}; {\cpogeq}$.
\end{definition}

\begin{proposition}[\cite{Huge-Jacobs/Simulations-in-coalgebra/2004}]
\label{prop:right-barr-extension-is-a-lax-extension}
If $\monad$ is WP, then
    $\widehat{T}_{\cpogeq}$ is a lax relational extension.
\end{proposition}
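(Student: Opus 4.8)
The plan is to derive the statement from two facts already available. First, by Theorem~\ref{thm:barr}, since $\monad$ is WP its Barr extension $\bextt{-}$ is a lax relational extension: it is monotone, preserves identities, satisfies $\monad(f)=\bextt{f}$ on functions, and is laxly compatible with composition, $\eta$, and $\mu$. Second, being \dcppo{-}ordered, $\monad$ comes with a compatible family of preorders $\{\cpogeq_A\}_A$, each $\cpogeq_A\colon\monad(A)\torel\monad(A)$ reflexive and transitive, with all the structure maps $\monad(f)$, $\eta$, $\mu$ monotone because $\bindsymbol$ is strict and continuous. Since $\widehat{T}_{\cpogeq}(\relone)=\bextt{\relone};{\cpogeq}$ is the post-composition of a lax relational extension with this preorder, the whole proof is a check that post-composition preserves the (one-sided) lax-extension axioms, in the sense used by Huge-Jacobs~\cite{Huge-Jacobs/Simulations-in-coalgebra/2004}; the transpose axiom of Definition~\ref{def:relator} is deliberately not on this list, consistently with $\cpogeq$ being asymmetric.

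I would then go through the axioms. Monotonicity of $\widehat{T}_{\cpogeq}$ is immediate from monotonicity of $\bextt{-}$ and of relational composition. Reflexivity: $\widehat{T}_{\cpogeq}(\idrel)=\bextt{\idrel};{\cpogeq}={\idrel};{\cpogeq}={\cpogeq}\supseteq\idrel$, using that $\bextt{-}$ preserves identities and $\cpogeq$ is reflexive; the same line records the identity $\cpogeq_A=\widehat{T}_{\cpogeq}(\idrel_A)$. Functions: $\monad(f)=\bextt{f}\subseteq\bextt{f};{\cpogeq}=\widehat{T}_{\cpogeq}(f)$, again by reflexivity. Unit law: $\relone;\eta\subseteq\eta;\bextt{\relone}\subseteq\eta;\bextt{\relone};{\cpogeq}=\eta;\widehat{T}_{\cpogeq}(\relone)$. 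The remaining axioms are those that mention relational composition --- lax functoriality $\widehat{T}_{\cpogeq}(\relone);\widehat{T}_{\cpogeq}(\reltwo)\subseteq\widehat{T}_{\cpogeq}(\relone;\reltwo)$ and the lax $\mu$-law $\widehat{T}_{\cpogeq}(\widehat{T}_{\cpogeq}(\relone));\mu\subseteq\mu;\widehat{T}_{\cpogeq}(\relone)$ --- and after unfolding $\widehat{T}_{\cpogeq}$ they both follow from lax functoriality of $\bextt{-}$, transitivity of $\cpogeq$, and monotonicity of $\mu$, provided one has an interchange property letting $\bextt{-}$ slide past the order.

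That interchange property is the step I expect to be the main obstacle, and it is exactly the place where weak-pullback preservation is essential rather than cosmetic: one writes $\bextt{\relone}=\dual{(\monad(\pi_1))};\monad(\pi_2)$ for the projections out of the graph of $\relone$, uses WP to transport a joint witness $\Phi\in\monad\mathcal{G}(\relone)$ along $\cpogeq$, and repackages via monotonicity of $\monad(\pi_1)$, $\monad(\pi_2)$ (and of $\mu$ for the one-level-up version). This is precisely the lax-relation-lifting lemma of Huge-Jacobs~\cite{Huge-Jacobs/Simulations-in-coalgebra/2004} specialised to $\monad$; in the write-up I would isolate it as a lemma, prove it by this diagram chase, and then assemble the axioms above mechanically --- the remaining verifications being routine bookkeeping once the interchange lemma is in hand.
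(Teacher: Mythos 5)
The paper offers no proof of this proposition: it is imported wholesale from \cite{Huge-Jacobs/Simulations-in-coalgebra/2004}, so there is no argument of record to compare yours against, and I can only assess your reconstruction on its own terms. Its architecture is the standard one and most of it is fine: monotonicity, $\idrel\subseteq\widehat{T}_{\cpogeq}(\idrel)$, $T(f)\subseteq\widehat{T}_{\cpogeq}(f)$ and the unit law all follow from reflexivity of $\cpogeq$ together with the corresponding (lax) properties of the Barr extension, exactly as you write; and you are right that the transpose axiom of Definition~\ref{def:relator} must be dropped, since already $\dual{\widehat{T}_{\cpogeq}(\idrel)}={\cpoleq}\not\subseteq{\cpogeq}=\widehat{T}_{\cpogeq}(\idrel)$, so the laxified definition cannot be read verbatim.

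The gap sits precisely at the step you flag as the main obstacle, the interchange ${\cpogeq};\bextt{\reltwo}\subseteq\bextt{\reltwo};{\cpogeq}$: this does \emph{not} follow from weak-pullback preservation. WP lets you glue two Barr witnesses over a common middle object (that is how $\bextt{\relone};\bextt{\reltwo}\subseteq\bextt{\relone;\reltwo}$ is proved), but it gives no way to transport a witness $\Phi\in\monad\mathcal{G}(\reltwo)$ along the order, which is an independent piece of structure on $\monad(B)$. Indeed the interchange, and with it the lax composition axiom, can fail under the stated hypotheses: take the maybe monad with its flat \dcppo{} order (WP, even WC, and \dcppo{-}ordered), $\relone=\{(a,b)\}:\{a\}\torel\{b,b'\}$ and $\reltwo=\{(b',c)\}:\{b,b'\}\torel\{c\}$. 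Then $\iota(a)\mathrel{\widehat{T}_{\cpogeq}(\relone)}\bot$ and $\bot\mathrel{\widehat{T}_{\cpogeq}(\reltwo)}\bot$, yet $\relone;\reltwo=\emptyset$ and $\widehat{T}_{\cpogeq}(\emptyset)=\{(\bot,\bot)\}$, so $(\iota(a),\bot)$ lies in $\widehat{T}_{\cpogeq}(\relone);\widehat{T}_{\cpogeq}(\reltwo)$ but not in $\widehat{T}_{\cpogeq}(\relone;\reltwo)$. What Hughes--Jacobs actually assume at this point is an additional \emph{stability} condition on the ordering, which has to be imposed, or verified for each \dcppo{-}ordered monad, over and above WP; alternatively, one can note that the only consequences used downstream in this paper --- closure under algebraic operations and under binding along a \emph{function} $g$ --- need only monotonicity of $\kleisli{g}$, $\mu$ and the operations, i.e.\ ${\cpogeq};h\subseteq h;{\cpogeq}$ for the structure maps $h$, which the \dcppo{-}enrichment does supply. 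Your write-up should either add the stability hypothesis and derive the interchange from it, or retreat to this weaker but sufficient set of axioms; as it stands, the claim that the interchange is ``exactly where WP is essential'' is where the proof breaks.
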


Using the right Barr extension, we define the logical relation 
interpreting $\vdash$.

\begin{definition}[Infinitary Logical Relation]
We define the logical semantics of $\trel$ (restricted to closed expressions) 
as the relation ${\trelsem} \defeq {\trelsem_{\mathbb{A}}} + {\trelsem_{\mathbb{I}}} + 
{\trelsem_{\mathbb{M}}}$
that inductively refines $\trel$ (\ie $\trelsem\,\subseteq\trel$) as follows, where 
we use the notation $\widetilde{\trelsem}\; e : M$ in place of 
$e  \mathrel{\widehat{T}_{\cpogeq}(\trelsem)} M$.
\[
\begin{array}{l@{\hspace{0.05cm}}c@{\hspace{0.05cm}}l}
    \trelsem_{\mathbb{A}} v: I \to M
    &\mbox{ iff }& \forall w.\, \trelsem_{\mathbb{I}} w: I \mbox{\,implies\,} \trelsem_{\mathbb{M}} vw: M
    \\
    \trelsem_{\mathbb{I}} v: \{A_1, ..., A_n\} 
    &\mbox{ iff }& \forall i.\, \trelsem_{\mathbb{A}} v: A_i
    \\
    \trelsem_{\mathbb{M}} t: M 
    &\mbox{ iff }& \widetilde{\trelsem}\; \sem{t} : M 
\end{array}
\]
\end{definition}
As usual, we omit subscripts whenever unambiguous. 

\begin{lemma}\label{l:semlrinf}
$\trelsem t: M$ implies $\obs{\sem{t}} \cpogeq \obsfun(M)$.
\end{lemma}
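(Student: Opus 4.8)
The plan is to follow the pattern of the finitary Lemma~\ref{l:semlr}, replacing the exact Barr extension $\bextt{\cdot}$ by the right Barr extension $\bexttinf{\cdot}$ and, accordingly, weakening the equality of observations to the inequality $\cpogeq$. Since the statement lives at the monadic level, no induction on $\trelsem$ is needed: one simply unfolds the defining clause of $\trelsem_{\mathbb{M}}$ --- namely, $\trelsem_{\mathbb{M}} t : M$ holds iff $\sem{t} \mathrel{\bexttinf{\trelsem}} M$ --- once. Note that, unlike in the finitary case, this unfolding carries no hidden termination obligation, since in the infinitary setting $\sem{t} = \lub_n \sem{t}^n$ is always a well-defined element of $\monad(\closed{\Vals})$.

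First I would unfold the hypothesis. By definition, $\trelsem t : M$ means $\sem{t} \mathrel{\bexttinf{\trelsem}} M$, and since $\bexttinf{\trelsem} = \bextt{\trelsem}; {\cpogeq}$ there is a monadic type $\montytwo \in \monad(\IntTys)$ with $\sem{t} \mathrel{\bextt{\trelsem}} \montytwo$ and $\montytwo \cpogeq \monty$ (here $\trelsem$ abbreviates the restriction $\trelsem_{\mathbb{I}} \subseteq \closed{\Vals} \times \IntTys$, exactly as in the finitary definition). By the elementwise description of the Barr extension (Definition~\ref{def:barr-extension}), from $\sem{t} \mathrel{\bextt{\trelsem}} \montytwo$ we get a witness $\Phi \in \monad(\mathcal{G}(\trelsem))$ with $\monad(\pi_1)(\Phi) = \sem{t}$ and $\monad(\pi_2)(\Phi) = \montytwo$, where $\pi_1 : \mathcal{G}(\trelsem) \to \closed{\Vals}$ and $\pi_2 : \mathcal{G}(\trelsem) \to \IntTys$ are the projections. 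The crucial observation is that $\pi_1 ; {!_{\Vals}} = \pi_2 ; {!_{\IntTys}}$ as maps into $1 = \{\star\}$ (both collapse everything to $\star$), so applying the functor $\monad$ and evaluating at $\Phi$ gives
\[
\obs{\sem{t}} = \monad(!_{\Vals})(\sem{t}) = \monad(\pi_1 ; {!_{\Vals}})(\Phi) = \monad(\pi_2 ; {!_{\IntTys}})(\Phi) = \monad(!_{\IntTys})(\montytwo) = \obsfun(\montytwo).
\]

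It then remains only to pass from $\montytwo \cpogeq \monty$ to $\obsfun(\montytwo) \cpogeq \obsfun(\monty)$ --- \ie to use monotonicity of $\monad(!_{\IntTys})$ --- and chain with the previous display to get $\obs{\sem{t}} = \obsfun(\montytwo) \cpogeq \obsfun(\monty)$, as required. That monotonicity is the one step not entirely spelled out by the excerpt, which, for a \dcppo{-}ordered monad, records continuity only of $\bindsymbol$ and of the algebraic operations: it follows from the identity $\monad(f) = \kleisli{(f;\unit)}$ --- a consequence of the monad law $\monad(\unit);\mu = \idrel$ --- whence $\monad(f)(x) = x \bind (f;\unit)$, so $\monad(f)$ is continuous, hence monotone, because $\bindsymbol$ is continuous in its first argument; this is the same fact underlying the continuity of $\obsfun$ asserted in the excerpt. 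Apart from this, the proof is routine bookkeeping with the Barr extension, mirroring the finitary argument step by step, and I do not expect any genuine obstacle.
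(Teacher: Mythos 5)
Your proof is correct and matches the paper's argument in substance: both reduce the claim to the containment $\bexttinf{\relone};T(!)\subseteq T(!);{\cpogeq}$, splitting the right Barr extension into its exact Barr part (which preserves observations on the nose, by functoriality of $\monad$ and terminality of $1$, exactly as in the finitary Lemma) and the order part (handled by monotonicity of $T(!)$). The only difference is presentational — the paper derives the containment point-free from the lax-relational-extension axioms together with the identity $\bexttinf{\idrel}={\cpogeq}$, whereas you chase an explicit witness $\Phi\in\monad(\mathcal{G}(\trelsem))$ — and your side remark justifying monotonicity of $\monad(!)$ via $\monad(f)=\kleisli{(f;\unit)}$ and continuity of $\bindsymbol$ in its first argument is sound.
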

As in the finitary case, we prove the soundness of our type system showing that $\trelsem$ and $\vdash$ coincide.


\begin{proposition}[Soundness]\label{prop:soundnessI}
If $\monad$ is WP, then
    ${\trel} = {\trelsem}$.
\end{proposition}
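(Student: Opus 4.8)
The plan is to mirror the proof of Proposition~\ref{prop:soundness-lr} from the finitary case, adapting it to the right Barr extension $\widehat{T}_{\cpogeq}$. The inclusion ${\trelsem} \subseteq {\trel}$ holds by construction, since the logical relation $\trelsem$ is defined as an inductive refinement of $\trel$; so the real content is the reverse inclusion ${\trel} \subseteq {\trelsem}$, i.e. that every typable term/value is logically related to its type. First I would prove the standard \emph{compatibility} (or \emph{fundamental}) lemma: $\trelsem$ is closed under all the typing rules of Figure~\ref{fig:mon-system} together with the new rule $\rbot$. This is done by induction on the structure of type derivations, proving the open version of the statement first (with substitutions for the free variables drawn from the logical relation at the appropriate intersection types) and then specialising to closed terms. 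The cases for $\rax$, $\rint$, $\rabs$, $\runit$ are essentially identical to the finitary proof; the case for $\rop$ uses that the algebraic operations $\gefopfun$ are continuous and that $\widehat{T}_{\cpogeq}$ is closed under algebraic operations (a consequence of Proposition~\ref{prop:right-barr-extension-is-a-lax-extension} and the general fact that lax relational extensions are closed under algebraic operations). The case for $\rapp$ uses closure of $\widehat{T}_{\cpogeq}$ under monadic binding, again from laxness, together with subject reduction (Proposition~\ref{prop:subject-reduction-infinitary}) to relate $\sem{\val\tm}$ with the bind of the component semantics.

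The genuinely new case is $\rbot$: we must check $\trelsem_{\mathbb{M}} t : \bot$ for every closed computation $t$, i.e. $\widetilde{\trelsem}\,\sem{t} : \bot$, which unfolds to $\sem{t} \mathrel{\widehat{T}_{\cpogeq}(\trelsem)} \bot$, i.e. $\sem{t} \mathrel{\bextt{\trelsem}; {\cpogeq}} \bot$. Since $\bot$ is the least element, it suffices to exhibit some $\Phi \in T(\mathbb{I})$ with $\sem{t} \mathrel{\bextt{\trelsem}} \Phi$ and $\Phi \cpogeq \bot$; taking $\Phi = \bot$ itself works because $\bot \mathbin{\bextt{\relone}} \bot$ for any relation $\relone$ (the empty multiset/multidistribution witness lies in $T(\mathcal{G}(\relone))$ and is sent to $\bot$ by both projections, using strictness). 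This is precisely where the \emph{right} Barr extension, rather than the plain Barr extension, is essential: the ${\cpogeq}$ component lets $\bot$ be below \emph{any} semantic value, which is exactly the slack needed to interpret the uninformative type. One must also handle the interaction of $\rbot$ with the other rules inside the induction — e.g. when $\rbot$ appears as a premise of $\rapp$ or $\rop$ — but here strictness and continuity of bind and of the algebraic operations (guaranteed since $T$ is $\dcppo$-ordered) make these go through: a $\bot$ premise contributes $\bot$ to the composite, matching the semantic side where the corresponding subcomputation's contribution is likewise $\bot$ by strictness.

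Having established the compatibility lemma, ${\trel} \subseteq {\trelsem}$ follows by a routine induction on typing derivations, and combined with ${\trelsem} \subseteq {\trel}$ we conclude ${\trel} = {\trelsem}$. The main obstacle I anticipate is \emph{not} the $\rbot$ case itself but the bookkeeping in the $\rapp$ case: one has to reconcile the type-level bind ${(\{\intty_i\Rightarrow\monty_i\}_{1\leq i\leq n})}$ — indexed over the support of the argument's type — with the semantic-level bind $\sem{\tm} \bind \phi$, and show the support condition $\supp{\montytwo} \subseteq \{\intty_1,\dots,\intty_n\}$ together with the right-Barr relatedness $\widetilde{\trelsem}\,\sem{\tm} : \montytwo$ transports correctly through $\widehat{T}_{\cpogeq}$'s laxness. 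The fact that $\widehat{T}_{\cpogeq}$ is only a \emph{lax} relational extension (inequality, not equality) is in fact what we want here — soundness is a ``forward'' property — so no appeal to weak cartesianness is needed, consistent with the hypothesis being merely WP.
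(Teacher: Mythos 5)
Your overall strategy --- reduce to the finitary compatibility argument, observe that only laxness of $\widehat{T}_{\cpogeq}$ (hence only WP) is needed for this ``forward'' direction, and isolate $\rbot$ as the genuinely new case --- matches the paper, whose proof of this proposition is literally ``exactly as in the finitary case.'' The gap is in your treatment of that new case. To establish $\trelsem\, t : \bot$ you must exhibit \emph{some} $\Phi$ with $\sem{t} \mathrel{\bextt{\trelsem_{\mathbb{I}}}} \Phi$ (the condition $\Phi \cpogeq \bot$ is then automatic), but your chosen witness $\Phi = \bot$ requires $\sem{t} \mathrel{\bextt{\trelsem}} \bot$, and the fact you invoke, $\bot \mathrel{\bextt{\relone}} \bot$, only relates $\bot$ to $\bot$: it says nothing when $\sem{t} \neq \bot$. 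The Barr extension forces both sides to have the same effectful shape --- for the multidistribution monad, a coupling in $T(\mathcal{G}(\relone))$ has marginals of equal total mass, so a converging term with $\sem{t} = 1 \cdot v$ is \emph{not} $\bextt{\trelsem}$-related to the empty multidistribution $\bot$. As written, the $\rbot$ case fails for every converging term.

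The repair is short but rests on a different idea: every closed value $v$ satisfies $\trelsem_{\mathbb{I}}\, v : \emi$ vacuously (there is no $A_i$ to check), so the map sending each value $v$ to the pair $(v, \emi)$ lands in $\mathcal{G}(\trelsem_{\mathbb{I}})$, and applying $T$ to it yields a coupling witnessing $\sem{t} \mathrel{\bextt{\trelsem_{\mathbb{I}}}} \Phi$ for $\Phi$ the image of $\sem{t}$ under the constant-$\emi$ map; since $\Phi \cpogeq \bot$, the ${\cpogeq}$ component of the right Barr extension then absorbs the drop to $\bot$. In other words, the slack you correctly attribute to ${\cpogeq}$ must be used \emph{after} a genuine Barr-relatedness witness built from the empty intersection, not in place of one. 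The remainder of your plan (the $\rapp$ bookkeeping via laxness and closure under binding, $\rop$ via closure under algebraic operations, and ${\trelsem} \subseteq {\trel}$ by construction) is consistent with the paper's finitary proof of Proposition~\ref{prop:soundness-lr}.
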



\subsubsection*{Completeness.}
As in the finitary case, completeness is proved via subject expansion. This latter result, in turn, is obtained exactly as in the finitary case. The only difference is that we are able to drop the constraint about non-erasing operations. Indeed, this time we can type erased (and thus possibly diverging) arguments with the rule $\rbot$.
\begin{proposition}[Subject Expansion, Infinitary]
\label{prop:subject-expansion-inf}\mbox{} Let $\monad$ be WC. Then:
\begin{enumerate}
    \item 	If $\tm \mathrel{\tosem} e$ and $\kleisli{\vdash} e : M$, 
        then $\vdash \tm: M$.
	\item If $e \mathrel{\kleisli{\tosem}} e'$ and $\kleisli{\vdash}e' : M$, 
	    then $\kleisli{\vdash} e : M$.
\end{enumerate}
\end{proposition}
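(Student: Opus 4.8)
The plan is to re-run the argument behind Proposition~\ref{prop:subject-expansion}: in the infinitary system the only new ingredient is the rule $\rbot$, and it enters exactly where the finitary proof used the non-erasing hypothesis on operations. As there, item~2 falls out of item~1 for free. Instantiating Proposition~\ref{prop:subject-expansion-lifting} with $f \defeq \tosem$ and $\relone \defeq \reltwo \defeq \trel$, the inclusion $\tosem;\kleisli{\vdash} \subseteq {\vdash}$ (which is exactly item~1) lifts to $\kleisli{\tosem};\kleisli{\vdash} \subseteq \kleisli{\vdash}$ (which is item~2); this uses that $\monad$ is WC. So the whole statement reduces to item~1: if $\tm \mathrel{\tosem} \etm$ and $\kleisli{\vdash}\,\etm : \monty$, then $\vdash \tm : \monty$, and I would prove this syntactically.

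First I would establish the anti-substitution lemma: if $\vdash \tmtwo\isub\var\val : \monty$, then there is an intersection $\intty$ with $\var:\intty \vdash \tmtwo : \monty$ and $\vdash \val : \linty$ for every $\linty \in \intty$. This goes by induction on $\tmtwo$ and is unaffected by $\rbot$, since a $\rbot$-typed subterm remains $\rbot$-typed after substitution. Then I would prove item~1 by induction on the evaluation context $\evctx$ with $\tm = \evctxp{\tmthree}$, $\tmthree$ a redex, $\tmthree \mapstoo \etm_0$ and $\etm = \etm_0 \bind (\metalambda u. \unit(\evctxp{u}))$. In the base case $\evctx = \ctxhole$, $\tmthree = (\la\var\tmtwo)\val$, we have $\etm = \unit(\tmtwo\isub\var\val)$, so Proposition~\ref{prop:subject-expansion-eta} (again using WC) gives $\vdash \tmtwo\isub\var\val : \monty$; the anti-substitution lemma followed by $\rabs$, $\runit$ and $\rapp$ --- recalling that a value only gets types of shape $\unit(\intty)$ and that $\unit(\intty) \bind \{\intty \Rightarrow \monty\} = \monty$ --- reconstructs $\vdash (\la\var\tmtwo)\val : \monty$. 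In the inductive step $\evctx = \val\evctx'$ I would argue exactly as in the proof of Proposition~\ref{prop:subject-expansion}, using $\rapp$ and the closure of (lax) relational extensions under monadic binding to transport $\monty$ back through the context. Neither case touches $\rbot$.

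The case that genuinely changes is the remaining base case $\evctx = \ctxhole$, $\tmthree = \op{\tm_1,\ldots,\tm_n}$, where by algebraicity $\etm = \gefop{\unit(\tm_1),\ldots,\unit(\tm_n)}$. Here I would have to invert $\kleisli{\vdash}\,\gefop{\unit(\tm_1),\ldots,\unit(\tm_n)} : \monty$. Since $\monad$ is WC and $\gefopfun$ is algebraic, this should force $\monty = \gefop{\monty_1,\ldots,\monty_n}$ for suitable $\monty_1,\ldots,\monty_n$ such that, for every position $i$ whose argument $\tm_i$ is \emph{not} erased by $\gefopfun$, one has $\unit(\tm_i) \mathrel{\kleisli{\vdash}} \monty_i$ and hence $\vdash \tm_i : \monty_i$ by Proposition~\ref{prop:subject-expansion-eta}; for an erased position $i$ the component $\monty_i$ is unconstrained, so I would take $\monty_i \defeq \bot$ and type $\vdash \tm_i : \bot$ with $\rbot$. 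Rule $\rop$ then gives $\vdash \op{\tm_1,\ldots,\tm_n} : \gefop{\monty_1,\ldots,\monty_n} = \monty$. This is precisely the step where the finitary proof had to assume $\supp{\gefop{\unit(\tm_1),\ldots,\unit(\tm_n)}} = \{\tm_1,\ldots,\tm_n\}$; the rule $\rbot$ makes the assumption superfluous.

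The hard part is this last inversion: making precise, for an arbitrary WC monad, the decomposition of a monadic type assigned through $\kleisli{\vdash}$ to an algebraic term $\gefop{\unit(\tm_1),\ldots,\unit(\tm_n)}$ --- that is, pinning down which argument positions are forced (carrying the type $\monty_i$ of $\tm_i$) and which are free to be filled with $\bot$. This needs the full axiomatics of a relational extension, \ie the equalities of Definition~\ref{def:relator} rather than merely the lax inclusions, available because $\monad$ is WC (Theorem~\ref{thm:barr}), together with the description of the Barr extension of $\monad$ in terms of its equational presentation. Everything else is a routine replay of the finitary development; in particular, once item~1 is settled, the passage to monadic reduction $\kleisli{\tosem}$ in item~2 is immediate, as noted above.
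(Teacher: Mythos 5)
Your proposal matches the paper's proof, which is given verbatim ``as in the finitary case'': induction on evaluation contexts with the anti-substitution lemma, Proposition~\ref{prop:subject-expansion-eta} for the unit-shaped reducts, Proposition~\ref{prop:subject-expansion-lifting} to derive item~2 from item~1, and the rule $\rbot$ absorbing exactly the erased arguments in the operation case, precisely as you identify. The only nit is that by restricting the base case to redexes you omit the clause $\val \mathrel{\tosem} \unit(\val)$, which the paper handles directly via Proposition~\ref{prop:subject-expansion-eta} and which you clearly have the tools for.
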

Then, we are able to prove approximate completeness, by finitary means. 
\begin{theorem}[Approximate Completeness]\label{thm:completenessI}
Let $\tm$ be a closed $\l$-term. Then, for each $k\geq 0$, there exist $\pi_k\pof\tjudg{}\tm{\monty_k}$ such that $\obs{\monty_k}=\obsn{\tm}{k}$.
\end{theorem}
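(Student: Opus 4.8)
The plan is to produce, for each $k\ge 1$, an explicit monadic type $\monty_k$ by typing the $k$-step reduct of $\tm$ exactly as informatively as the approximate evaluation $\sem{\tm}^k$ is built from it, and then to transport that type back along the $k$-step reduction by subject expansion. The case $k=0$ is immediate: by strictness of $T$ we have $\obsn{\tm}{0}=T(!)(\sem{\tm}^0)=T(!)(\bot)=\bot$, so one takes $\monty_0\defeq\bot$ with $\pi_0$ the one-line derivation by rule $\rbot$, and indeed $\obs{\monty_0}=\bot=\obsn{\tm}{0}$.

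For $k\ge 1$, let $e$ be the (unique, since $\tosem$ is a function) monadic term with $\tm\mathrel{\tosem^k}e$, so that by definition $\sem{\tm}^k=e\bind\phi$, where $\phi$ sends a closed value $\val$ to $\unit(\val)$ and every other closed computation to $\bot$. Define the ``static truncation'' $\psi\colon\closed{\Comps}\to T(\IntTys)$ by $\psi(\val)\defeq\unit(\emi)$ for $\val\in\closed{\Vals}$ and $\psi(\tmtwo)\defeq\bot$ otherwise, and set $\monty_k\defeq e\bind\psi$. The key idea is that $\psi$ is the typing-level mirror of $\phi$: an already-converged value receives the convergent-yet-uninformative type $\unit(\emi)$, while a still-running --- hence possibly diverging --- term receives $\bot$.

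To see that $\tjudg{}{\tm}{\monty_k}$, note first that $\psi$, viewed as a relation, is contained in $\vdash$: a closed value $\la\var\tm$ is typable with $\emi$ by rule $\rint$ with empty index set, hence with $\unit(\emi)$ by rule $\runit$, while every closed computation is typable with $\bot$ by rule $\rbot$. The Kleisli extension of monadic relations is monotone in its relation argument (it is $\bext{\monad}(-)$ post-composed with $\mu$, and $\bext{\monad}$ is monotone), so $\kleisli{\psi}\subseteq\kleisli{\vdash}$; since $\monty_k=\kleisli{\psi}(e)$ this yields $\kleisli{\vdash}\,e:\monty_k$. Decomposing $\tm\mathrel{\tosem^k}e$ as one term-level step followed by $k-1$ monadic steps and iterating the two clauses of Proposition~\ref{prop:subject-expansion-inf} then gives $\tjudg{}{\tm}{\monty_k}$; let $\pi_k$ be a resulting derivation. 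Finally, $T(!)$ is itself a Kleisli extension, hence commutes with $\bind$, so $\obs{\monty_k}=T(!)(e\bind\psi)=e\bind(\psi;T(!))$; by naturality of $\unit$ we have $T(!)(\psi(\val))=\unit(\star)$ for values, and by strictness of $T(!)$ we have $T(!)(\psi(\tmtwo))=\bot$ for non-values, so $\psi;T(!)$ coincides with $\phi;T(!)$, whence $\obs{\monty_k}=T(!)(e\bind\phi)=T(!)(\sem{\tm}^k)=\obsn{\tm}{k}$.

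The only genuinely creative step is the definition of $\psi$ together with the observation that it must truncate exactly as $\phi$ does; everything else is bookkeeping --- promoting single-step subject expansion to its $k$-step monadic form (which, through Propositions~\ref{prop:subject-expansion-inf}, \ref{prop:subject-expansion-lifting} and~\ref{prop:subject-expansion-eta}, relies on $\monad$ being WC throughout), and the small diagram chases pushing $T(!)$ through $\bind$, $\unit$ and $\bot$ (which use that $\monad$ is \dcppo{-}ordered). Accordingly, the main obstacle is the correct handling of the monadic subject-expansion lifting; note in particular that, unlike for soundness, WP alone does not suffice here --- the completeness direction genuinely requires $\monad$ to be weakly cartesian.
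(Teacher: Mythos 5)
Your proof is correct and follows essentially the same route as the paper's: type the $k$-step reduct by assigning $\unit(\emi)$ to the values and $\bot$ to the non-values in its support, check that this reproduces $\obsn{\tm}{k}$, and pull the type back along the reduction via the two clauses of subject expansion. The only cosmetic difference is that you package the base typing as a single function $\psi$ transported through the monotone Kleisli extension, where the paper instead runs an induction on $k$ with a case split between monadic values and general monadic terms.
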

Finally, we can claim the full characterization of the infinitary effectful behavior of any program by the way of our intersection type system. 
\begin{corollary}[Characterization]\label{c:char-inf}
Let $\tm$ be a closed $\l$-term. Then $O(\tm)=\obs{\tm}$.
\end{corollary}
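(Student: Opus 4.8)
The plan is to sandwich $O(\tm)$ between $\obs{\tm}$ from both sides: one inequality comes from soundness together with Lemma~\ref{l:semlrinf}, the other from approximate completeness (Theorem~\ref{thm:completenessI}).

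First I would record the shape of the two objects being compared. By definition $\obs{\tm} = T(!)(\sem{\tm})$ with $\sem{\tm} = \lub_n \sem{\tm}^n$, and since $T$ is \dcppo{}-ordered the map $T(!)$ is continuous, so $\obs{\tm} = \lub_n \obsn{\tm}{n}$. On the other side, $\mathcal{O}(\tm) = \{\obs{\monty} \mid {\vdash} \tm:\monty\}$ is directed (as observed in Section~\ref{sec:infinitary}), so $O(\tm) = \lub \mathcal{O}(\tm)$ is a well-defined element of the \dcppo{} $T(1)$.

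For the inequality $O(\tm) \cpoleq \obs{\tm}$: take any $\monty$ with ${\vdash}\tm:\monty$. By Soundness (Proposition~\ref{prop:soundnessI}) we get $\trelsem \tm:\monty$, so Lemma~\ref{l:semlrinf} gives $\obs{\sem{\tm}} \cpogeq \obsfun(\monty)$, i.e. $\obs{\tm} \cpogeq \obs{\monty}$. Hence $\obs{\tm}$ is an upper bound of $\mathcal{O}(\tm)$, and therefore $O(\tm) = \lub \mathcal{O}(\tm) \cpoleq \obs{\tm}$. For the reverse inequality $\obs{\tm} \cpoleq O(\tm)$: fix $k \geq 0$; by Approximate Completeness (Theorem~\ref{thm:completenessI}) there is $\pi_k \pof {\vdash}\tm:\monty_k$ with $\obs{\monty_k} = \obsn{\tm}{k}$, so $\obsn{\tm}{k} \in \mathcal{O}(\tm)$ and thus $\obsn{\tm}{k} \cpoleq O(\tm)$. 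Since this holds for every $k$ and $\obs{\tm} = \lub_k \obsn{\tm}{k}$, we conclude $\obs{\tm} \cpoleq O(\tm)$. Combining the two inequalities yields $O(\tm) = \obs{\tm}$.

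I do not expect a genuine obstacle here: the corollary is essentially the bookkeeping step that chains the two main results of the section, and no new ideas are required. The only mildly delicate points are the appeals to continuity of $T(!)$ and to directedness of $\mathcal{O}(\tm)$ (needed for the suprema to be meaningful and to commute as above), and the fact that Lemma~\ref{l:semlrinf} only yields the inequality $\obs{\sem{\tm}} \cpogeq \obs{\monty}$ rather than an equality — which is precisely the reason $O(\tm)$ must be taken as a supremum over all derivations instead of being realised by a single (necessarily finite) derivation.
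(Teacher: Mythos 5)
Your proof is correct and follows essentially the same route as the paper: soundness (via Lemma~\ref{l:semlrinf}) gives $\obs{\tm}\cpogeq\obs{\monty}$ for every type $\monty$ of $\tm$, hence $\obs{\tm}\cpogeq O(\tm)$, while approximate completeness realises each $\obsn{\tm}{k}$ as some $\obs{\monty_k}\in\mathcal{O}(\tm)$, giving the reverse inequality after taking suprema. You are in fact slightly more careful than the paper's own two-line argument, since you explicitly justify the appeals to continuity of $T(!)$ and directedness of $\mathcal{O}(\tm)$ that make both suprema well defined.
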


\begin{figure*}[t]
{\scriptsize\[
\begin{array}{c}
\Phi:\infer[\rabs]
    {\tjudg{}{\la\var\checkmark(\var\var)}{\int{\id}\to1{\cdot}(1,\emi)}}
    {\infer[\rop]{\tjudg{\var:\int{\id}}{\checkmark(\var\var)}{1{\cdot}(1,\emi)}}{\infer[\rapp]{\tjudg{\var:\int{\id}}{\var\var}{\unit(\emi)}}
    {\infer[\rax]{\tjudg{\var:\int{\id}}\var{\id}}{}
    &\infer[\runit]{\tjudg{\var:\int{\id}}\var{\unit(\emi)}}{\infer[\rint]{{\tjudg{\var:\int{\id}}\var\emi}}{}}}}}
	\\\hhline{-}\\[2mm]
\Psi:\infer[\rop]{\tjudg{}{\checkmark(\mathsf{II})\oplus \mathsf{I}}{\frac{1}{2}{\cdot}(1,\int\id),\frac{1}{2}{\cdot}(0,\int\id)}}{
  \infer[\rop]{\tjudg{}{\checkmark(\mathsf{II})}{1{\cdot}(1,\int\id)}}{\infer[\rapp]   
    {\tjudg{}{(\la\vartwo{\vartwo})(\la\varthree{\varthree})}{1{\cdot}(0,\int\id)}}
    { \infer[\rabs]
    {\tjudg{}{\la\vartwo{\vartwo}}{\int{\id}\to 1{\cdot}(0,\int\id)}}
    {\infer[\runit]{\tjudg{\vartwo:\int\id}{\vartwo}{1{\cdot}(0,\int\id)}}{\infer[\rint]
    {\tjudg{\vartwo: \int{\id}}{\vartwo}{\int{\id}}}
    {\infer[\rax]{\tjudg{\vartwo: \int{\id}}{\vartwo}{\id}}{}}}} & 
    \infer[\runit]{\tjudg{}{\la\varthree\varthree}{1{\cdot}(0,\int\id)}}{\infer[\rint]{\tjudg{}{\la\varthree{\varthree}}{\int{\id}}}
    {    \infer[\rabs]{\tjudg{}{\la\varthree{\varthree}}{\id}}{ \infer[\runit]{\tjudg{\varthree:\emi}\varthree{\unit(\emi)}}{  
    \infer[\rint]{\tjudg{\varthree:\emi}\varthree\emi}{}}}}
    }}} & \qquad \hspace{-55pt}\infer*{\tjudg{}{\mathsf{I}}{1{\cdot}(0,\int\id)}}{}}\\\hhline{-}\\[2mm]
    \infer[\rop]{\tjudg{}{\checkmark((\la\var\checkmark(\var\var))(\checkmark(\mathsf{II})\oplus \mathsf{I}))}{\frac{1}{2}{\cdot}(3,\emi),\frac{1}{2}{\cdot}(2,\emi)}}{\infer[\rapp]
  {\tjudg{}{(\la\var\checkmark(\var\var))(\checkmark(\mathsf{II})\oplus \mathsf{I})}{\frac{1}{2}{\cdot}(2,\emi),\frac{1}{2}{\cdot}(1,\emi)}}
  { \Phi \quad &\quad \Psi}}
\end{array}
\]}
    \vspace{-8pt}
    \caption{Type derivation for $\tjudg{}{\checkmark((\la\var\checkmark(\var\var))(\checkmark(\mathsf{II})\oplus 
    \mathsf{I}))}{\frac{1}{2}{\cdot}(3,\emi),\frac{1}{2}{\cdot}(2,\emi)}$. We set $\id\defeq\emi\to\unit(\emi)=\emi\to 1{\cdot}(0,\emi)$.}
    \vspace{-8pt}
    \label{fig:derivationfour}
\end{figure*}

\begin{example}
As a concluding example, in Fig.~\ref{fig:derivationfour} we show the type derivation for the term $\tjudg{}{\checkmark((\la\var\checkmark(\var\var))(\checkmark(\mathsf{II})\oplus \mathsf{I}))}{\frac{1}{2}{\cdot}(3,\emi),\frac{1}{2}{\cdot}(2,\emi)}$. Please notice that this example is built on the composition of two different monads: cost and multidistribution. This way, we show how we are able to handle computational effects in a modular way. It is easy to verify that the same would have been possible, \eg, for cost and multipowerset. $\obs{\frac{1}{2}{\cdot}(3,\emi),\frac{1}{2}{\cdot}(2,\emi)}=\frac{1}{2}{\cdot}3,\frac{1}{2}{\cdot}2$, which is a \emph{distribution} on \emph{costs}. Taking its expected value, one can indeed obtain the average cost of the computation. One can build an example that actually uses infinitely many types (and derivations) on the same line of the one presented in~\cite{DBLP:journals/pacmpl/LagoFR21}.
\end{example}

\section{Conclusion}\label{sec:conclusion}
In this paper, we have proposed the first intersection type system able to characterize the effectful behavior for terms of the $\l$-calculus enriched with algebraic operations. In particular, we are able to do that parametrically with respect to the underlying monad. Moreover, having presented effects as algebraic theories, it is possible to compose effects relying both on the sum and tensor of algebraic theories. Since effectful behaviors are often observed at the limit, we had to deal with infinitary constructions. Technically speaking, relational reasoning was the main tool exploited to obtain our result in an abstract and modular way.
\subsubsection*{Perspectives.} This work opens several research directions:
\begin{list}{$\bullet$}{\leftmargin=1em \itemindent=0em}
    \item \emph{Quantitative Cost Analyses:} Idempotent intersection types are qualitative in nature because they are not able to track the use of resources, such as time or space, during the evaluation. Turning intersections (\ie sets) into multisets is enough to measure the precise cost of the evaluation of typed terms, while maintaining the correctness of the type system~\cite{deCarvalho18,DBLP:journals/jfp/AccattoliGK20,DBLP:journals/pacmpl/AccattoliLV22}. Extending this machinery to the effectful setting would be very interesting, although not trivial. While there are standard notions of monadic costs (\eg the \emph{average} cost in the probabilistic setting, or the \emph{maximum} cost in must nondeterminism), it is not clear how to devise the type system to capture them. In the probabilistic case, for example, some additional information had to be stored inside types to correctly compute the average number of steps~\cite{DBLP:journals/pacmpl/LagoFR21}. Very recently some investigations on the state and the exception monad (featuring also handling) have appeared~\cite{DBLP:conf/wollic/AlvesKR23,excTLLA}, but the design of the type systems seems ad-hoc and not easy to generalize.
    \item \emph{Higher-Order Model Checking:} Model checking of higher-order recursion schemes has been proven decidable by Ong in 2006~\cite{DBLP:conf/lics/Ong06}. Since then, several papers dissected the original result and gave other proof methods and model checking algorithms. Among them, Kobayashi and coauthors developed type-theoretic techniques based on intersection types~\cite{DBLP:conf/lics/KobayashiO09,DBLP:conf/popl/Kobayashi09}. While the literature contains results about model checking higher-order programs enriched with specific effects, such as probability~\cite{DBLP:journals/lmcs/KobayashiLG20} or nondeterminism~\cite{DBLP:conf/fossacs/Tsukada014}, no general method covering families of computational effects is known. Indeed, we would like to investigate if our type system could guide the synthesis of model checking algorithms in the style of~\cite{DBLP:conf/fossacs/Tsukada014}. Since the problem has been proved in general undecidable in the effectful setting, \eg in the case of the sub-distribution monad~\cite{DBLP:journals/lmcs/KobayashiLG20}, one would need of course to restrict the class of monads in order to recover decidability.
    \item \emph{Adding Coinduction:} Our type system is not able to deal with coinductive properties, such as productivity, or with coinductive effects, like the output of streams. We would like to enhance the type system with coinductive types/rules in order to capture these kind of properties. Since the type system is somehow modelled on top of operational semantics, this would require to change it as well. We mention that very recently some works covering coinduction have appeared, but limited to the pure $\l$-calculus, and carried on in the non-idempotent setting~\cite{DBLP:conf/lics/Vial17,DBLP:conf/lics/Vial18}. 
\end{list}

\subsubsection*{Acknowledgments.} To Ugo Dal Lago, who set the basis for this work.

\bibliographystyle{splncs04}
\bibliography{main}

\appendix

\section{Omitted Proofs of Section~\ref{sec:mon-types}}

\subsubsection*{Proposition~\ref{prop:subject-expansion-lifting}.} Let $\monad$ be WC. Then:
    $f; \kleislirel{\relone} \subseteq \reltwo \implies 
    \kleisli{f}; \kleislirel{\relone} \subseteq \kleislirel{\reltwo}$. 
    \begin{proof}
\begin{align*}
     \kleisli{f}; \kleislirel{\relone} &= \monad f; \mu; \relator \relone; \mu 
     = \monad f; \relator\relator \relone; \mu;\mu
     = \monad f; \relator\relator \relone; \monad \mu;\mu
     = \relator (f; \relator \relone; \mu);\mu
     \\&=  \relator (f; \kleislirel{\relone});\mu
     \subseteq \relator\reltwo; \mu 
     = \kleislirel{\reltwo}.
 \end{align*}
\end{proof}

\subsubsection*{Proposition~\ref{prop:subject-expansion-eta}.}Let $(\monad, \unit, \mu)$ be weakly cartesian. For any monadic relation  
    $\relone \subseteq A \times \monad(B)$, we have 
    $\eta; \kleislirel{\relone} \subseteq \relone$. 
    In particular, $\eta; \kleislirel{\trel} \subseteq \trel$.
\begin{proof}
    $\eta; \kleislirel{\relone} = \eta; \relator\relone; \mu = \relone; \eta; \mu = \relone$.
\end{proof}

    \subsubsection*{Proposition~\ref{prop:subject-reduction-lifting}.}
    Let $\monad$ be WP. Then:
    $\relone^{\circ}; f \subseteq {\kleislirel{\relone}}^{\circ} \implies 
    {\kleislirel{\relone}}^{\circ}; \kleisli{f} \subseteq {\kleislirel{\relone}}^{\circ}$. 
\begin{proof}
    By duality, it is sufficient to prove 
    $f^\circ; \relone \subseteq \kleislirel{\relone} \implies 
     {f}^{\dagger\circ};\kleislirel{\relone} \subseteq \kleislirel{\relone}$. 
     We have:
     \begin{align*}
          {f}^{\dagger\circ};\kleislirel{\relone} 
          &= (Tf; \mu)^\circ;\Phi R; \mu 
          \\
          &= \mu^\circ; \Phi f^\circ; \Phi R; \mu 
          \\
        &= \mu^\circ; \Phi(f^\circ; R); \mu 
        \\
        &\subseteq \mu^\circ; \Phi(\kleislirel{R}); \mu 
        \tag{by hypothesis $^\circ; \relone \subseteq \kleislirel{\relone}$}
        \\
         &= \mu^\circ; \Phi(\Phi R; \mu); \mu 
         \\
         &= \mu^\circ; \Phi\Phi R; T\mu; \mu
         \\
         &= \mu^\circ; \Phi\Phi R; \mu; \mu
         \tag{Definition of monad}
         \\
         &\subseteq \mu^\circ; \mu; \Phi R; \mu 
         \tag{Definition~\ref{def:relator}}
         \\
         &\subseteq \Phi R; \mu 
         \tag{$\mu$ is a function, hence $\mu^\circ;\mu \subseteq I$}
         \\
         &= \kleislirel{R}.
     \end{align*}
\end{proof}
\begin{lemma}[Weakening Is Admissible]\label{l:weak}
	If $\pi\pof\tjudg{\tye,\var:I}{\tm}{\gty}$, then there exists $\pi'$ such that $\pi'\pof\tjudg{\tye,\tyetwo,\var:I\cup J}{\tm}{\gty}$.
\end{lemma}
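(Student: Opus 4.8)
The plan is to prove the statement by a routine induction on the structure of the type derivation $\pi\pof\tjudg{\tye,\var:I}{\tm}{\gty}$. Before starting, I would fix the convention that the comma in a type environment denotes pointwise union of intersection types, so that moving from $\tye,\var:I$ to $\tye,\tyetwo,\var:I\cup J$ can only \emph{enlarge} the intersection attached to each variable. This is what makes weakening admissible at all: inspecting Fig.~\ref{fig:mon-system}, no typing rule has a side condition that is invalidated by assigning a larger intersection to a variable; the rules either do not constrain the environment at all, or, in the case of $\rax$, only require $\linty$ to \emph{belong to} the intersection of the relevant variable, a property preserved under enlargement.

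In the base case $\pi$ ends with $\rax$, so $\tm$ is a variable $\vartwo$, $\gty=\linty$, and $\linty\in(\tye,\var:I)(\vartwo)$; since $(\tye,\tyetwo,\var:I\cup J)(\vartwo)\supseteq(\tye,\var:I)(\vartwo)$, the same instance of $\rax$ applies over the enlarged environment. For the inductive cases $\rint$, $\runit$, $\rop$, $\rapp$, the premises share the environment of the conclusion, so I would apply the induction hypothesis to each subderivation with environment $\tye,\tyetwo,\var:I\cup J$ and reapply the same rule; the type computed in the conclusion (namely $\{\linty_i\}_{i\in F}$, $\unit(\intty)$, $\gefop{\monty_1,\ldots,\monty_n}$, or $\montytwo\bind(\{\intty_i\Rightarrow\monty_i\}_{1\leq i\leq n})$) and the side conditions of $\rapp$ and $\rax$ never mention the weakened part of the environment, so nothing changes. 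The case $\rabs$, with $\tm=\la{\vartwo}{\tmtwo}$ and premise $\tjudg{\tye,\var:I,\vartwo:\intty}{\tmtwo}{\monty}$, is the only one requiring a small precaution: using $\eqalpha$ I would first rename $\vartwo$ so that $\vartwo\notin\dom\tye\cup\dom\tyetwo\cup\{\var\}$, apply the induction hypothesis to obtain $\tjudg{\tye,\tyetwo,\var:I\cup J,\vartwo:\intty}{\tmtwo}{\monty}$, and close with $\rabs$.

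The main --- and essentially only --- obstacle is bookkeeping: making precise the meaning of $\tye,\tyetwo$ when $\dom\tye\cap\dom\tyetwo\neq\emptyset$ so that the $\rax$ side condition is manifestly preserved, and $\alpha$-renaming bound variables in the $\rabs$ case to keep the environments well-formed. No monadic or relational machinery is needed: this lemma is purely syntactic, and it is exactly the auxiliary result feeding into the substitution and anti-substitution lemmas used for subject reduction and expansion.
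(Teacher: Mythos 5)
Your proof is correct and follows exactly the paper's approach: the paper dispatches this lemma with a one-line "trivial by induction on the structure of $\pi$, since axioms admit weakening," and your write-up is simply that same induction spelled out, with the key observation (only $\rax$ inspects the environment, and membership in an intersection is preserved under enlargement) made explicit. The extra care about $\alpha$-renaming in the $\rabs$ case and about the meaning of $\tye,\tyetwo$ is sensible bookkeeping but does not change the argument.
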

\begin{proof}
	Trivial by induction on the structure of $\pi$, since axioms admit weakening. \qed
\end{proof}

\subsection{Subject Reduction}

We use in this appendix the notation $\metalambda\!I_i.M_i$ instead of the notation $\{I_i\Rightarrow\}_{1\leq i\leq n}$.

\begin{lemma}[Substitution]\label{lem:substitution}
	If $\tyd\pof\tjudg{\tye,\var:\intty}{\tm}{\gty}$ and $\tydtwo\pof\tjudg{\tye}{\val}{\intty}$, then there exists a derivation $\tydthree$ such that $\tydthree\pof\tjudg{\tye}{\tm\isub{\var}{\val}}{\gty}$.
\end{lemma}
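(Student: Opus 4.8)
The plan is to proceed by induction on the type derivation $\tyd$, with a case analysis on its last rule. Because the system is (almost) syntax-directed --- each term constructor being handled by a fixed rule, the only overlap being the two value rules $\rint$ and $\runit$ --- this is essentially an induction on $\tm$. The running observation in every case is that meta-substitution commutes with the term constructors, e.g. $(\la\vartwo\tmtwo)\isub\var\val = \la\vartwo{(\tmtwo\isub\var\val)}$ once $\vartwo$ is renamed apart from $\var$ and from the free variables of $\val$, and $(\valtwo\tmtwo)\isub\var\val = (\valtwo\isub\var\val)(\tmtwo\isub\var\val)$, while substitution touches neither the types nor the environment $\tye$. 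So for the structural rules I would simply apply the induction hypothesis to the premises and re-apply the same rule: in particular, for $\rapp$ the side condition $\supp{\montytwo}\subseteq\{\intty_1,\ldots,\intty_n\}$ and the conclusion type $\montytwo\bind(\{\intty_i\Rightarrow\monty_i\}_{1\leq i\leq n})$, and for $\rop$ the type $\gefop{\monty_1,\ldots,\monty_n}$, are all fixed by the types occurring in the premises and hence unaffected; and for $\rint$ and $\runit$ there is nothing more to do than pushing the induction hypothesis through each premise before re-applying the rule.

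The only genuinely non-structural case is the axiom $\rax$, where $\tm$ is a variable $\vartwo$. If $\vartwo\neq\var$ then $\vartwo\isub\var\val=\vartwo$ and the same axiom, now read in context $\tye$, still derives the judgment. If $\vartwo=\var$ then $\gty$ is a value type $\linty$ with $\linty\in\intty$, and $\var\isub\var\val=\val$, so I need a derivation of $\tjudg{\tye}{\val}{\linty}$ --- and this is exactly where $\tydtwo$ is used. Here I would invoke a small inversion observation: among the rules of Fig.~\ref{fig:mon-system}, the only one deriving a judgment with an intersection type (i.e.\ an element of $\IntTys$) is $\rint$; hence $\tydtwo$ must end with $\rint$ and therefore contains, for each member of $\intty$, a sub-derivation typing $\val$ with that member. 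Selecting the sub-derivation for $\linty$ gives the desired $\tydthree$. No induction hypothesis is needed, since $\tyd$ is here an axiom.

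The one bookkeeping subtlety I expect is the abstraction rule $\rabs$, where $\tm=\la\vartwo\tmtwo$ and the premise is $\tjudg{\tye,\var:\intty,\vartwo:\intty'}{\tmtwo}{\monty}$: to apply the induction hypothesis to this premise I need a typing of $\val$ in the extended context $\tye,\vartwo:\intty'$, which is obtained from $\tydtwo$ by the admissibility of weakening (Lemma~\ref{l:weak}). The induction hypothesis then yields $\tjudg{\tye,\vartwo:\intty'}{\tmtwo\isub\var\val}{\monty}$, and $\rabs$ concludes $\tjudg{\tye}{\la\vartwo{(\tmtwo\isub\var\val)}}{\arr{\intty'}{\monty}}$. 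All told I do not anticipate a real obstacle: the proof is a standard structural induction, and the only two points needing attention are the inversion on $\tydtwo$ in the variable case and the use of weakening under binders --- the monadic machinery ($\bind$ at the level of types, algebraic operations on types) plays no active role here, precisely because substitution leaves types alone.
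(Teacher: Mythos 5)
Your proposal is correct and follows essentially the same route as the paper: induction on the derivation $\tyd$, pushing the substitution through each rule and resolving the variable case via $\tydtwo$. If anything you are slightly more explicit than the paper in two spots --- the inversion of $\tydtwo$ through $\rint$ to extract the sub-derivation for $\linty\in\intty$ (the paper tersely says ``the derivation $\tydthree$ is $\tydtwo$ itself''), and the appeal to weakening (Lemma~\ref{l:weak}) to lift $\tydtwo$ into the extended context in the $\rabs$ case, which the paper uses implicitly.
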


\begin{proof}
	By induction on the structure of $\pi$.
	
	\begin{itemize}
		\item Case $\rax$. We distinguish two sub-cases: 
		\begin{enumerate}
			\item $\tm = \var$: then $\tm\isub{\var}{\val}= \var\isub{\var}{\val}=\val$ and $\gty = \intty$. The derivation $\tydthree$ is $\tydtwo$ itself.
			
			\item  $ \tm = \vartwo \neq \var $ then $\tm\isub{\var}{\val}= \vartwo\isub{\var}{\val}=\vartwo$, then $\pi''$ is $\pi$. 
		\end{enumerate}
		
		\item Case $ \rabs $. This means that $\tm=\la\vartwo\tmtwo$ and $\gty =\arr\inttytwo\montytwo$ , then $\tm\isub{\var}{\val}=\la\vartwo\tmtwo\isub{\var}{\val}$.
		The derivation $\pi$ has the following shape:
		
		\[
		\begin{array}{c@{\hspace{1cm}}c@{\hspace{1cm}}c}
			\vdots		\\
			\infer[\rabs]{\tjudg{\tye,\var:\intty}{\la\vartwo\tmtwo}{\arr\inttytwo\montytwo}} 
			{\tjudg{\tye,\var:\intty,\vartwo:\inttytwo}{\tmtwo}{\montytwo}}
		\end{array}
		\]
		
		By induction hypothesis there exists a derivation $\pi'''$ such that $\pi'''\pof \tjudg{\tye,y:J}{\tmtwo\isub{\var}{\val}}{\montytwo} $. Applying the $\rabs$ rule one obtains $\pi''\pof {\tjudg{\tye}{\la\vartwo\tmtwo \isub{\var}{\val}}{\arr\inttytwo\montytwo}} $.

		\item Case $ \rapp $.  This means that $\tm = \valtwo\tmtwo$  and $\gty =\montytwo\bind{(\metalambda\!\intty_i.\monty_i)}$, $1\leq i\leq n$ and where $ \supp{\montytwo}\subseteq\{\intty_1,\ldots,\intty_n\} $, then $\tm\isub{\var}{\val}=\valtwo\isub{\var}{\val}\tmtwo\isub{\var}{\val}$. 
		The derivation $\tyd$ has the following shape:
		
		\[
		\begin{array}{c@{\hspace{1cm}}c@{\hspace{1cm}}c}
			\vdots		\\		
			\infer[\rapp]{\tjudg{\tye,\var:\intty}{\valtwo\tmtwo}{\montytwo\bind{(\metalambda\!\intty_i.\monty_i)}}}
			{ \left[ \tjudg{\tye,\var:\intty}{\valtwo}{\arr{\intty_i}{\monty_i}}\right]_{1\leq i\leq n}
				&  \tjudg{\tye,\var:\intty}{\tmtwo}{\montytwo} & \supp{\montytwo}\subseteq\{\intty_1,\ldots,\intty_n\}}
		\end{array}
		\]
		By induction hypothesis there exist $n$ derivations $\tyd_i$ such that, for each $i$, $\tyd_i\pof \tjudg{\tye}{\valtwo\isub{\var}{\val}}{\arr{\intty_i}{\monty_i}} $, and a derivation $\bar{\tyd}\pof \tjudg{\tye}{\tmtwo\isub{\var}{\val}}{\montytwo} $. By applying the $\rapp$ rule we conclude this case. 
		
		\item Case $ \rop $. This means that $\tm=\op{\tm_1\ldots\tm_n}$ and $\gty =\gefop{\monty_1,\ldots,\monty_n}$, then $\tm\isub{\var}{\val}=\op{\tm_1\isub{\var}{\val}\ldots\tm_n\isub{\var}{\val}}$.
		The derivation of $\tyd$ has the following shape
		
		\[
		\begin{array}{c@{\hspace{1cm}}c@{\hspace{1cm}}c}
			\vdots		\\			
			\infer[\rop]{\tjudg{\tye,\var:\intty}{\op{\tm_1,\ldots,\tm_n}}{\gefop{\monty_1,\ldots,\monty_n}}} 
			{\left[
				\tjudg{\tye,\var:\intty}{\tm_i}{\monty_i}\right]_{1\leq i\leq n}}
		\end{array}
		\]
		The thesis follows by i.h., since for each $1\leq i\leq n$ we have derivations $\pi_i\pof \tjudg{\tye}{\tm_i\isub{\var}{\val}}{\monty_i}$, which give a derivation $\pi''\pof \tjudg{\tye}{\op{\tm_1\isub{\var}{\val},\ldots,\tm_n\isub{\var}{\val}}}{\gefop{\monty_1,\ldots,\monty_n}}$.
		
		\item Case $ \runit $.
		Then $\tyd$ has the following shape:
		\[
		\begin{array}{c@{\hspace{1cm}}c@{\hspace{1cm}}c}
			\vdots		\\		
			\infer[\runit]{\tjudg{\tye,\var:\intty}{\valtwo}{\unit(\inttytwo)}}{\tjudg{\tye,\var:\intty}{\valtwo}{\inttytwo}}
		\end{array}
		\]
		with $\tm=\valtwo$ and $\gty = \unit(\inttytwo)$. The thesis follows by i.h. and a similar reasoning as the previous ones. 
		\item Case $ \rint $.
		Then $\tyd$ has the following shape:
		\[
		\begin{array}{c@{\hspace{1cm}}c@{\hspace{1cm}}c}
			\vdots		\\		
			\infer[\rint]{ \tjudg{\tye,\var:\intty}{\valtwo}{\int{\linty_i}_{i\in 
						F}} }{\left[ \tjudg{\tye,\var:\intty}{\valtwo}{\linty_i}\right]_{i\in F}}
		\end{array}
		\]
		with $\tm=\valtwo$ and $\gty = \int{\linty_i}_{i\in F}$. By i.h. there exist $|F|$ derivations $\tyd_i\pof \tjudg{\tye}{\valtwo\isub{\var}{\val}}{\linty_i} $ and the thesis follow by applying $\rint$ from these derivations.
	\end{itemize} \qed
\end{proof}

\subsubsection*{Notation.} For the sake of proofs readability, it is convenient to give an operational description of 
the relation $\kleisli{\vdash}$. First, we recall that any finitely supported monadic element $m \in T(X)$ can be 
represented via a generic effect, in the sense that there exists a generic effect $g$ and elements 
$x_1, \hdots, x_n\in X$ such that $m = g(\eta(x_1), \hdots, \eta(x_n))$ \cite{PlotkinP02}.
Accordingly, we can rephrase the definition of the Barr extension of a relation 
in terms of generic effects
and observe that 
$\kleisli{\vdash} e: g(M_1, \hdots, M_n)$ iff there exists terms $t_1, \hdots, t_n$ such that 
$e = g(\eta(t_1), \hdots, \eta(t_n))$ with 
$\vdash t_i: M_i$, for any $i$. Please notice that the equality $e = g(\eta(t_1), \hdots, \eta(t_n))$ 
is \emph{semantic} equality in $T(\mathbb{C})$, and not a syntactic characterization. 

Indexing the typing relation with typing environment, we come up with 
the following rule for the monadic typing relation $\kleisli{\vdash}$:
\[
\infer[\rextg]{\Gamma \kleisli{\vdash} g(\eta(t_1), \hdots, \eta(t_n)): g(M_1, \hdots, M_n)}
{\Gamma \vdash t_1: M_1 & \cdots & \Gamma \vdash t_n: M_n}
\]
Notice that looking at $\eta$ in terms of generic effects, we also obtain 
the following rule:
\[
\infer[\rextunit]{\Gamma \kleisli{\vdash} \eta(t): M}{\Gamma \vdash t: M}
\]



\subsubsection*{Theorem~\ref{prop:sub-red} (Subject Reduction).} Let $T$ be WP. Then:\begin{varenumerate}
	\item  Let $\tm$ be a closed $\lambda$-term. If $\tjudg{}{\tm}{\monty}$ and $\tm\,\tosem\,\etm$, then 
	$\ltjudg{}{\etm}{\monty}$.
	\item Let $e$ be a monadic closed $\lambda$-term. If $\kleislirel{\vdash} e: M$ 
	and $e \mathrel{\kleisli{\tosem}} e'$, then $\kleisli{\vdash} e' : M$.
\end{varenumerate}
\begin{proof}
	The proof of point 1 proceeds by induction on the structure of evaluation contexts $\evctx$.
	\begin{itemize}
		\item $\evctx=\ctxhole$. There are three sub-cases.
		\begin{itemize}
			\item $(\la\var\tmthree)\val\mapstob\unit(\tm\isub\var\val)$.
			By hypothesis we have:
			\[\infer{\tjudg{\tye}{(\la\var\tmthree)\val}{\montytwo\bind{(\metalambda\!\intty_i.\monty_i)}}}
			{ \left[ \tjudg{\tye}{\la\var\tmthree}{\arr{\intty_i}{\monty_i}}\right]_{1\leq i\leq n}
				&  \tjudg{\tye}{\val}{\montytwo} & \supp{\montytwo}\subseteq\{\intty_1,\ldots,\intty_n\}}\]
			We note that $\val$ can be typed with a monadic type just via the rule $\runit$, as $\montytwo=\unit(\intty)$ obtaining the derivation
			\[
			\infer{\tjudg{\tye}{\val}{\intty}}{\vdots}
			\] This way $n=1$ and $\montytwo\bind{(\metalambda\!\intty_i.\monty_i)}=\unit(\intty)\bind(\metalambda\!\intty.\monty)=\monty$. Then, the left premise can only be obtained through an application of the rule $\rabs$. Going backward we have the derivation: \[\tjudg{\tye,\var:\intty}{\tmthree}{\monty}\]
			By applying the substitution lemma, we obtain 
			\[\infer{\tjudg{\tye}{\tmthree\isub{\var}{\val}}{\monty}}{\vdots}\]
			Then one immediately obtains, via rule $\rextunit$:
			\[{\tye}\kleisli{\vdash}{\unit(\tmthree\isub{\var}{\val})}:{\monty}\]
			
			\item $\op{\tm_1,\ldots,\tm_n}\mapstoe\gefop{\unit(\tm_1),\ldots,\unit(\tm_n)}$. By hypothesis, we have:
			\[
			\infer{\tjudg{\tye}{\op{\tm_1,\ldots,\tm_n}}{\gefop{\monty_1,\ldots,\monty_n}}} 
			{\left[\tjudg{\tye}{\tm_i}{\monty_i}\right]_{1\leq i\leq n}}
			\]
			It is immediate to derive:
			\[\infer[\rextg]{{\tye}\kleisli{\vdash}{\gefop{\unit(\tm_1),\ldots,\unit(\tm_n)}}:{\gefop{\monty_1,\ldots,\monty_n}}} 
			{\left[\tjudg{\tye}{\tm_i}{\monty_i}\right]_{1\leq i\leq n}}\]
			\item $\val\mapstoeta\unit(\val)$.  We have
			\[
			\infer[\rextunit]{\ltjudg{}{\unit(\val)}{\monty}}{
				\infer{\tjudg{}{\val}{\monty}}{\text{hypothesis}}}
			\]
		\end{itemize}
		\item 
		$\evctx=\val\evctx'$. Thus we have $\tm=\val\evctx'\ctxholep{\tmthree} \to 
		\gef{\unit(\val\evctx'\ctxholep{\tmthree'_1})\ldots 
			\unit(\val\evctx'\ctxholep{\tmthree'_n})}=:\etm$, if 
		$\tmthree\to\gef{\unit(\tmthree_1') 
			\ldots\unit(\tmthree_n')}$. By \ih, 
		we have that if $\evctx'\ctxholep{\tmthree} \to 
		\gef{\unit(\evctx'\ctxholep{\tmthree'_1})\ldots 
			\unit(\evctx'\ctxholep{\tmthree'_n}})$ and 
		$\tjudg{}{\evctx'\ctxholep{\tmthree}}{\monty}$, then 
		$\ltjudg{}{\gef{\unit(\evctx'\ctxholep{\tmthree'_1})\ldots 
				\unit(\evctx'\ctxholep{\tmthree'_n})}}{\monty}$. By hypothesis we have
		\[
		\infer{\tjudg{}{\val\evctx'\ctxholep{\tmthree}}{\monty \bind{(\metalambda\!\intty_j.\monty_j)}}}
		{ \infer*{\left[\tjudg{}{\val}{\arr{\intty_j}{\monty_j}}\right]_{1\leq j\leq 
					m}}{} 
			& \infer*{\tjudg{}{\evctx'\ctxholep{\tmthree}} 
				{\monty}}{} & \supp{\monty}\subseteq\{\intty_1\ldots\intty_m\}}
		\]
		By \ih we have $\ltjudg{}{\gef{\unit(\evctx'\ctxholep{\tmthree'_1})\ldots 
				\unit(\evctx'\ctxholep{\tmthree'_n})}}{\monty}$. This means that $\ltjudg{}{\gef{\unit(\evctx'\ctxholep{\tmthree'_1})\ldots
				\unit(\evctx'\ctxholep{\tmthree'_n})}}{\gef{\montytwo_1\ldots\montytwo_n}}$ for some $\montytwo_1\ldots \montytwo_n$. Of course we have that $\supp{\montytwo_i}\subseteq\supp{\monty}\subseteq\{\intty_1\ldots\intty_m\}$. Moreover, we have:
		\[
		\infer{\ltjudg{}{\gef{\unit(\evctx'\ctxholep{\tmthree'_1})\ldots 
					(\evctx'\ctxholep{\tmthree'_n})}}{\gef{\montytwo_1\ldots\montytwo_n}}}{\left[ 
			\infer{\tjudg{}{\evctx'\ctxholep{\tmthree'_i}}{\montytwo_i}}{\vdots}\right]_{1\leq
				i\leq n}}
		\]
		Thus, we have
		\[
		\infer{\ltjudg{}{\gef{\unit(\val\evctx'\ctxholep{\tmthree'_1})\ldots 
					\unit(\val\evctx'\ctxholep{\tmthree'_n})}}{\gef{\montytwo_1 \bind{(\metalambda\!\intty_j.\monty_j)}
					\ldots \montytwo_n \bind{(\metalambda\!\intty_j.\monty_j)}}}}
		{\left[ 
			\infer{\tjudg{}{\val\evctx'\ctxholep{\tmthree'_i}}{\montytwo_i \bind{(\metalambda\!\intty_j.\monty_j)}}} 
			{\left[\tjudg{}{\val}{\arr{\intty_j}{\monty_j}}\right]_{1\leq j\leq m} &&  
				\tjudg{}{\evctx'\ctxholep{\tmthree'_i}}{\montytwo_i} && \supp{\montytwo_i}\subseteq\{\intty_1\ldots\intty_m\}}  \right]_{1\leq
				i \leq n} }
		\]
		Since $\gef{\montytwo_1 \bind{(\metalambda\!\intty_j.\monty_j)}
			\ldots \montytwo_n \bind{(\metalambda\!\intty_j.\monty_j)}} = \gef{\montytwo_1\ldots\montytwo_n} \bind {(\metalambda\!\intty_j.\monty_j)} = \monty \bind {(\metalambda\!\intty_j.\monty_j)}$, we conclude.
	\end{itemize}
Point 2 is then obtained by applying Proposition~\ref{prop:subject-reduction-lifting}. \qed
\end{proof}

\subsection{Logical Relations}

\subsubsection{Lemma~\ref{l:semlr}.}
$\trelsem t: M$ implies $\exists e\text{ such that }\sem t = e$ and $\obsfun(e) = \obsfun(M)$.
\begin{proof}
 If $\trelsem t: M$, then  $\widehat{\trelsem} \sem t : M$. We set $e\defeq\sem t$.
 We show that $\widehat{\trelsem} \sem t : M$ implies $\obs{e} = \obsfun(M)$, 
 i.e. $\widehat{\trelsem}; \obsfun \subseteq \obsfun; \idrel$. 
 This is actually an instance of a more general result, namely that 
 for any $\relone: A \to B$, we have 
 $\bextt\relone; T(!) \subseteq T(!); \idrel$. Indeed, 
 recalling that $\idrel_{TA} = \bextt{\idrel_A}$, we have:
 \begin{align*}
         \bextt\relone; T(!) \subseteq T(!); \idrel 
         &\iff 
         \bextt\relone; T(!) \subseteq T(!); \bextt\idrel
         \\
       &\impliedby 
         \relone; ! \subseteq\ !; \idrel
     \end{align*}
 and the latter obviously holds. \qed
\end{proof}

As usual, we extend $\trelsem$ to open terms thus:
$x_1: I_1, \hdots, x_n: I_n \trelsem t: G$ iff for all $v_1, \hdots, v_n$ 
such that $\trelsem v_i: I_i$, we have $\trelsem t\{x_1/v_1, \hdots, x_n/v_n\} : G$.
To improve readability, we sometimes denote substitutions as $\gamma, \gamma', \hdots$. 
Notice that if a substitution is \emph{closed}, meaning that in a substitution  
$\{x_1/v_1, \hdots, v_n/x_n\}$ all values $v_i$ are closed, then sequential and simultaneous substitution 
coincide. In particular, $t\{x_1/v_1, \hdots, x_n/v_n\}\{x_{n+1}/v_{n+1}\} =  t\{x_1/v_1, \hdots, x_n/v_n,x_{n+1}/v_{n+1}\}$. 
In the following, all substitutions considered are closed.

\begin{lemma}\label{l:saturation}
	If $\tm\mapsto\etm$, then $\trelsem \tm:\monty$ if and only if $\hat{\trelsem}\,\tm:\monty$.
\end{lemma}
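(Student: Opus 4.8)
The plan is to read the statement as a \emph{saturation} property of the logical relation: it is preserved, in both directions, by a single step of the operational semantics. Here "$\widehat{\trelsem}\,\tm:\monty$" is understood as $\etm\mathrel{\kleisli{\trelsem}}\monty$ (recall $\tm\mapsto\etm$), the lifting of the monadic relation $\trelsem_{\mathbb{M}}$ to the monadic reduct $\etm$; by the generic-effect presentation recalled above (rule $\rextg$) this unfolds to: $\etm=\gef{\eta(\tm_1),\ldots,\eta(\tm_n)}$ for some algebraic operation $g$ and closed computations $\tm_1,\ldots,\tm_n$, with $\trelsem_{\mathbb{M}}\,\tm_i:\monty_i$ for every $i$ and $\monty=\gef{\monty_1,\ldots,\monty_n}$. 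The whole argument is then a matter of unfolding the defining clause $\trelsem_{\mathbb{M}}\,\tm:\monty\iff\widehat{\trelsem_{\mathbb{I}}}\,\sem{\tm}:\monty$ (keeping in mind that $\sem{\cdot}$ is partial) and combining it with two auxiliary facts.

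First I would establish \emph{invariance of evaluation under one step}: if $\tm\mapsto\etm$ then $\sem{\tm}$ is defined iff $\kleisli{(\sem{\cdot})}(\etm)$ is, and in that case they coincide. This follows directly from the iterative definition $\tosem^{n+1}=\tosem;\kleisli{(\tosem^{n})}$ and $\sem{\tm}=\bigcup_n\tosem^{n}(\tm)$ restricted to value-supported reducts; writing $\etm$ in generic-effect form and using that $g$ is algebraic together with $\kleisli{(\sem{\cdot})}\circ\eta=\sem{\cdot}$ gives $\sem{\tm}=\gef{\sem{\tm_1},\ldots,\sem{\tm_n}}$ as a partial monadic value. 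Second I would invoke \emph{compatibility of the Barr extension with algebraic operations}: by the cited proposition on (lax) relational extensions, $\widehat{\trelsem_{\mathbb{I}}}$ is closed under every algebraic operation $g$. With these, the ($\Leftarrow$) direction is immediate: from $\trelsem_{\mathbb{M}}\,\tm_i:\monty_i$ we get $\sem{\tm_i}$ defined and $\sem{\tm_i}\mathrel{\widehat{\trelsem_{\mathbb{I}}}}\monty_i$, hence $\sem{\tm}=\gef{\sem{\tm_1},\ldots,\sem{\tm_n}}$ is defined and $\mathrel{\widehat{\trelsem_{\mathbb{I}}}}\gef{\monty_1,\ldots,\monty_n}=\monty$, i.e.\ $\trelsem_{\mathbb{M}}\,\tm:\monty$. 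For ($\Rightarrow$), from $\trelsem_{\mathbb{M}}\,\tm:\monty$ we have $\sem{\tm}$ defined with $\sem{\tm}\mathrel{\widehat{\trelsem_{\mathbb{I}}}}\monty$; taking the generic-effect form of $\etm$, definedness of $\sem{\tm}=\gef{\sem{\tm_1},\ldots,\sem{\tm_n}}$ forces each $\sem{\tm_i}$ to be defined, and one then decomposes the Barr witness for $\sem{\tm}\mathrel{\widehat{\trelsem_{\mathbb{I}}}}\monty$ along $g$ into $\monty=\gef{\monty_1,\ldots,\monty_n}$ with $\sem{\tm_i}\mathrel{\widehat{\trelsem_{\mathbb{I}}}}\monty_i$, whence $\trelsem_{\mathbb{M}}\,\tm_i:\monty_i$ and $\widehat{\trelsem}\,\etm:\monty$.

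The main obstacle is precisely this last decomposition in the ($\Rightarrow$) direction. Closure of $\widehat{\trelsem_{\mathbb{I}}}$ under $g$ — the direction used for ($\Leftarrow$) — needs only weak-pullback preservation, but \emph{splitting} a single Barr witness over $\gef{\sem{\tm_1},\ldots,\sem{\tm_n}}$ back into per-branch witnesses requires the relational-extension axiom relating the Barr extension to the monad multiplication, i.e.\ weak cartesianness; moreover, in the finitary setting one must invoke the restriction to \emph{non-erasing} operations, so that no branch $\tm_i$ is discarded by $g$ and hence definedness of $\sem{\tm}$ genuinely propagates to every $\sem{\tm_i}$ — without it a divergent erased argument would break the equivalence, exactly the phenomenon that forces the non-erasing restriction in subject expansion. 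A secondary point to handle carefully is keeping the partiality of $\sem{\cdot}$ synchronized on the two sides, which is what the invariance-under-one-step fact provides.
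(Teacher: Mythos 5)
Your proof is correct and follows essentially the same route as the paper, whose entire proof of this lemma reads ``Trivial by definition of $\trelsem$, since the function $\sem{-}$ is invariant by reduction/expansion'': your argument is precisely the detailed unfolding of that sentence (unfold $\trelsem_{\mathbb{M}}$ via $\sem{-}$, use one-step invariance of $\sem{-}$, and transfer Barr witnesses along the generic-effect presentation of the reduct), including the correct reading of the right-hand side as the Kleisli-lifted relation applied to the reduct $\etm$. You go beyond the paper in correctly flagging that the ($\Rightarrow$) direction genuinely relies on weak cartesianness and on the non-erasing restriction on operations (an erased divergent branch would falsify the equivalence) --- a dependency the paper's one-liner leaves implicit, and which is harmless there only because the soundness proof uses just the ($\Leftarrow$) direction and the non-erasing restriction is in force in the finitary setting.
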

\begin{proof}
	Trivial by definition of $\trelsem$, since the function $\sem -$ is invariant by reduction/expansion.
\end{proof}

\subsubsection*{Proposition~\ref{prop:soundness-lr} (Soundness).}
	${\trel} = {\trelsem}$.
\begin{proof}
	The inclusion ${\trelsem} \subseteq {\trel}$ holds by definition. We prove ${\trel} \subseteq {\trelsem}$ by induction on the definition of $\trel$. In particular, we prove the stronger statement: $\tjudg{\tye}{\tm}{\gty}$ implies $\tye\trelsem\tm:\gty$.
	\begin{itemize}
		\item Case $\rax$.
		\[
		\infer[\rax]{\tjudg{\tye,\var:\intty}\var\linty}{\linty\in\intty}
		\]
		Wlog we consider $\tye$ empty. We have to prove that $\trelsem v: I$ implies $\trelsem \var\isub\var\val : A$. By definition, $\trelsem v: I$ implies $\trelsem_\Tys\val:\linty$ for each $\linty\in\intty$.
		\item Case $\rabs$.
		\[\infer[\rabs]{\tjudg{\tye}{\la\var\tm}{\arr\intty\monty}} 
		{\tjudg{\tye,\var:\intty}{\tm}{\monty}}\]
		By \ih we have that
		$	\tye,\var:\intty\sjudg{\tm}{\monty}$. We have to prove that for each $\valtwo$ such that $\sjudg{\valtwo}{\intty}$, we have $\tye\sjudg{(\la\var\tm)\valtwo}{\monty}$. Since $(\la\var\tm)\valtwo\mapsto\unit(\tm\isub\var\valtwo)$, we conclude by applying the \ih, and going backward through Lemma~\ref{l:saturation}.
		
		\item Case $\rop$.
		\[
		\infer[\rop]{\tjudg{\tye}{\op{\tm_1,\ldots,\tm_n}}{\gefop{\monty_1,\ldots,\monty_n}}} 
		{\left[
			\tjudg{\tye}{\tm_i}{\monty_i}\right]_{1\leq i\leq n}}
		\]
		By \ih we have that $\tye\sjudg{\tm_i}{\monty_i}$ for each $i$. Then we have  $\tye\,\lsjudg{\gefop{\unit(\tm_1)\ldots,(\tm_n)}}{\gefop{\monty_1,\ldots,\monty_n}}$ by rule $\rextg$, which gives the thesis applying Lemma~\ref{l:saturation}.
		\item Case $\rapp$.
		\[\infer[\rapp]{\tjudg{\tye}{\val\tm}{\montytwo\bind{(\intty_i\mapsto\monty_i)}}}
		{ \left[ \tjudg{\tye}{\val}{\arr{\intty_i}{\monty_i}}\right]_{1\leq i\leq n}
			&  \tjudg{\tye}{\tm}{\montytwo} & \supp{\montytwo}\subseteq\{\intty_1,\ldots,\intty_n\}}\]
		By induction hypothesis, we have 
            $\Gamma \models I_i \to M_i$, for any $i \leq n$ and
            $\Gamma \models t: N$. Let $\gamma$ a substitution for $\Gamma$, so that we have 
            \begin{align}
                \forall i \leq n.\ \models v\gamma: I_i \to M_i 
                \label{soundness:aux-1}
                \tag{$\dag$}
            \end{align}
             and
             $\models t\gamma: N$, and thus 
             \begin{align}
                \mathrel{\hat{\models}} \sem{t}: N
                \label{soundness:aux-2}
                \tag{$\ddag$}
            \end{align}
            To prove the thesis, it is sufficient to show
            $\mathrel{\hat{\models}} \sem{v\gamma t\gamma}: N \bind f$, where 
            $f: \{I_1, \hdots, I_n\} \to T(\mathbb{I})$ maps 
            each $I_i$ to $M_i$. 
            Since $v\gamma$ is a closed value (this follows from $v$ being a value 
            and $\gamma$ substituting closed values to variables), we have 
            $v\gamma = \lambda x.s$, for some term $s$. Consequently, we have
            Since $\sem{v\gamma t\gamma} = \sem{t\gamma} \bind \sem{s[-/x]}$, so that 
            we can reduce the thesis to the proof obligation 
            $$\mathrel{\hat{\models}} \sem{t\gamma} \bind \sem{s[-/x]}: N \bind f.$$
            We are now in the position to use 
            Theorem~\ref{thm:barr} (actually the weaker Proposition~\ref{prop:right-barr-extension-is-a-lax-extension} suffices), this way 
            obtaining the proof obligations:
            \begin{align*}
                &\mathrel{\hat{\models}} \sem{t\gamma}: N 
                \\
                \forall w,I_j.\ \models w: I_j &\implies 
                \mathrel{\hat{\models}} \sem{s[w/x]}: f(I_j).
            \end{align*}
		The former is exactly \eqref{soundness:aux-1}, whereas for the latter we reason thus.
            Assume $\models w: I_j$. Then, by \eqref{soundness:aux-1}, we have 
            $\models v\gamma w: M_j$ (recall that $f(I_j) = M_j$), which gives 
            $\mathrel{\hat{\models}} \sem{v\gamma w}: M_j$. Now, since 
            $v\gamma = \lambda x.s$, we have 
            $\sem{v\gamma w} = \sem{(\lambda x.s)w} = \sem{s[w/x]}$. We are done.
		\item Case $\rint$.
		\[\infer[\rint]{ \tjudg{\tye}{\val}{\int{\linty_i}_{i\in 
					F}} }{\left[ \tjudg{\tye}{\val}{\linty_i}\right]_{i\in F}}\]
		By \ih we have that  $\forall i\in F,\ \tye\trelsem_{\mathbb{A}} v: A_i$. Then we conclude by definition.
		\item Case $\runit$.
		\[
		\infer[\runit]{\tjudg{\tye}{\val}{\unit(\intty)}}{\tjudg{\tye}{\val}{\intty}}
		\]
		By \ih we have that $\tye\trelsem_{\mathbb{I}} v: \intty$. We conclude, by applying the definition of $\trelsem$, since $\tye\,\lsjudg{\unit(\val)}{\unit(\intty)}$ by $\rextunit$.
	\end{itemize}
\end{proof}

\subsection{Subject Expansion}

\begin{lemma}[Anti-Substitution]\label{l:antisub}
	If $	\pi\pof\tjudg{\tye}{\tm\isub{\var}{\val}}{G}$, then there exists an intersection type $\intty$ and derivations $\pi'$ and $\pi''$ such that $\pi'\pof \tjudg{\tye}{\val}{\intty}$ and $\pi''\pof \tjudg{\tye,\var:\intty}{\tm}{G}$.
\end{lemma}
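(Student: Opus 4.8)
The plan is to prove the lemma by induction on the derivation $\pi \pof \tjudg{\tye}{\tm\isub{\var}{\val}}{\gty}$, casing on its last rule; it is the converse of the Substitution Lemma~\ref{lem:substitution} and is established dually. Since capture-avoiding substitution commutes with every term constructor, the last rule of $\pi$ constrains the shape of $\tm$ — the only slack being that $\tm$ may be the variable $\var$ itself, in which case $\tm\isub{\var}{\val}=\val$ has arbitrary shape and we argue as a base case. The guiding idea is the standard one for intersection types: the free occurrences of $\var$ in $\tm$ are replaced by $\val$, these copies of $\val$ are assigned in $\pi$ finitely many value types, we bundle them into the intersection $\intty$, type $\val$ with $\intty$ using rule $\rint$, and rebuild a derivation of $\tm$ by reinstating an axiom $\rax$ at each such leaf. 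In the composite cases this bundling happens by taking a union of the intersections returned by the i.h. on the immediate premises.

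Two routine auxiliary facts are used. First, admissibility of \emph{strengthening}: a variable not occurring free in a term may be deleted from the environment of any derivation; this is proved exactly like Lemma~\ref{l:weak}. Second, a \emph{merging} observation: for a value $\val$ and any finite family of intersections $(\intty_k)_k$, if $\tjudg{\tye}{\val}{\intty_k}$ for every $k$ then $\tjudg{\tye}{\val}{\bigcup_k \intty_k}$ — immediate by inverting each instance of $\rint$ and re-applying it to the union; in particular $\tjudg{\tye}{\val}{\emi}$ always holds, which handles the case $\tm=\vartwo\neq\var$ (take $\intty\defeq\emi$ and close with Lemma~\ref{l:weak}).

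For the base case $\tm=\var$ we split on $\gty$: if $\gty=\linty$ is a value type take $\intty\defeq\int{\linty}$; if $\gty$ is already an intersection take $\intty\defeq\gty$; if $\gty$ is monadic then, $\val$ being a value, $\pi$ must end in $\runit$, so $\gty=\unit(\intty')$ and we take $\intty\defeq\intty'$, reusing the premise of $\pi$ to type $\val$ with $\intty'$. In each sub-case $\tjudg{\tye,\var:\intty}{\var}{\gty}$ is built from $\rax$, possibly followed by $\rint$ and/or $\runit$. When $\tm=\la\vartwo\tmtwo$ the last rule of $\pi$ is $\rabs$, $\rint$, or $\runit$: in the latter two one strips the rule and applies the i.h. to the (strictly smaller) subderivations, then merges and weakens as above; in the $\rabs$ case one applies the i.h. to the subderivation typing the body $\tmtwo$ (with $\vartwo$ chosen fresh for $\val$), after which strengthening removes $\vartwo$ from the environment that types $\val$, and $\rabs$ reassembles the derivation of $\la\vartwo\tmtwo$. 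When $\tm=\valtwo\tmtwo$ (last rule $\rapp$) or $\tm=\op{\tm_1,\ldots,\tm_n}$ (last rule $\rop$) one applies the i.h. to every premise, obtaining intersections $\intty_i$ (plus one for the argument in the $\rapp$ case), sets $\intty$ to be their union, types $\val$ with $\intty$ by merging, weakens each rebuilt subderivation of $\tm$ to the environment $\tye,\var:\intty$ via Lemma~\ref{l:weak}, and closes with the same instance of $\rapp$ — keeping the side condition $\supp{\montytwo}\subseteq\{\intty_1,\ldots,\intty_n\}$ verbatim — or of $\rop$.

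The only genuine subtlety, and hence the point deserving care, is that, unlike in the pure CbV system of Section~\ref{sec:cbv-types}, a value may receive not only value types but also intersection and monadic types, through $\rint$ and $\runit$. Consequently the inversion/peeling steps in the $\tm=\var$ and $\tm=\la\vartwo\tmtwo$ cases must be carried out for all three shapes of $\gty$, and the merging observation must be stated for intersections rather than for single value types. Everything else is bookkeeping of typing environments, entirely discharged by the admissibility of weakening and strengthening.
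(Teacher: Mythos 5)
Your proposal is correct and follows essentially the same route as the paper's proof: induction on the derivation $\pi$, casing on the last rule and the shape of $\tm$, bundling the types of the substituted copies of $\val$ into an intersection via unions, and discharging environment bookkeeping with weakening. You are in fact slightly more explicit than the paper on two points it glosses over — the subcase where $\tm=\var$ but the last rule is not $\rax$ (so $\gty$ may be an intersection or monadic type), and the strengthening step needed in the $\rabs$ case to remove the bound variable $\vartwo$ from the environment typing $\val$.
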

\begin{proof}
	
	By induction on the structure of $\pi$.
%
\begin{itemize}
	\item Case $\rax$, $\tm=\var$. Then $\pi'=\pi$, and  $\pi''\pof\tjudg{\tye,\var:\intty}{\var}{\intty}$ by $\rax$ axiom.
	
	\item Case $\rax$, $\tm=\vartwo\neq\var$. We have that $\tm\isub{\var}{\val}=\vartwo\isub{\var}{\val}=\vartwo$, so that we obtain the thesis by taking $\pi''=\pi$, and $\pi'\pof\tjudg{\tye}{\val}{\emi}$.
	
	\item Case $\rabs$, $\tm=\la\vartwo\tmtwo$. then $\tm\isub{\var}{\val}=\la\vartwo\tmtwo\isub{\var}{\val}$; in particular, since free variables in $\val$ cannot be caught in $\tmtwo\isub{\var}{\val}$ by the binding $\lambda \vartwo$, we assume $\vartwo\not\in FV(\val)$.
	$\gty= \arr\intty\monty$. 
	So, from 
	
	\[
	\begin{array}{c@{\hspace{1cm}}c@{\hspace{1cm}}c}
		\vdots		\\
		\infer[\rabs]{\tjudg{\tye}{\la\vartwo\tm\isub{\var}{\val}}{\arr\intty\monty}} 
		{\tjudg{\tye,\vartwo:\intty}{\tm\isub{\var}{\val}}{\monty}}
	\end{array}
	\]
	by i.h. there exists $\inttytwo$ such that  $ \tjudg{\tye,\vartwo:\intty,\var:\inttytwo}{\tm}{\monty} $ and $ \pi'\pof\tjudg{\tye,\vartwo:\intty}{\val}{\inttytwo} $. From this we derive $\pi''$ as:
	
	\[
	\begin{array}{c@{\hspace{1cm}}c@{\hspace{1cm}}c}
		\vdots		\\
		\infer[\rabs]{\tjudg{\tye,\var:\inttytwo}{\la\vartwo\tm}{\arr\intty\monty}} 
		{\tjudg{\tye,\vartwo:\intty,\var:\inttytwo}{\tm}{\monty}}
	\end{array}
	\]
	
	\item Case $\rapp$, $\tm = \valtwo\tmtwo$. Then $\tm\isub{\var}{\val}=\valtwo\isub{\var}{\val}\tmtwo\isub{\var}{\val}$ and $\gty= \montytwo\bind{(\metalambda\!\intty_i.\monty_i)}$
	
	\[
	\begin{array}{c@{\hspace{1cm}}c@{\hspace{1cm}}c}
		{\infer[\rapp]{\tjudg{\tye}{\valtwo\tmtwo\isub{\var}{\val}}{\montytwo\bind{(\metalambda\!\intty_i\monty_i)}}}
			{ 
				\left[ \tjudg{\tye}{\valtwo\isub{\var}{\val}}{\int{\arr{\intty_i}{\monty_i}}}\right]_{1\leq i\leq n}
				&  \tjudg{\tye}{\tmtwo\isub{\var}{\val}}{\montytwo} & \supp{\montytwo}\subseteq\{\intty_1,\ldots,\intty_n\}}}
	\end{array}
	\]
	
	by induction hypothesis there exist $\inttytwo_i$ and $\bar{\inttytwo}$ such that for each $1\leq i\leq n$, there are $\tjudg{\tye,\var:\inttytwo_i}{\valtwo}{\arr{\intty_i}{\monty_i}}$ and $\tjudg{\tye}{\val}{\inttytwo_i}$, $\tjudg{\tye, x:\bar\inttytwo}{\tmtwo}{\montytwo}$ and  $\tjudg{\tye}{\val}{\bar{\inttytwo}}$. We set $\bar{\bar{\inttytwo}}\defeq \bigcup_i J_i \cup \bar{\inttytwo}$. Then, by Lemma~\ref{l:weak}, it is immediate to obtain $\pi'\pof\tjudg{\tye}{\val}{\bar{\bar{\inttytwo}}}$ and $\pi''\pof\tjudg{\tye,\var:\bar{\bar{\inttytwo}}}{\valtwo\tmtwo}{\montytwo\bind{(\metalambda\!\intty_i.\monty_i)}}$.
	
	\item Case $\rop$, $\tm = \op{\tm_1,\ldots,\tm_n}$. Then $\tm\isub{\var}{\val} = \op{\tm_1\isub{\var}{\val},\ldots,\tm_n\isub{\var}{\val}}$ and $\gty= \gefop{\monty_1,\ldots,\monty_n}$.
	
	\[
	\begin{array}{c@{\hspace{1cm}}c@{\hspace{1cm}}c}
		\infer[\rop]{\tjudg{\tye}{\op{\tm_1\isub{\var}{\val},\ldots,\tm_n\isub{\var}{\val}}}{\gefop{\monty_1,\ldots,\monty_n}}} 
		{\left[
			\tjudg{\tye}{\tm_i\isub{\var}{\val}}{\monty_i}\right]_{1\leq i\leq m} }
	\end{array}
	\]
	
	By induction hypothesis there exist $\inttytwo_i$ such that $\tjudg{\tye, \var: \inttytwo_i}{\tm_i}{\monty_i}$ and $ \tjudg{\tye}{\val}{\inttytwo_i}$, for each $i$. 
	We then easily obtain (thanks to Lemma~\ref{l:weak}) $\pi''\pof\tjudg{\tye, \var: \bar{\inttytwo}}{\op{\tm_1, \ldots, \tm_n}}{\gefop{\monty_1,\ldots,\monty_n}}$ and $\pi'\pof\tjudg{\tye}{\val}{\bar{\inttytwo}}$, where $\bar{\inttytwo}\defeq\bigcup_i\inttytwo_i$.
	\item The remaining cases for rules $\runit$ and $\rint$ are handled similarly.
\end{itemize}	\qed
\end{proof}

\subsubsection{Proposition~\ref{prop:subject-expansion}. (Subject Expansion)}
Let $T$ be WC. Then:\begin{enumerate}
    \item 	If $\tm \mathrel{\tosem} e$ and $\kleisli{\vdash} e : M$, 
        then $\vdash \tm: M$.
	\item If $e \mathrel{\kleisli{\tosem}} e'$ and $\kleisli{\vdash}e' : M$, 
	    then $\kleisli{\vdash} e : M$.
\end{enumerate}
\begin{proof}
	We prove the first point by induction on the structure of evaluation context $\evctx$. 
 \begin{itemize}
 \item $\evctx=\ctxhole$. There are three sub-cases.
 \begin{itemize}
     \item Case $(\la\var\tmthree)\val\mapstob\unit(\tm\isub\var\val)$. The proof is equivalent to the subject reduction one, read backwards, and using  Lemma~\ref{l:antisub} (anti substitution) instead of Lemma~\ref{lem:substitution} (substitution).
     \item Case $\op{\tm_1,\ldots,\tm_n}\mapstoe\gefop{\unit(\tm_1),\ldots,\unit(\tm_n)}$. The proof is equivalent to the subject reduction one, read backwards, and using the fact that operations do not cancel arguments.
     \item Case $\val\mapstoeta\unit(\val)$. By hypothesis we have:
     \[
    \infer{\ltjudg{}{\unit(\val)}{\monty}}{\vdots}
     \]
     Proposition~\ref{prop:subject-expansion-eta} guarantees that $M=\unit(I)$ for some $I$. From this fact, we obtain $\tjudg{}{\val}{\unit(I)}$.
 \end{itemize}
 \item 
	$\evctx=\val\evctx'$. Thus we have $\tm=\val\evctx'\ctxholep{\tmthree} \to 
	\gef{\unit(\val\evctx'\ctxholep{\tmthree'_1})\ldots 
		\unit(\val\evctx'\ctxholep{\tmthree'_n})}=\etm$, if 
	$\tmthree\to\gef{\unit(\tmthree_1') 
		\ldots\unit(\tmthree_n')}$. By \ih, 
	we have that if $\evctx'\ctxholep{\tmthree} \to 
	\gef{\unit(\evctx'\ctxholep{\tmthree'_1})\ldots 
		\unit(\evctx'\ctxholep{\tmthree'_n})}$ and 
	$\tjudg{}{\gef{\unit(\evctx'\ctxholep{\tmthree'_1})\ldots 
			\unit(\evctx'\ctxholep{\tmthree'_n})}}{\monty}$, then 
	$\tjudg{}{\evctx'\ctxholep{\tmthree}}{\monty}$. By hypothesis
	we have
	\[
	\infer{\ltjudg{}{\gef{\unit(\val\evctx'\ctxholep{\tmthree'_1})\ldots 
	\unit(\val\evctx'\ctxholep{\tmthree'_n})}}{\gef{\monty_1\ldots\monty_n}}}
	{\left[  
	\infer{\tjudg{}{\val\evctx'\ctxholep{\tmthree_i}}{\monty_i\defeq 
	\montytwo_i \bind {(\intty_{ij}\mapsto\monty_{ij})}}} 
	{\infer*{\left[\tjudg{}{\val}{\arr{\intty_{ij}}{\monty_{ij}}}\right]_{1\leq
	 j\leq m_i}}{} & \infer*{\tjudg{}{\evctx'\ctxholep{\tmthree_i}} 
			{\montytwo_i}}{} && \supp{\montytwo_i}\subseteq\{\intty_{i1}\ldots\intty_{im_i}\}}\right]_{1\leq i\leq n} }
	\]
%
	Thus we can type $\gef{\unit(\evctx'\ctxholep{\tmthree'_1})\ldots 
		\unit(\evctx'\ctxholep{\tmthree'_n})}$ as follows:
	\[
	\infer{\tjudg{}{\gef{\unit(\evctx'\ctxholep{\tmthree'_1})\ldots 
				\unit(\evctx'\ctxholep{\tmthree'_n})}}{\gef{\montytwo_1
				 \ldots \montytwo_n}}}{
			\left[ \tjudg{}{\evctx'\ctxholep{\tmthree_i}} 
			{\montytwo_i}\right]_{1\leq i \leq n} }
	\]
	
	By \ih, we have that 
	$\tjudg{}{\evctx'\ctxholep{\tmthree}}{\gef{\montytwo_1
			\ldots \montytwo_n}}$.

	Now, we are able to type $\val\evctx'\ctxholep{\tmthree}$.
	\[
	\infer{\tjudg{}{\val\evctx'\ctxholep{\tmthree}}{\gef{\montytwo_1
				\ldots \montytwo_n} \bind {(\intty_{ij}\mapsto\monty_{ij})}}}{
		\left[\tjudg{}{\val}{\arr{\intty_{ij}}{\monty_{ij}}}\right]_{1\leq i\leq 
		n,1\leq j\leq m_i} & 
		\tjudg{}{\evctx'\ctxholep{\tmthree}}{\gef{\montytwo_1
				\ldots \montytwo_n}} && \supp{\gef{\montytwo_1
				\ldots \montytwo_n}}\subseteq\{\intty_{11}.. 
			\intty_{1m_1}.. \intty_{n1}.. \intty_{nm_n}\}}
	\]
	Since $\monty_i\defeq 
	\montytwo_i \bind {(\intty_{ij}\mapsto\monty_{ij})}$, we have $\gef{\montytwo_1
		\ldots \montytwo_n} \bind {(\intty_{ij}\mapsto\monty_{ij})} = \gef{\montytwo_1 \bind {(\intty_{ij}\mapsto\monty_{ij})}
		\ldots \montytwo_n \bind {(\intty_{ij}\mapsto\monty_{ij})}}  = \gef{\monty_1
		\ldots \monty_n}$.\end{itemize}
  The second point is obtained directly applying Proposition~\ref{prop:subject-expansion-lifting}
\end{proof}

\subsubsection{Theorem~\ref{thm:completeness}. (Completeness)}
	If $\sem\tm=\etm$, then there exists a monadic type  $\monty$ such that $\tjudg{}{\tm}{\monty}$.
\begin{proof}
    We prove the following stronger statement: 
    if $\sem\etmtwo =\etm$ (meaning that 
    $e' {\kleislirel{\tosem}}^n\etm$ and $\supp{e} \subseteq \Vals$), 
    then there exists $\monty$ such that 
    $\kleislirel{\vdash} \etmtwo: \monty$. 
	We proceed by induction on $n$.
	\begin{itemize}
		\item 
		If $n=0$, since operations are non-erasing then $\etm=\etmtwo=\gef{\unit(\val_1),\ldots,\unit(\val_m)}$ for some $v_1,\ldots,v_m$. Then, it is immediate to type $\etm$:
		\[
            \dfrac{\left[
            \dfrac{
				{ \tjudg{}{\val_i}{\emi}}}{\tjudg{}{\val_i}{\unit(\emi)}}
            \right]_{1\leq i\leq m}}
            {\ltjudg{}{\gef{\unit(\val_1),\ldots,\unit(\val_m)}}{\gef{\unit(\emi),\ldots,\unit(\emi)}}}
		\]
		\item If $n>0$, then $\etmtwo\kleislirel{\tosem}\etmthree\tosem^{\dagger n-1}\etm$. By \ih there is a type $\monty$ such that $\tjudg{}{\etmthree}{\monty}$. Then by the subject expansion, we conclude that $\tjudg{}{\etmtwo}{\monty}$.
	\end{itemize}
\end{proof}
\section{Omitted Proofs of Section~\ref{sec:infchar}}
Proofs of Proposition~\ref{prop:subject-reduction-infinitary}, Proposition~\ref{prop:soundnessI}
, and Proposition~\ref{prop:subject-expansion-inf} are exactly as in the finitary case.

\subsubsection{Lemma~\ref{l:semlrinf}.}$\trelsem t: M$ implies $\obs{\sem{t}} \cpogeq \obsfun(M)$.
\begin{proof}
It is sufficient to prove the following more general result: 
for any $\relone: A \to B$, we have 
$\bexttinf{\relone}; T(!) \subseteq T(!); {\cpogeq}$. Indeed, 
recalling that $\idrel_{TA} = \bextt{\idrel_A}$, so that 
$\bexttinf{\idrel_{A}} = \bextt{\idrel_A}; {\cpogeq} = \idrel_{TA}; {\cpogeq} =  {\cpogeq}$,
we have:
\begin{align*}
     \bexttinf\relone; T(!) \subseteq T(!);  {\cpogeq}
     &\iff 
     \bexttinf\relone; T(!) \subseteq T(!); \bexttinf{\idrel}
     \\
   &\iff 
     \relone; ! \subseteq !; \idrel
 \end{align*}
and the latter obviously holds. \qed
\end{proof}

\subsubsection{Theorem~\ref{thm:completenessI}. (Approximate Completeness)}
Let $\tm$ be a closed $\l$-term. Then, for each $k\geq 0$, there exist a derivation $\pi_k\pof\tjudg{}\tm{\monty_k}$ such that $\obs{\monty_k}=\obsn{\tm}{k}$.
\begin{proof}
We prove the following stronger statement. Let $\etm$ be a monadic term. Then, for each $k\geq 0$, there exist a derivation $\pi_k\pof\ltjudg{}\etm{\monty_k}$ such that $\obs{\monty_k}=\obsn{\etm}{k}$.
    If $\etm$ is a monadic value $\gef{\unit(\val_1),\ldots,\unit(\val_m)}$, then the derivation below satisfies the claim for each $k$:
    \[
            \dfrac{\left[
            \dfrac{
				{ \tjudg{}{\val_i}{\emi}}}{\tjudg{}{\val_i}{\unit(\emi)}}
            \right]_{1\leq i\leq m}}
            {\ltjudg{}{\gef{\unit(\val_1),\ldots,\unit(\val_m)}}{\gef{\unit(\emi),\ldots,\unit(\emi)}}}
		\]
    If $\etm$ is a monadic term $e=g(\unit(t_1),\ldots,\unit(t_m))$, we proceed by induction on $k$. If $k=0$, then we can type $\etm$ using the $\rbot$ rule as follows:
    \[
            \dfrac{\left[
				\tjudg{}{\tm_i}{\bot}
            \right]_{1\leq i\leq m}}
            {\ltjudg{}{\gef{\unit(\val_1),\ldots,\unit(\val_m)}}{\gef{\bot,\ldots,\bot}}}
		\]
    Otherwise, if $k>0$ we observe that if $\etm\mapsto\etm'$, then $\obsn{\etm'}{k-1}=\obsn{\etm}{k}$. Then, we apply the induction hypothesis to $\etm$, and conclude by subject expansion.\qed
\end{proof}

\subsubsection{Corollary~\ref{c:char-inf}.} Let $\tm$ be a closed $\l$-term. Then $O(\tm)=\obs{\tm}$.
\begin{proof}
Let   $\pi\pof\tjudg{}{\tm}{\monty}$ be a type derivation. Then by Prop.~\ref{prop:soundnessI} we have $\obs{\sem\tm}\cpogeq \obs{M}$. Hence $\obs{\tm}\cpogeq O(\tm)$. By Theorem~\ref{thm:completenessI}, for each $k\geq 0$, there exists a derivation $\pi_k\pof\tjudg{}\tm{\monty_k}$ such that $\obs{\monty_k}\cpogeq\obsn{\tm}{k}$. Hence $O(\tm)\cpogeq \obs{\tm}$, and thus $O(\tm)= \obs{\tm}$.\qed
\end{proof}

\end{document}